\declaretheorem{theorem}
\newtheorem{lemma}[theorem]{Lemma}
\newtheorem{corollary}[theorem]{Corollary}
\newtheorem*{thm*}{Theorem}
\theoremstyle{definition}
\newtheorem{dfn}{Definition}
\newtheorem{rmk}{Remark}
\newcommand{\reals}{{\mathbb{R}}}
\newcommand{\eps}{{\varepsilon}}
\newcommand{\E}{\mathbb{E}}
\newcommand{\xvec}{\vec{X}}
\newcommand{\vc}[1]{\bm{#1}}
\newcommand{\trans}{{\mathsf{T}}}
\newcommand{\gpi}{G_{\bm{\pi}}}
\newcommand{\evirt}{E_{\text{virt}}}
\newcommand{\ephys}{E_{\text{phys}}}
\newcommand{\bigotilde}{\tilde{\mathcal{O}}}
\newcommand{\bigo}{\mathcal{O}}
\newcommand{\semih}{g} 
\newcommand{\chframe}{(C,\semih)\text{-constellation}} 
\newcommand{\bsq}[1][q]{{B_{\sigma}^{(#1)}}}
\newcommand{\evnt}{{\mathcal{E}}^q}
\newcommand{\tl}{{\mathrm{tail}}}
\newcommand{\hd}{{\mathrm{head}}}
\newcommand{\rplus}{[0,\infty)}
\newcommand{\pths}{\mathcal{P}}
\newcommand{\floor}[1]{\left\lfloor #1 \right\rfloor}
\newenvironment{lparray}%
{\begingroup  \begin{array}{l@{\hspace{8mm}}l@{\hspace{8mm}}l}}%
{\end{array} \endgroup}
\date{}
\begin{document}


\begingroup
\let\clearpage\relax
\title{Breaking the VLB Barrier for Oblivious Reconfigurable Networks}

 \author[1]{Tegan Wilson}
 \author[1]{Daniel Amir}
 \author[1]{Nitika Saran}
 \author[1]{Robert Kleinberg}
 \author[2]{Vishal Shrivastav}
 \author[1]{Hakim Weatherspoon}

 \affil[1]{Cornell University}
 \affil[2]{Purdue University}

\begin{titlingpage}
\maketitle

\begin{abstract}
In a landmark 1981 paper,
Valiant and Brebner gave birth to the study
of oblivious routing and, simultaneously,
introduced its most powerful and ubiquitous
method: {\em Valiant load balancing (VLB)}.
By routing messages through a randomly sampled
intermediate node, VLB lengthens routing paths
by a factor of two but gains the crucial
property of {\em obliviousness}: it balances
load in a completely decentralized manner,
with no global knowledge of the communication
pattern.
Forty years later, with datacenters handling
workloads whose communication pattern
varies too rapidly to allow centralized
coordination, oblivious routing is as
relevant as ever, and VLB continues to
take center stage as a widely used
--- and in some settings, provably
optimal --- way to balance load in the
network obliviously to the traffic
demands. However, the ability of the
network to rapidly reconfigure its
interconnection topology gives rise
to new possibilities.

In this work we
revisit the question of whether VLB
remains optimal in the novel setting
of reconfigurable networks.
Prior work showed that VLB achieves the
optimal tradeoff between latency and
{\em guaranteed} throughput. In this
work we show that a strictly superior
latency-throughput tradeoff is achievable
when the throughput bound is relaxed
to hold with high probability. The same
improved tradeoff is also achievable with
guaranteed throughput under
time-stationary demands, provided the
latency bound is relaxed to hold with
high probability and that the network is
allowed to be {\em semi-oblivious},
using an oblivious (randomized) connection
schedule but demand-aware routing.
We prove that the latter result is not
achievable by any fully-oblivious
reconfigurable network design,
marking a rare case in which semi-oblivious
routing has a provable asymptotic
advantage over oblivious routing. 
Our results are enabled by a novel
oblivious routing scheme that improves
VLB by stretching routing paths the
minimum possible amount --- an additive
stretch of 1 rather than a multiplicative
stretch of 2 --- yet still manages to
balance load with high probability
when either the traffic demand matrix or
the network's interconnection schedule
are shuffled by a uniformly random
permutation. To analyze our routing
scheme we prove an exponential tail
bound which may be of independent interest,
concerning the distribution of values
of a bilinear form on an orbit of a
permutation group action.

\end{abstract}
\end{titlingpage}

\section{Introduction} \label{sec:intro}

Reconfigurable networks use rapidly reconfiguring switches to create a dynamic time-varying topology, allowing for great flexibility in efficiently routing traffic.
This idea has gained prominence due to recent technologies such as optical circuit switching \cite{helios,c-through} and free-space optics \cite{ mirrormirror-2012,firefly-free-space-optics,ProjecToR-free-space-optics} 
that enable reconfigurations within microseconds \cite{microsecond-circuit-switching-dc,reactor-circuit-switching} or even nanoseconds \cite{optical-switching-nanosecond-zehnder-interferometer,nanosecond-hybrid-optical-switches}.
Datacenter network architectures that leverage this capability are now being actively explored, including with recent prototype systems \cite{rotornet,shoal,opera,sirius} and theoretical modeling and analysis \cite{stoc-paper,apocs-paper,mars}. 
The rate of change of datacenter network workloads (summarized by a time-varying traffic demand matrix) has already outpaced the reconfiguration speeds achievable using a central controller~\cite{rotornet}, driving researchers to focus on \textbf{oblivious reconfigurable networks (ORNs)}, which use a \textit{demand-oblivious} reconfiguration and routing mechanism that is fully decentralized.

An analogous set of questions came to the fore in an earlier
era of computing research, when the focus was on designing
communication schemes for parallel computers. The network
model at that time --- a fixed, bounded-degree topology ---
was very different, but the objective was the same: to
efficiently simulate arbitrary communication patterns among
a set of $N$ nodes without requiring any centralized control.
In a landmark 1981 paper, Valiant and Brebner articulated
the central problem in terms that still resonate with the
practice of modern datacenter networking.
\begin{quotation} \begin{em}
    The fundamental problem that arises in simulating
    on a realistic machine one step of an idealistic
    computation is that of simulating arbitrary connection
    patterns among the processors via a fixed sparse network\dots
    For routing the packets the strategy will have to be based
    on only a minute fraction of the total information
    necessary to specify the complete communication pattern.
\end{em} \end{quotation}
The solution proposed by Valiant and Brebner, which
henceforth came to be known as {\em Valiant load balancing}
or {\em VLB}, was beautifully simple:
to send data from source $s$ to destination $t$, sample
an intermediate node $u$ uniformly at random. Then form
a routing path from $s$ to $t$ by concatenating ``direct
paths'' from $s$ to $u$ and from $u$ to $t$. (The definition
of direct paths may depend on the network topology; often
shortest paths suffice.) This lengthens routing paths by
a factor of two and thus consumes twice as much bandwidth
as direct-path routing. However, crucially, it is
{\em oblivious}: the distribution over routing paths
from $s$ to $t$ depends only on the network topology,
not the communication pattern. Oblivious routing schemes
satisfy the desideratum of being
``based on only a minute fraction of the total information
necessary to specify the complete communication pattern''
in the strongest possible sense. 

The focus of oblivious routing research in the 1980's
was on network topologies designed to enable efficient
communication among a set of processors. These
topologies, such as hypercubes and shuffle exchange
networks, tended to be highly symmetric (often with
vertex- or edge-transitive automorphism groups) and
tended to have low diameter and no sparse cuts.
One could loosely refer to this class of networks
as {\em optimized topologies}. A second phase of
oblivious routing research, initiated by R\"{a}cke
in the early 2000's, designed oblivious routing
schemes for {\em general topologies}. Compared to
optimized topologies, the oblivious routing schemes
for general topologies require much greater
overprovisioning, inflating the capacity of
each edge by at least a logarithmic
factor compared to the capacity that
would be needed if routing  could be done
using an optimal (non-oblivious) multicommodity
flow. The construction of oblivious routing schemes
with polylogarithmic~\cite{bienk03,harrel03,raecke02}
and eventually logarithmic~\cite{raecke08}
overhead was a seminal discovery for theoretical
computer science, but did not improve over the
performance of VLB for optimized topologies.

Remarkably, more than 40 years after the
introduction of VLB, it remains the state of
the art for oblivious routing in optimized
topologies. In fact, existing results in the
literature show that the factor-of-two
overprovisioning associated with VLB is
optimal in at least two important contexts:
when building a network of fixed-capacity
links to permit any communication pattern
with bounded ingress and egress rates per
node~\cite{keslassy2005optimal,zhang2005designing,babaioff2007optimality},
and when designing an oblivious reconfigurable
network with bounded maximum latency, again to permit any communication pattern
with bounded ingress and egress rates per node~\cite{stoc-paper}.

Running the network is responsible for a significant fraction of the cost of modern datacenters.
The capital cost of the networking equipment alone accounts for around 15\% of the total cost to build and run a datacenter; this increases to over 30\% when including indirect costs such as power and cooling for network equipment \cite{greenberg2008cost,basu2020architecture}.
Overprovisioning the network increases these costs proportionally \cite{shoal},
which motivates investigating when it is possible to ``break the VLB barrier'' and reap the benefits of oblivious routing without paying the cost of provisioning twice as much capacity as needed for optimal demand-aware routing.

In this work we show that {\em the ability to
  randomize the network topology in reconfigurable networks
  indeed allows oblivious routing schemes that break the
  VLB barrier}. We present a novel oblivious routing scheme
for reconfigurable networks with a randomized
connection schedule. The routing paths used by
our scheme exceed the length of shortest
(latency-bounded) paths by the smallest
possible amount: {\em an additive
  stretch of 1 rather than a multiplicative
  stretch of 2}.
Building upon this new routing scheme, we
obtain reconfigurable network designs that
improve the throughput achievable within a
given latency bound by nearly a factor of two,
under two relaxations of obliviousness:
\begin{enumerate}[itemsep=2pt,partopsep=2pt,topsep=6pt]
\item when the network is allowed a small probability
  of violating the throughput guarantee; or
  \label{contrib:whp}
\item   \label{contrib:semi-obliv}
  when the throughput guarantee must
  hold with probability 1, but routing is
  only {\em semi-oblivious}.
\end{enumerate}
Semi-oblivious routing refers to routing
schemes in which the network designer must
pre-commit (in a demand-oblivious manner)
to a limited set of routing paths
between every source and destination, but
the decision of how to distribute flow over
those paths is made with awareness of the
requested communication pattern. In the
context of reconfigurable networks, this
means that the connection schedule is
oblivious but the routing scheme may be
demand-aware. In fact, the semi-oblivious
routing scheme that we refer to in
Result~\ref{contrib:semi-obliv} above
is demand-aware in only a very limited
sense: it uses the oblivious routing scheme
from Result~\ref{contrib:whp} with high
probability, but in the unlikely event
that this leads to congestion on one
or more edges, it reverts to using a
different oblivious routing scheme
that is guaranteed to avoid congestion 
at the cost of incurring higher latency.
Note that this semi-oblivious routing
scheme only requires network nodes to
share one bit of common knowledge about the
communication pattern (namely, whether or not
there exists a congested edge), hence it still
obeys Valiant and Brebner's desideratum
that routing decisions are based on only a
minute fraction of the total information
needed to specify the communication pattern.
In \Cref{sec:semi-obliv-design} we prove that
purely oblivious reconfigurable network
designs (even with a randomized connection
schedule) cannot achieve the same result
as our semi-oblivious design: if the
throughput guarantee must hold with
probability 1, then the average latency
must be strictly asymptotically greater
for oblivious reconfigurable networks
than for semi-oblivious ones. 

\subsection{Summary of results and techniques}
\label{sec:intro-results}

In our abstraction of a reconfigurable network,
a fixed set of $N$ nodes communicates over a
sequence of discrete time steps. In one time
step, each node is allowed to send data
to only one other node and to
receive data from only one\footnote{%
  More generally one could impose a degree
  constraint, $d$, on the number of nodes
  to/from which one node can send/receive
  data in a single time step. Networks with
  degree constraint $d$ can be simulated by
  networks with degree constraint 1, up to
  a slow-down by a factor of $d$~\cite{stoc-paper}.
  Hence, the assumption that $d=1$ is essentially
  without loss of generality, and we will continue
  adopting this assumption throughout the paper.
}
other node. This time-varying connectivity
pattern, called the {\em connection schedule},
may be randomized, but it must be predetermined
in a demand-oblivious manner. To route messages 
through the network, nodes may forward
data over links when they are available in the
connection schedule, and they may buffer messages
when the next link of the designated routing path
is not yet available. The choice of routing paths
is called the {\em routing scheme}.
We allow data to be
fractionally divided over routing paths
(modeling the operation of randomly sampling
one path per data packet) so the routing scheme
is represented by specifying a fractional flow
for each source-destination pair, at each
time step. In an {\em oblivious} reconfigurable
network this flow is predetermined, up to
scaling, in a demand-oblivious manner.
In a {\em semi-oblivious} reconfigurable
network only the connection schedule is
oblivious; the routing scheme may be
demand-aware. 
 
To place our results in context, it helps to
reason a bit about the fundamental limits of
communication in reconfigurable networks.
\begin{enumerate}
\item {\bf Throughput is bounded by
  the inverse of average hop-count.} 
  A network design is said to have throughput $r$
  if it is able to serve any
  communication pattern whose ingress
  and egress rates, at each node in each time step,
  are bounded by $r$ times the
  amount of data that may be transmitted on any link
  per time step. Adopting units in which link capacities
  equal 1, the total amount of demand originating in any
  time step is $r N$ and the total link capacity is $N$. 
  If the average routing path is composed
  of $\semih$ network hops, then the $r N$ units of demand
  originating in any time step will consume $\semih r N$
  units of capacity on average, hence $\semih r \leq 1$.
  Guaranteeing throughput $r$ therefore requires
  guaranteeing average hop-count at most $1/r$.
\item {\bf Hop-count $\semih$ requires latency $L=\Omega(\semih N^{1/\semih})$.}
  A routing path originating at a given node is uniquely
  determined by the set of time steps at which the
  path traverses network hops. (This is because the
  connection schedule specifies a {\em unique} node
  that is allowed to receive messages from any given
  node at any given time.) Hence, in order for any node
  to be able to reach any other node within $L$ time steps
  using a routing path of $\semih$ or fewer hops, it must be
  the case that $\sum_{i=0}^{\semih} \binom{L}{i} \geq N$.
  The solution to this inequality is $L = \Omega(\semih N^{1/\semih})$.
  A more complicated counting argument, which we omit,
  establishes the same lower bound on {\em average}
  latency, even if the bound of $\semih$ hops per path
  is relaxed to hold only on average.
\end{enumerate}
These considerations establish a sort of {\em speed-of-light
barrier} for reconfigurable networking. Even without
the constraint of obliviousness, delivering messages within
$\bigo (\semih N^{1/\semih})$ time steps on average requires $\semih$-hop paths, hence
limits throughput to $1/\semih$. Oblivious or semi-oblivious
network designs can thus be evaluated in relation to this
benchmark.
\begin{figure}[ht]
  \begin{tabular}{|c|c|c|c|}
    \hline
        {\bf Goal} & {\bf Average hop-count} &
        {\bf Throughput} & {\bf Reference} \\
        \hline
        Minimize network hops & $\semih$ & --- & na\"{i}ve counting \\
        \hline
        Uniform multicommodity flow & $\semih$ & $\frac{1}{\semih}$ &
        \cite{stoc-paper} \\
        \hline
        Oblivious routing (w.h.p.) &
        $\semih + 1$ &
        $\frac{1}{\semih+1} - \delta \; \forall \delta > 0$ &
        this work \\
        \hline
        Semi-oblivious routing (prob.\ 1) &
        $\semih + 1 - o(1)$ &
        $\frac{1}{\semih+1} - \delta \; \forall \delta > 0$ &
        this work \\
        \hline
        Oblivious routing (prob.\ 1) &
        $2\semih$ & $\frac{1}{2\semih}$ &
        \cite{stoc-paper} \\
        \hline
  \end{tabular}
  \caption{Bounds for reconfigurable networking with
    average latency constrained by $L = \bigotilde(\semih N^{1/\semih})$.}
  \label{fig:compare-bounds}
\end{figure}
\Cref{fig:compare-bounds} presents a comparison
of bounds for various reconfigurable networking
goals, standardizing on an average latency
constraint of $L = \bigotilde(\semih N^{1/\semih})$ where $\semih$
could be any positive integer (fixed, independent
of $N$). As noted above, even if we ignore
capacity constraints and connect all source-destination
pairs using the minimum number of network hops
subject to this latency constraint, average
path length $\semih$ is unavoidable. Optimal (demand-aware)
routing schemes for the uniform multicommodity
flow match this bound, whereas
optimal oblivious routing schemes require
average path length $2\semih$~\cite{stoc-paper}.
The routing schemes presented in this paper
have average path length $\semih+1$ (minus a $o(1)$ in the case of
semi-oblivious routing), matching the
``speed-of-light barrier'' to within
an additive 1. We also present lower
bounds establishing that this result
is the best possible.

The formal statement 
of our main results generalizes the
foregoing discussion by allowing the
target throughput rate to be any
number (fixed, independent of $N$)
in the interval $(0,\frac12]$. 
\begin{restatable}{theorem}{thmtradeoff}\label{thm:tradeoff}
Given any fixed throughput value $r\in(0,\frac{1}{2}]$, let $\semih = \semih(r) = \lfloor\frac{1}{r}-1\rfloor$ and $\eps = \eps(r) = \semih + 1 - (\frac{1}{r}-1)$, and let 
  \begin{align}
    \label{eq:lupp}
    L_{upp}(r,N) & = \semih N^{1/\semih}  \\
    \label{eq:llow}
	L_{low}(r,N) & = \semih \left((\eps N)^{1/\semih} + N^{1/(\semih+1)}\right) 
  \end{align}
  Assuming $\eps \neq 1$:
  \begin{enumerate} 
  \item \label{to.orn.whp} there exists a family of distributions over ORN designs for infinitely many network sizes $N$ which attains maximum latency $\bigotilde(L_{upp}(r,N))$, and achieves throughput $r$ with high probability;
  \item \label{to.orn.unif} for infinitely many network sizes, there exists a single, fixed ORN design that attains maximum latency $\bigotilde(L_{upp}(r,N))$, and achieves throughput $r$ with high probability over the uniform distribution on permutation  demands;
  \item \label{to.sorn.whp} there exists a family of distributions over semi-oblivious reconfigurable network designs for infinitely many network sizes $N$ which attains maximum latency $\bigotilde(L_{upp}(r,N))$ with high probability (and in expectation) over time-stationary demands, and achieves throughput $r$ with probability 1;
	\item \label{to.orn.whp.lb} furthermore, any fixed ORN design $\mathcal{R}$ of size $N$ which achieves throughput $r$ with high probability over time-stationary demands must suffer at least $\Omega(L_{low}(r,N))$ maximum latency. 
\end{enumerate}
\end{restatable}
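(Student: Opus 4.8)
The plan is to prove the bound by contradiction: fix an ORN design $\mathcal{R}$ of size $N$ with oblivious unit flows $\{f_{s,t}\}$ and maximum latency $L$, assume it achieves throughput $r$ for all but a negligible fraction of uniformly random time‑stationary permutation demands (scaled to rate $r$), and deduce $L=\Omega(L_{low}(r,N))$. One may assume $L$ is polynomially bounded in $N$ (otherwise there is nothing to prove), and since every node must reach every other within the latency window, $L\ge\log_2 N$. The key structural remark is that in the time‑stationary regime the load on the outgoing link of a node $v$ is the same at every step and equals $r\,Z_v$, where $Z_v=\sum_{s,t}A^{(v)}_{s,t}\,(P^{\bm{\pi}})_{s,t}$, $A^{(v)}_{s,t}\in[0,1]$ being the amount of $f_{s,t}$ passing through $v$ as a forwarding node and $P^{\bm\pi}$ the random permutation matrix. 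Thus $Z_v$ is a bilinear form evaluated on the orbit of a permutation, with mean $m_v:=\tfrac1N\sum_{s,t}A^{(v)}_{s,t}$, and feasibility of $\bm\pi$ requires $\max_v Z_v\le 1/r$. The high‑probability guarantee therefore forces two things: (i) $m_v<1/r$ for every $v$, and (ii) each $A^{(v)}$ with $m_v$ close to $1/r$ must be ``spread out'' — each flow $f_{s,t}$ split over enough distinct paths — so that the fluctuations of $Z_v$ are small enough to keep the tail probability below $N^{-\omega(1)}$; this is exactly where the exponential tail bound for bilinear forms on a permutation orbit (and its near‑converse) is used.

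From (i) I would extract a hop‑count bound. Summing, $\sum_v m_v=\tfrac1N\sum_{s,t}\big(\sum_v A^{(v)}_{s,t}\big)=\tfrac1N\sum_{s,t}h_{s,t}$, where $h_{s,t}\ge 1$ is the expected hop‑length of $f_{s,t}$; on the other hand, a careful version of (i) together with the rare‑failure bound gives $\sum_{s,t}h_{s,t}\le \tfrac1r N^2(1+o(1))=(\semih+2-\eps)N^2(1+o(1))$. A ball‑growth count — from a fixed source $s$ at most $B_m(L):=\sum_{i=0}^m\binom Li$ nodes lie within $m$ hops and $L$ time steps, since a path is determined by the $\le m$ time steps at which it moves — yields $h_{s,t}\ge d(s,t)$ (the hop‑distance) and hence $\sum_t h_{s,t}\ge(\semih+2)N-\semih B_\semih(L)-B_{\semih+1}(L)$. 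Combining gives $\semih B_\semih(L)+B_{\semih+1}(L)\ge\eps N(1-o(1))$, and with $B_m(L)\le(m+1)L^m/m!$ and $(m!)^{1/m}=\Theta(m)$ this already delivers the warm‑up bound $L=\Omega\!\big(\semih(\eps N)^{1/(\semih+1)}\big)$.

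To get the full bound one must use (ii). Because $\sum_v m_v=\tfrac1N\sum_{s,t}h_{s,t}=\Theta(\semih N)$, a constant fraction of nodes carry $m_v=\Omega(1)$, so the flows through them cannot have negligible mean; the tail estimate then forces each such flow $f_{s,t}$ to be split over many distinct short paths, which requires the number of $\le(\semih+1)$‑hop source–destination paths available within latency $L$ to be superlinear in $N$ — concretely $B_{\semih+1}(L)=\Omega(N)$ rather than merely $\Omega(\eps N)$ — and, re‑run at the level of $\semih$ hops, that $B_\semih(L)=\Omega(\eps N)$. Feeding $B_{\semih+1}(L)=\Omega(N)$ and $B_\semih(L)=\Omega(\eps N)$ back into the ball‑growth estimate gives respectively $L=\Omega(\semih N^{1/(\semih+1)})$ and $L=\Omega(\semih(\eps N)^{1/\semih})$, so $L=\Omega\!\big(\semih((\eps N)^{1/\semih}+N^{1/(\semih+1)})\big)=\Omega(L_{low}(r,N))$, as claimed.

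I expect part (ii) to be the main obstacle, and the technical heart of the argument: proving the near‑converse of the bilinear‑form tail bound — that an oblivious flow which is not spread over enough paths must make $\max_v Z_v>1/r$ with non‑negligible probability over a random permutation — and then converting ``not spread enough'' into a clean lower bound on $L$ with the right $\semih$‑dependence (this is where the ``distribution of a bilinear form on an orbit of a permutation group action'' advertised in the abstract is really needed, in its anti‑concentration direction). A secondary nuisance is part (i): since $m_v$ need not be uniformly bounded away from $1/r$, the bound on $\sum_{s,t}h_{s,t}$ has to be obtained edge‑by‑edge from the tail estimate, and the naive bounded‑difference concentration for $\sum_s h_{s,\bm\pi(s)}$ is too lossy when $\semih\in\{1,2\}$, so a variance‑based estimate is required there.
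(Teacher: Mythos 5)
Your proposal addresses only part~4 of the theorem; parts~1--3 (the constellation-based ORN construction, its analysis under uniform permutation demands and under a random relabeling, and the SORN design with the 2-hop failover scheme) are the bulk of the paper's proof and are not touched at all, so at best this is a partial attempt. Even restricted to the lower bound, the argument has a genuine gap exactly where you flag it: step~(ii). The claim that the high-probability guarantee forces each heavy flow to be ``spread over many paths,'' and that this in turn forces $B_{\semih+1}(L)=\Omega(N)$ and $B_{\semih}(L)=\Omega(\eps N)$, is not proven and is not a consequence of the tail bound advertised in the abstract. That tail bound (the paper's Theorem on bilinear forms over permutation orbits) is a concentration estimate used only in the \emph{upper-bound} analysis; no anti-concentration ``near-converse'' appears anywhere in the paper, and your sketch gives no route to one. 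The link between concentration of $Z_v$ and path-splitting is also not as direct as you suggest: the randomness is over $\sigma$, so the fluctuations of $Z_v$ are governed by the structure of the matrix $A^{(v)}$ (how its mass is distributed over source--destination pairs), not by how many paths each individual commodity uses, and converting ``insufficient spreading'' into a lower bound on reachable-set sizes is precisely the missing argument. Your first-moment step alone, as you note, only yields $L=\Omega(\semih(\eps N)^{1/(\semih+1)})$, which is strictly weaker than $L_{low}$. A secondary inaccuracy: with a periodic routing scheme and time-stationary demands the load on a node's outgoing physical edge is periodic, not constant across time steps, so the reduction to a single quantity $Z_v$ per node must be replaced by an edge-by-edge (or period-averaged) statement.

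The paper's own proof of part~4 avoids anti-concentration entirely. It writes one throughput-maximization LP per permutation $\sigma$ (with the latency cap $L$ built into the path set), passes to the dual, and exhibits an explicit dual solution: $\beta_{\sigma e}=\theta+1$ if $e$ lies on a path of at most $\theta$ physical hops between some $\sigma$-matched pair and $\beta_{\sigma e}=1$ otherwise, with $\alpha_{at\sigma}$ the induced shortest-path weight. The Counting Lemma (at most $2\binom{L}{h}$ nodes reachable in $\le L$ steps with $\le h$ hops) bounds $\Pr_\sigma[e$ lies on such a path$]$ by $\tfrac4N\binom{2L}{\theta-1}$, hence bounds $\E_\sigma[\sum_e\beta_{\sigma e}]$; the high-probability throughput assumption lower-bounds the expected primal value by $r(1-N^{-d})$, and comparing the two with $\theta=\lfloor 1/r\rfloor$ yields $L=\Omega(\semih(\eps N)^{1/\semih})$, with the $N^{1/(\semih+1)}$ term obtained by re-running the bound at a slightly smaller throughput. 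If you want to salvage your route, the honest statement is that you would need to prove the anti-concentration step from scratch, whereas the duality argument sidesteps it; I would recommend reworking part~4 along the LP-dual lines and then supplying proofs for parts~1--3, which require the constellation schedule, the pseudo-path routing distribution, and the bilinear-form tail bound in its (concentration) direction.
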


The upper and lower bounds on lines~\eqref{eq:lupp}-\eqref{eq:llow}
match to within a constant factor for most values of $r$: when
$ \frac{1}{r} \not\in \bigcup_{m = 2}^{\infty} \left( m - \frac{2}{2^m}, m \right] $
then $\eps \geq 2^{-g}$, so $L_{low} \geq \frac12 L_{upp}$.
The latency of our reconfigurable network designs is
$L_{upp} \cdot \bigotilde( \log N)$, 
hence the upper and lower
bounds in \Cref{thm:tradeoff} agree within a
$\bigotilde(\log N)$
factor for most values of $r$.
See \Cref{fig:semi-v-fully-obliv} for a visualization
of these bounds.
Additionally, like in \cite{apocs-paper} we condition against $\eps= 1$. 
This is due to requiring a strictly positive slack factor between the throughput $r$ and $\frac{1}{g+1}$.

\begin{figure}[h]
 \centering
 \includegraphics[width=0.7\columnwidth]{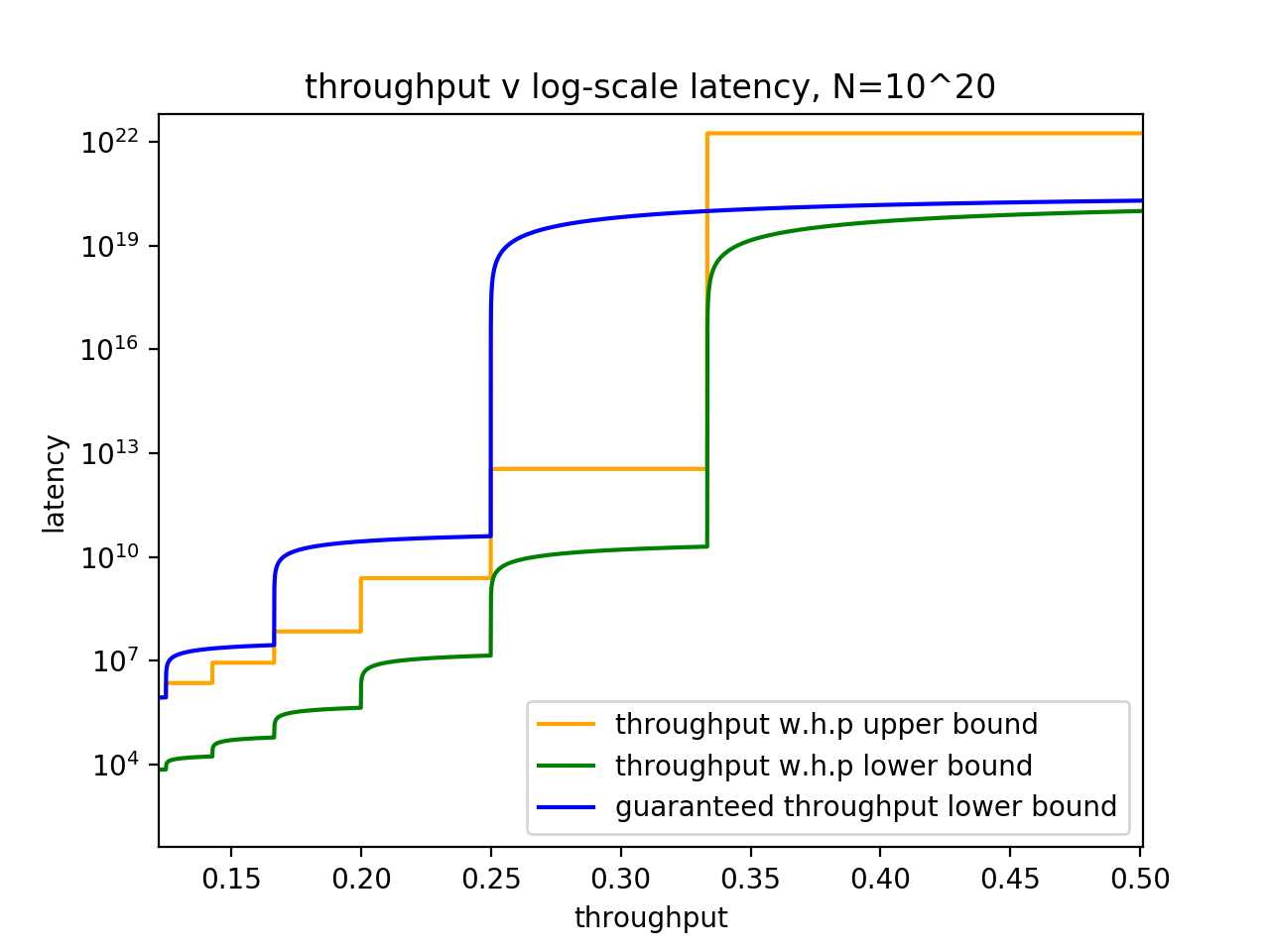}
 \caption{Throughput versus log-scale maximum latency tradeoff curves $\bigotilde(L_{upp})$ and $L_{low}$, when compared against the guaranteed throughput lower bound of \cite{stoc-paper}, on an ORN containing $10^{20}$ nodes.
 }
 \label{fig:semi-v-fully-obliv}
\end{figure}

We conclude this section by sketching how our
routing scheme differs from VLB, and how we
analyze it to obtain the bounds stated above.
Both schemes construct routing paths
composed of {\em spraying hops}, which
transport messages from the source to a random
intermediate node, and {\em direct hops}, which
deliver messages from the intermediate node
to the destination.

In both cases the analysis of the routing
scheme entails showing that the spraying
hops and the direct hops distribute load
evenly over the network links, whenever the
routing scheme is used to serve a {\em permutation
  demand}: a communication pattern where each
source node $s$ seeks to communicate at rate $r$
with a single destination $\sigma(s)$, and the
function $\sigma$ is a permutation of $[N]$. 
For VLB this is easy: intermediate nodes are
sampled uniformly at random, so the distribution
of (source, intermediate node) pairs and the
distribution of (intermediate node, destination)
pairs are both uniform over the set of all pairs of
nodes in the network; a symmetry argument then
suffices to conclude that both the spraying
hops and the direct hops distribute load evenly
over all links. 

In our routing scheme, routing paths consist of
just one spraying hop followed by a direct path
to the destination. Thus, the intermediate node
must be either the source itself, or
one of the nodes reachable by a direct
link from the source node during the first $L$
time steps after the message originates. For
$L < N-1$ it is impossible for the intermediate
node to be uniformly distributed, conditional
on the source. Consequently
(intermediate node, destination)
pairs in our routing scheme are also not uniformly
distributed. This non-uniform distribution
retains some dependence on the permutation
$\sigma$ that associates sources with destinations. 
Hence it is unclear how to guarantee that
for every permutation $\sigma$, 
flow traveling on direct hops will be uniformly
distributed over the edges of the network.

Our main innovation lies in the way we construct
a connection schedule and routing scheme to
ensure (approximately) uniform distribution of
load over edges. The use of a single
spraying hop inevitably reduces the amount
of randomness in the conditional distribution
of the intermediate node given the source,
and we must find a way to regain the lost
randomness without adding extra spraying hops.
To do so, we exploit a novel source of
randomness: we randomize the {\em timing} of
the direct hops.
Prior work~\cite{stoc-paper}
had used a connection schedule based on identifying
the node set $[N]$ with a vector space over a
finite field, and associating time steps with
scalar multiples of the elementary basis vectors.
To each pair of nodes one could then associate a
direct path corresponding to the (unique)
representation of the difference of the node
identifiers as a linear combination of elementary
basis vectors. Thus, the timing of direct hops
was uniquely determined, given the location
of the intermediate node. 

In our connection schedule we
again identify $[N]$ with a vector space over a
finite field. However, there are two key
differences.
First, in some of our designs, the identification
of $[N]$ with a finite vector space is done using
a uniformly random one-to-one correspondence.
This allows us to reduce the analysis of our
(randomized) connection schedule to average-case
analysis of a fixed connection schedule, when
the demand matrices are conjugated by
  a uniformly random permutation matrix.
Second, and more importantly, rather than defining
the connection schedule using a basis of this
vector space, we use an overcomplete system
of vectors which we call a {\em constellation}.
Constellations in a $g$-dimensional vector space
have the property that every $g$-element subset
forms a basis. (In other words, they represent
the uniform matroid of rank $g$.)
Our routing
scheme constructs direct paths between
two nodes by sampling a random $g$-element
subset of the constellation, representing
the difference between the nodes' identifiers
as a linear combination of those $g$ vectors,
and using the corresponding
$g$ time steps of the connection
schedule to form the direct path. 

To show that this method
distributes load approximately uniformly
over edges, we decompose the
load on any given edge as a
sum of $g+1$ random variables,
each of which can be interpreted as 
a bilinear form evaluated on a pair of vectors
representing the number of paths from
each source node to the tail of the
given edge, and from the head of the given
edge to each destination node. The pair
of vectors is sampled at random from
an orbit of the permutation group $S_N$,
which acts on pairs of vectors
either by permuting the coordinates of one
of them (in the case when we're analyzing
a uniformly random permutation demand) or by
permuting the coordinates of both simultaneously
(in the case when we're identifying the node set
with a vector space using a random bijection). 
In both cases, we prove an exponential
tail bound for the value of the bilinear form
on a vector pair randomly sampled from the
permutation-group orbit.
When the permutation acts on only one
element of the ordered pair, the relevant
exponential tail bound follows easily from
the Chernoff bound for negatively associated
random variables~\cite{dubhashi1996balls}.
When the permutation acts on both vectors
simultaneously, the negative association property
does not hold and we take a more indirect
approach, using a 3-coloring of the node
set $[N]$ to decompose the bilinear form
into three parts, each of which can be shown
to satisfy an exponential tail bound after a
suitable conditioning. We believe the resulting
exponential tail bound for bilinear forms may
be of independent interest. 

To improve the high-probability bound
on throughput to a bound that holds
with probability 1, we adopt a semi-oblivious
routing scheme that is a hybrid of a
{\em primary scheme} identical to the
oblivious scheme sketched above,
and a {\em failover scheme} which is
also oblivious, to be used 
in the (low-probability) case that the
primary scheme produces an infeasible
flow. The failover scheme has latency
$\bigotilde(N)$ and
resembles VLB, distributing flow
over two-hop paths from the
source to the destination by
routing through an intermediate
node sampled from a nearly-uniform
distribution.
The challenge is to
modify the connection schedule to
ensure that enough two-hop paths
exist between every source and
destination. We accomplish this
by using a time-varying constellation
in place of the fixed constellation
used by the routing scheme sketched
above. The time-varying sequence of
constellations that we construct forms
a sort of combinatorial design, covering
every vector with non-zero coordinates
an equal number of times. This equal-coverage
property is the key to proving the
that the failover routing scheme balances
load evenly.

\textbf{Our lower bound.} Our lower bound proof is heavily inspired by the lower bound proof of \cite{stoc-paper}.
We build a family of $N!$ linear programs, one for each permutation on the node set, that each maximize throughput subject to a maximum latency constraint $L$. We then take the dual, find a good dual solution, and analyze the objective value of each dual solution.
We then bound the expected objective value across the whole set, and use this to bound the achievable throughput with high probability.
Interestingly, this lower bound result also applies to the guaranteed throughput rate of semi-oblivious designs -- where the connection schedule must be pre-committed to, but the routing algorithm may be adaptive with respect to traffic.

\subsection{Related work} \label{sec:related-works}

The most important related works, \cite{stoc-paper, apocs-paper}, 
are summarized above in \Cref{sec:intro}.

\textbf{Oblivious routing in general networks.}
Extensive theoretical work in oblivious routing considers the competitive ratio in congestion achievable in general networks, when compared to an adaptive optimal routing.
\cite{raecke02} proved the existence of a poly$\log n$-competitive algorithm for this problem, the competitive ratio later improved upon by \cite{harrel03}.
\cite{bienk03,harrel03,azar03} then developed poly-time algorithms to achieve this result.
Later, these algorithms were implemented and tested in wide-area networks \cite{applegate04}.
\cite{raecke08} further improved to a $\log n$-competitive oblivious routing scheme, based on multiplicative weights and FRT's randomized approximation of general metric spaces by tree metrics \cite{frt04}.
This improved algorithm was again demonstrated in wide-area networks by \cite{smore18}.

Some works add additional constraints to this problem. For example, \cite{gupta06} found a poly$\log n$-competitive routing scheme oblivious to both traffic and the cost functions of edges, and
\cite{hop-constr-obliv} finds a poly$\log n$-competitive ratio when constraining the number of physical hops in paths that both the oblivious routing scheme, and the adaptive benchmark, can use. They also give an algorithm to achieve this.
These works 
assume a fixed graph topology, while our work aims to co-design a network topology and routing scheme.
They also examine congestion, a related but not analogous measure to our definition of throughput, make a guaranteed bound on that congestion instead of a probabilistic bound, and (with the exception of \cite{hop-constr-obliv}) make little attempt to bound latency.

\textbf{Randomized Oblivious Routing.}
There is also extensive work focused on oblivious routing with randomness. This problem is often focused on packet routing, and aims to obliviously choose a single path to route traffic on. It is well known that any such deterministic oblivious routing on a graph of degree $d$ suffers $\Omega(\sqrt{N}/d)$ congestion from an adversarial permutation demand. \cite{kaklam91,Borodin1985Routing}.
Valiant tackles this problem with Valiant Load Balancing, a randomized technique which gives a $\log n$-expected congestion bound on the $d$-dimensional hypercube, butterfly, and mesh networks \cite{vlb-valiant81,valiant82}. He later provided a lower bound in these contexts \cite{valiant83}.
A similar procedure is used in ROMM routing in the hypercube, which selects a larger number of intermediate nodes within the sub-cube containing both the source and destination, and trades off load balancing with latency \cite{romm-routing1, romm-routing2}.
These works differ from ours in that they aim to route discretized packets on paths, and look at the congestion that occurs from worst-case traffic.

\cite{bit-serial-routing91} showed that in bit-serial routing, any random oblivious algorithm on a poly$\log$ degree network requires $\bigo(\log^2 n/\log\log n)$ bit-steps with high probability for almost all permutation traffic, assuming $\log n$-bit messages, extending the Borodin-Hopcroft bound for deterministic algorithms. 
\cite{smore18} examines a partially adaptive (or, semi-oblivious) routing, in which the router precommits to a set of $\log N$ paths between each pair of vertices, and at runtime may only send flow on one of the precommitted paths. This approach was later shown to be poly$\log n$-competitive by \cite{sparse-semi-obliv}. 
Since oblivious routing under the same sparsity constraint cannot be poly$\log n$-competitive, this constitutes an asymptotic separation between the power of semi-oblivious and oblivious routing. To the best of our knowledge~\cite{sparse-semi-obliv} constitutes the first provable asymptotic separation between semi-oblivious and oblivious routing in the literature, and the separation that we prove in \Cref{sec:semi-obliv-provable-sep} is the second such result.

A work that closely models the problem we ask \cite{hajiaghayi-mohammad}, gives a $O(log^2 n)$-competitive algorithm with high probability over random demands in directed graphs, and showed that one cannot do better than $O(\log n/\log\log n)$-competitive with any constant probability. 
Like in non-randomized oblivious routing, they also assume a fixed graph topology, and do not attempt to bound latency.

\textbf{ORN Proposals.}
Although \cite{stoc-paper} is first to name the ORN paradigm, it was used earlier in proposed network architectures and designs.
Rotornet \cite{rotornet} and Sirius \cite{sirius} both use optical circuit switches to build a reconfigurable fabric, and Shoal \cite{shoal} uses electronic circuit switches. These works demonstrate different ways to implement ORNs using physical hardware, however they all use similar connection and routing schedules that maximize throughput, at the expense of latency.
Opera \cite{opera} combines the ORN paradigm with lengthened time slots, high node degrees, and some adaptive routing. This allows a separation into two traffic classes, low-latency and throughput-sensitive. However the design makes significant assumptions about the traffic workload, limiting its flexibility.
Cerberus \cite{cerberus} uses a modification of Rotornet as one component of an optical datacenter network, along with demand-aware reconfiguration and static graphs.

\cite{mars} used the degree of the time-collapsed connection schedule, or \textit{emulated graph}, of an ORN design to bound its throughput, latency, and buffer requirement.
Using these results, the authors derived a formula for the ideal degree $d$ to use for the emulated graph in order to maximize throughput in a buffer-constrained network.
The authors proposed MARS, an ORN design that emulates a de Bruijn graph with this ideal degree to achieve near-optimal throughput under buffer constraints, and evaluated this design through simulation.

\section{Definitions} \label{sec:definitions}

\begin{dfn} \label{def:connection-schedule}
A {\em connection schedule} of $N$ nodes and period length $T$ is a sequence of permutations $\bm{\pi}=\pi_0,\pi_1,\ldots,\pi_{T-1}$, each mapping $[N]$ to $[N]$. $\pi_k(i) = j$ means that node $i$ is allowed to send one unit of flow to node $j$ during any timestep $t$ such that $t \equiv k \pmod{T}$.

 The {\em virtual topology} of the connection schedule $\bm{\pi}$ is a directed graph $\gpi$ with vertex set $[N] \times \mathbb{Z}$.
 The edge set of $\gpi$ is the union of two sets of edges, $\evirt$ and $\ephys$. 
 $\evirt$ is the set of {\em virtual edges}, which are of the form $(i,t)\to(i,t+1)$ and represent flow waiting at node $i$ during the timestep $t$. 
 $\ephys$ is the set of {\em physical edges}, which are of the form $(i,t)\to(\pi_t(i),t+1)$, and represent flow being transmitted from $i$ to $\pi_t(i)$ during timestep $t$.
\end{dfn}

We interpret a path in $\gpi$ from $(a,t)$ to $b$ as a potential way to transmit one unit of flow from node $a$ to node $b$, beginning at timestep $t$ and ending at some timestep $t^\prime>t$.
Let $\pths(a,b,t)$ denote the set of paths in $\gpi$ starting at the vertex $(a,t)$ and ending at some $(b,t^\prime)$ for any $t^\prime>t$, and let $\pths_L(a,b,t)$ be the set of such paths for which $t^\prime-t\leq L$. Finally, let $\pths = \bigcup_{a,b,t} \pths(a,b,t)$ denote the set of all paths in $\gpi$.

\begin{dfn} \label{def:flow}
A {\em flow} is a function $f : \pths \to \rplus$. For a given flow $f$, the amount of flow traversing an edge $e$ is defined as:
\[
  F(f,e) =
  \sum_{P \in \pths} f(P) \cdot \bm{1}_{e \in P}
\]
We say that $f$ is {\em feasible} if for every
 physical edge $e \in \ephys$, $F(f,e) \leq 1$. Note that in our definition of feasible, we allow virtual edges to have unlimited capacity.
\end{dfn}

\begin{dfn}\label{def:routing-scheme}
 An {\em oblivious routing scheme} $R$ is a set of functions $R(a,b,t):\pths\rightarrow[0,1]$, one for every tuple $(a,b,t) \in [N]\times[N]\times\mathbb{Z}$, such that:
 \begin{enumerate}
 \item For all $(a,b,t) \in [N]\times[N]\times\mathbb{Z}$,
   $R(a,b,t)$ is a probability distribution supported on
   $\pths(a,b,t)$. 
    \item $R$ has period $T$. In other words, $R(a,b,t)$ is equivalent to $R(a,b,t+T)$ (except with all paths transposed by $T$ timesteps).
  \end{enumerate}

\end{dfn}

\begin{dfn}
   An {\em Oblivious Reconfigurable Network (ORN) design} $\mathcal{R}$ consists of both a connection schedule $\pi_k$ and an oblivious routing scheme $R$.
\end{dfn}

\begin{dfn} \label{def:adapt-rout-scheme}
	A {\em demand-aware routing scheme} $\{S_\sigma : \sigma \mbox{ permut on } [N] \}$ is a set of functions $S_\sigma(a,t):\pths\rightarrow[0,1]$, one for every tuple $(a,t) \in [N]\times\mathbb{Z}$ and permutation $\sigma$ on $[N]$, such that:
  \begin{enumerate}
  \item for all $(a,t,\sigma) \in [N] \times \mathbb{Z} \times S_N$,
    $S_\sigma(a,t)$ is a probability distribution supported on $\pths(a,\sigma(a),t)$. 
    \item $S_\sigma$ has period $T$. In other words, $S_\sigma(a,t)$ is equivalent to $S_\sigma(a,t+T)$ (except with all paths transposed by $T$ timesteps).
  \end{enumerate}
\end{dfn}

\begin{dfn} \label{def:semi-obliv-design}
	A {\em Semi-Oblivious Reconfigurable Network (SORN) Design} $\mathcal{S}$ consists of a connection schedule $\pi_k$ and a demand-aware routing scheme $\{S_\sigma : \sigma \mbox{ permut on } [N] \}$.
\end{dfn}

\begin{dfn} \label{def:latencies}
	 The {\em latency} $L(P)$ of a path $P$ in $\gpi$ is equal to the number of edges it contains (both virtual and physical).
	Traversing any edge in the virtual topology (either virtual or physical) is equivalent to advancing in time by one timestep, so the number of edges in a path equals the elapsed time.
	For an ORN Design $\mathcal{R}$ or SORN design $\mathcal{S}$, the {\em maximum latency} is the maximum over all paths $P$ which may route flow. 
	\begin{align*}
		L_{max}(\mathcal{R}) & = \max_{P\in\pths} \{ L(P) : \exists a,b,t \mbox{ for which } R(a,b,t,P) > 0\} \\
		L_{max}(\mathcal{S}) & = \max_{P\in\pths} \{ L(P) : \exists a,t,\sigma \mbox{ for which } S_\sigma(a,t,P) > 0\}
	\end{align*}
	
	The {\em average (or normalized) latency} is the weighted average across all possible demand pairs and all paths $P$ which may route flow. 
	\begin{align*}
		L_{avg}(\mathcal{R}) & = \frac{1}{N^2T} \sum_{a,b,t}\sum_{P\in\pths(a,b,t)} R(a,b,t,P) L(P) \\
		L_{avg}(\mathcal{S}) & = \frac{1}{NTN!} \sum_{\sigma,a,t}\sum_{P\in\pths(a,\sigma(a),t)} S_\sigma(a,t,P) L(P)
	\end{align*}
\end{dfn}

\begin{dfn} \label{def:demand-functions}
   A {\em demand matrix} is an $N\times N$ matrix which associates to each ordered pair $(a,b)$ a rate of flow to be sent from $a$ to $b$. 
   A {\em demand function} $D$ is a function that associates to every $t \in \mathbb{Z}$ a demand matrix $D(t)$ representing the amount of flow $D(t,a,b)$ originating between each source-destination pair at timestep $t$.
    
   \tewedit{A {\em time-stationary demand} is a demand function in which every demand matrix $D(t)$ is the same.}
   A {\em permutation demand} $D_\sigma$ is a demand function in which every demand matrix is the permutation matrix defined by $\sigma:[N]\rightarrow[N]$. 
   \tewedit{Note that permutation demands are also time-stationary.}
\end{dfn}

\begin{dfn} \label{def:induced-flow}
  If $R$ is an oblivious routing scheme and $D$ is a demand function, the
  {\em induced flow} $f(R,D)$ is defined by:
  \[ f(R,D) = \sum_{(a,b,t) \in [N] \times [N] \times \mathbb{Z} } D(t,a,b) R(a,b,t). \]
  If $\{S_\sigma : \sigma \mbox{ permut on } [N] \}$ is a demand-aware routing scheme and $D_\sigma$ is a permutation demand function (possibly scaled by some constant), then the induced flow is defined by $f(S_\sigma, D_\sigma)$.
\end{dfn}

\begin{dfn} \label{def:guaranteed-thr}
	An ORN Design $\mathcal{R}$ {\em guarantees throughput $r$} if the induced flow $f(R, rD)$ is feasible whenever for all $t$, the row and column sums of $D(t)$ are bounded above by $1$. (Such matrices $D(t)$ are called \emph{doubly sub-stochastic}.)
	An ORN Design $\mathcal{R}$ guarantees throughput $r$ {\em with respect to time-stationary demands} if for every time-stationary demand function $D$ with row and column sums bounded by $1$, then the induced flow $f(R, rD)$ is feasible.
	An easy application of the Birkhoff-von Neumann Theorem establishes the following: in order for an ORN design to guarantee throughput $r$ with respect to time-stationary demands, it is necessary and sufficient that it guarantee throughput $r$ with respect to permutation demands.

	An SORN design $\mathcal{S}$ {\em guarantees throughput $r$} (with respect to permutation demands) if, for every permutation demand $D_\sigma$, the induced flow $f(S_\sigma, rD_\sigma)$ is feasible for all $t$. 
\end{dfn}

\begin{dfn} \label{def:high-prob-throughput}
  A distribution over ORN designs $\mathscr{R}$, is said to {\em achieve throughput $r$ with high probability} if, for any $d \geq 1$ and demand function $D$ such that $D(t)$ is doubly sub-stochastic for all $t$, routing $r D$ on a random $\mathcal{R} \sim \mathscr{R}$ induces a feasible flow with probability at least $1-C_d/N^d$, where $C_d$ is a constant that may depend on $d$.
    
  Similarly, $\mathscr{R}$ is said to {\em achieve throughput $r$ with high probability under the uniform distribution on permutation demands} if, for uniformly random permutations $\sigma$ and any $d\geq 1$, the induced flow $f(R,rD_\sigma)$ is feasible with probability at least $1-C_d/N^d$, where $C_d$ is a constant that may depend on $d$, and the randomness is over both the draw of $\mathcal{R}$ from $\mathscr{R}$ and the draw of $\sigma$ from the uniform distribution over permutations. In the special case when $\mathscr{R}$ is a point-mass distribution on a singleton set $\{\mathcal{R}\}$, we say that the fixed design $\mathcal{R}$ achieves throughput $r$ with high probability under the uniform distribution over permutation demands. 
\end{dfn}

\begin{dfn} \label{def:high-prob-latency}
	 \tewedit{A distribution over SORN designs $\mathscr{S}$, is said to {\em achieve maximum latency $L$ with high probability under the uniform permutation distribution} if, over uniformly random permutation $\sigma$ and for any $d \geq 1$, routing $r D_\sigma$ on a random $\mathcal{S} \sim \mathscr{S}$ uses paths of maximum latency $L$ with probability at least $1-C_d/N^d$, where $C_d$ is a constant that may depend on $d$. In the special case when $\mathscr{S}$ is a point-mass distribution on a singleton set $\{\mathcal{S}\}$, we say that the fixed design $\mathcal{S}$ achieves maximum latency $L$ with high probability under the uniform distribution over permutation demands.}
\end{dfn}

\begin{dfn} \label{def:round-robin}
	A {\em round robin} for a group of nodes $S$ of size $k$, $\{s_0,\hdots,s_{k-1}\}$ is a schedule of $k-1$ timesteps in which each element of $S$ has a chance to send directly to each other element exactly once; during timestep $t \in [k-1]$ node $s_i$ may send to $s_{i+t\mod{k}}$.
\end{dfn}

\section{Upper Bound: Oblivious Design} \label{sec:upper-bound}

In this section we prove \Cref{thm:tradeoff}, parts~\ref{to.orn.whp} and~\ref{to.orn.unif},
restated below.\\

\noindent \textbf{\Cref{thm:tradeoff}.\ref{to.orn.whp}-\ref{thm:tradeoff}.\ref{to.orn.unif}.} \textit{Given any fixed throughput value $r\in(0,\frac{1}{2}]$, let $\semih = \semih(r) =\lfloor\frac{1}{r}-1\rfloor$,
and let $L_{upp}(r,N)$ be the function}
\begin{equation*}
	L_{upp}(r,N) = \semih N^{1/\semih} 
\end{equation*}
\textit{Then assuming $\frac1r \not\in \mathbb{Z}$, 
  there exists a family of distributions over ORN designs for infinitely many network sizes $N$ which attains maximum latency $\bigotilde(L_{upp}(r,N))$, and achieves throughput $r$ with high probability.}
\textit{Furthermore, under the same assumption on $\eps$, for infinitely many network sizes there exists a fixed distribution over ORN designs which attains maximum latency $\bigotilde(L_{upp}(r,N))$, and achieves throughput $r$ with high probability under the uniform distribution.}
\vspace{3mm}

We will begin by constructing an ORN design $\mathcal{R}^0$ which is parameterized by $N$, $\semih$, and $C$, where $C$ is a parameter which we set during our analysis to a suitable function of $N$ and $r$ designed to achieve the appropriate tradeoffs between throughput and latency.
We will then analyze $\mathscr{R}_N(\semih,C)$, a distribution over all ORN designs $\mathcal{R}^\tau$ which are equivalent to $\mathcal{R}^0$ up to re-labeling of nodes, and show that it satisfies the conclusion of \Cref{thm:tradeoff}.\ref{to.orn.whp}. Furthermore we will show that the fixed design $\mathcal{R}^0$ itself satisfies the conclusion of \Cref{thm:tradeoff}.\ref{to.orn.unif}.

\subsection{Connection Schedule} \label{sec:orn-conn-sched}
The connection schedule of $\mathcal{R}^0$, like the Vandermonde Basis Scheme of \cite{stoc-paper}, is based on round-robin phases (cf.\ \Cref{def:round-robin}) defined by Vandermonde vectors.
We interpret the set of nodes as elements of the vector space $\mathbb{F}_p^\semih$ over the prime field $\mathbb{F}_p$, where $N=p^\semih$. 
Each node $a\in[N]$ can then be interpreted as a unique $\semih$-tuple $(a_1,a_2,\hdots,a_{\semih})\in\mathbb{F}_p^\semih$. 

During this connection schedule, each node will participate in a series of round robins, each defined by a single Vandermonde vector of the form $\bm{v}(x)=(1,x,x^2,\hdots,x^{\semih-1})$. 
The period length of the connection schedule is
  $T = C(g+1)(p-1)$, and one full period of the schedule consists
  of $C(g+1)$ consecutive round robins called \emph{Vandermonde
    phases} or simply \emph{phases},
  each of length $(p-1)$ timesteps.
  The $C(g+1)$ phases constituting one
  period of the schedule are defined by distinct Vandermonde
  vectors of the form $\bm{v}(x) = (1,x,\ldots,x^{\semih-1})$.
  No property of the Vandermonde vectors other than distinctness
  is required. Since Vandermonde vectors are parameterized by
  elements $x \in \mathbb{F}_p$, we require $p \geq C(g+1)$
  to ensure that sufficiently many distinct Vandermonde vectors
  exist. The set of Vandermonde phases in one period of the
  schedule will be grouped into $(\semih+1)$ non-overlapping {\em phase blocks}, each phase block consisting of $C$ phases.

  More formally, we identify each congruence class
    $k \pmod{T}$ with 
    a phase number $x$ and a scale factor $s$, $0\leq x <p$ and $1\leq s<p$, such that $k = (p-1) x + s - 1$. It is useful to think of timesteps as being indexed by ordered pairs $(x,s)$ rather than by the corresponding congruence class mod $T$, so we will sometimes abuse notation and refer to timestep $(x,s)$ in the sequel, when we mean $k = (p-1) x + s - 1$. The connection schedule of $\mathcal{R}^0$, during timesteps $t \equiv k \pmod{T}$, uses permutation $\pi^0_k(a) = a + s \bm{v}(x)$, where $x$ and $s$ are the phase number and scale associated to $k$. Thus, 
    each phase takes $(p-1)$ timesteps, and allows each node $a$ to connect with nodes $a'$ where the difference $a'-a$ belongs to the one-dimensional linear subspace generated by $\bm{v}(x)$.

As described above, $\mathscr{R}_N(\semih,C)$ is a distribution over all ORN designs $\mathcal{R}^\tau$ which are equivalent to $\mathcal{R}^0$ up to re-labeling.
When we sample a random design $\mathcal{R}^\tau$, we sample a uniformly random permutation of the node set $\tau:\mathbb{F}_p^h\rightarrow \mathbb{F}_p^\semih$, producing the schedule $\pi_{k}^\tau(a) = \tau^{-1}\Big(\pi_{k}^0\big(\tau(a) \big) \Big)$.
Note that, for every edge from node $a$ to node $\pi_t^{\tau}(a)$ in $\mathcal{R}^{\tau}$, there is a unique equivalent edge from $\tau(a)$ to $\tau(\pi_t^{\tau}(a))$ in $\mathcal{R}^0$.

\subsection{Routing Scheme} \label{sec:orn-rout-scheme}

Our routing scheme for $\mathcal{R}^0$
constructs routing paths composed
of at most one physical hop in each of $\semih+1$
consecutive phase blocks. Such a path can be identified
by the node and timestep at which it originates, the
phases in which it traverses a physical hop, and the
scale factors applied to the Vandermonde vectors
defining each of those phases. Our first definition
specifies a structure called a {\em pseudo-path}
that encodes all of this information.
\begin{dfn} \label{dfn:pseudo-path}
  A $k$-hop {\em pseudo-path} from $a$ to $b$ starting at time $t$
  is a sequence of ordered pairs
  $(x_1,\alpha_1), \ldots, (x_k,\alpha_k)$ such that:
  \begin{itemize}
  \item $x_1,\ldots,x_{k}$ are phases belonging to distinct, consecutive phase blocks beginning with the first complete phase block
    after time $t$;
  \item $\alpha_1,\ldots,\alpha_{k} \in \mathbb{F}_p$ are scalars;
  \item $b-a = \alpha_1 \bm{v}(x_1) + \alpha_2 \bm{v}(x_2) + \cdots +
  \alpha_{k} \bm{v}(x_{k})$.
  \end{itemize}
  A \emph{non-degenerate} pseudo-path is one satisfying
  $\alpha_1 \neq 0$ and $\alpha_k \neq 0$. 
  
  The path corresponding to a pseudo-path is the path in the
  virtual topology that starts at $a$,
  traverses physical edges in timesteps
  $k_i = (x_i,\alpha_i)$ for all $i$ such that $\alpha_i \neq 0$,
  and traverses virtual edges in all other timesteps.
\end{dfn}
Note that the path corresponding to a $k$-hop
  pseudo-path may contain fewer than $k$ physical hops.
  Two distinct pseudo-paths may correspond to
  the same path, if the only difference between the pseudo-paths
  lies in the timing of the phases with $\alpha_j=0$, i.e.~the
  phases in which no physical hop is taken.
  Distinguishing between pseudo-paths
  that correspond to the same path is unnecessary for the purpose 
  of describing the edge sets of routing paths, but it turns out
  to be 
  essential for the purpose of defining and analyzing
  the \emph{distribution} over
  routing paths employed by our routing schemes.

  Our oblivious routing scheme for $\mathcal{R}^0$
  divides flow among routing paths in proportion to
  a probability distribution over paths defined as follows. 
  To sample routing path from $a$ to $b$ starting at
  time $t$, we sample a uniformly random 
  non-degenerate $(\semih+1)$-hop pseudo-path
  from $a$ to $b$ that starts
  at time $t$.
  We then translate this pseudo-path into the corresponding
  path, and use that as a routing path from $a$ to $b$.
  In other words, our oblivious routing scheme divides
  flow among paths in proportion to the number
  of corresponding non-degenerate $(\semih+1)$-hop pseudo-paths.

  To analyze the oblivious routing scheme, or even
  to confirm that it is well-defined, it
  will help to prove a lower bound on the number
  of solutions to the equation
  \begin{equation} \label{eq:vdm}
  b-a = \alpha_1 \bm{v}(x_1)
  + \cdots + \alpha_{\semih+1} \bm{v}(x_{\semih+1})
  \end{equation}
  that satisfy $\alpha_1 \neq 0, \,
  \alpha_{\semih+1} \neq 0$. For any
  $i \in [g+1]$ and $\beta \in \mathbb{F}_p$,
  there is a unique solution to~\eqref{eq:vdm}
  with $\alpha_i=\beta$. This is because
  the equation
  \[
  b-a - \beta \bm{v}(x_i) =
  \sum_{j \neq i} \alpha_j \bm{v}(x_j)
  \]
  is a system of $g$ linear equations
  in $g$ unknowns, with an invertible
  coefficient matrix. (Here we have used
  the fact that the vectors $\bm{v}(x_j)$
  are distinct Vandermonde vectors, hence
  linearly independent.) Hence, the
  total number of solutions of~\eqref{eq:vdm}
  is $p$, and there is exactly one solution
  with $\alpha_1=0$ and exactly one solution
  with $\alpha_{\semih+1}=0$. The
  number of solutions with
  $\alpha_i \neq 0$ and $\alpha_{\semih+1} \neq 0$
  is therefore either $p-2$ or $p-1$.
  Since there are $C^{\semih+1}$ ways to choose
  the $\semih+1$ distinct phases $x_1,\ldots,x_{\semih+1}$,
  we conclude that the number of
  non-degenerate $(g+1)$-hop pseudo-paths
  from $a$ to $b$ starting at time $t$ is
  between $(p-2) C^{g+1}$ and $(p-1) C^{g+1}$.

The routing scheme of $\mathcal{R}^\tau$, for general $\tau$, is defined using the bijection between the edges of $\mathcal{R}^\tau$ and those of $\mathcal{R}^0$. 
For any path from node $a$ to node $b$ in $\mathcal{R}^{\tau}$ there is a unique equivalent path from $\tau(a)$ to $\tau(b)$ in $\mathcal{R}^0$.
To route from $a$ to $b$ in $\mathcal{R}^\tau$, simply apply the inverse of this bijection to the probability distribution over routing paths from $\tau(a)$ to $\tau(b)$ in $\mathcal{R}^0$.

\subsection{Latency-Throughput Tradeoff} \label{sec:ub-tradeoff}

It is clear that any design $\mathcal{R}^\tau \sim\mathscr{R}_N(\semih,C)$ will have maximum latency $C(\semih+2)(p-1) < C(\semih+2)N^{1/\semih}$. (The factor of $\semih+2$ reflects the fact that messages wait for the duration of at most one phase block, then use the following $\semih+1$ phase blocks to reach their destination.)
Thus, we focus on proving the achieved throughput rate with high probability in this section. Parts~\ref{to-sec3.orn.unif} and~\ref{to-sec3.orn.whp} of the following theorem correspond to parts~\ref{to.orn.unif} and~\ref{to.orn.whp} of Theorem~\ref{thm:tradeoff}, respectively.

\begin{restatable}{theorem}{thmfixeddesignthroughput}\label{thm:fixed-design-tradeoff} \label{thm:orn-tradeoff} 
  Given a fixed throughput value $r$, let $\semih = \semih(r) = \lfloor\frac{1}{r}-1\rfloor$ and $\eps = \eps(r) = \semih + 1 - (\frac{1}{r}-1)$, and assume $\eps\neq 1$.
  As $N$ ranges over the set of prime powers $p^{\semih}$
for primes $p$ exceeding $\max \left\{
C(\semih+1), 2 + \frac{2}{1-\eps} \right\}$,
let $\gamma=\ln \left( \frac{\semih - \eps - 2/(p-2)}{\semih-1} \right)$ and $C=\frac{\log\log N}{\gamma^2}\ln(N)$.
Then:
\begin{enumerate}
\item \label{to-sec3.orn.unif}
  the design $\mathcal{R}^0$ achieves throughput
  $r$ with high probability under the uniform distribution, 
\item \label{to-sec3.orn.whp}
  the family of distributions $\mathcal{R}_N(\semih,C)$
  achieves throughput $r$ with high probability.
\end{enumerate}
\end{restatable}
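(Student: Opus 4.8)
The plan is to prove the throughput claims one physical edge at a time: fixing a target $d\geq 1$, I will bound by $N^{-\omega(1)}$ the probability that the induced flow $f$ of $rD_\sigma$ puts more than one unit on a given physical edge $e$, and then union-bound over the $\mathrm{poly}(N)$ physical edges in one period. It suffices to treat permutation demands $D_\sigma$ (via Birkhoff--von Neumann; cf.\ the discussion after \Cref{def:guaranteed-thr}). For part~\ref{to-sec3.orn.unif} the design $\mathcal{R}^0$ is fixed and $\sigma$ is uniform; for part~\ref{to-sec3.orn.whp} the demand is worst-case, but the edge bijection between $\mathcal{R}^\tau$ and $\mathcal{R}^0$ turns ``route $D_\sigma$ on a random $\mathcal{R}^\tau$'' into ``route $D_{\tau\sigma\tau^{-1}}$ on $\mathcal{R}^0$'', so the randomness is carried by $\tau$, i.e.\ by a uniformly random element of the conjugacy class of $\sigma$.

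Next I would set up the edge-load decomposition. Since a routing path of $\mathcal{R}^0$ takes at most one physical hop in each of $\semih+1$ consecutive phase blocks, a given physical edge $e$ --- say in phase $x$ at scale $s$ within absolute phase block $B$ --- is crossed only as the hop in some position $i\in\{1,\dots,\semih+1\}$, by one of the $C(p-1)$ paths originating during the phase block preceding the $i$-th block. Using the count of non-degenerate $(\semih+1)$-hop pseudo-paths from Section~\ref{sec:orn-rout-scheme}, I would write $F(f,e)=\sum_{i=1}^{\semih+1}F_i(f,e)$ with
\begin{equation*}
 F_i(f,e)=\frac{rC(p-1)}{Z}\sum_{a\in[N]}u^{(i)}_a\,w^{(i)}_{\sigma(a)},\qquad Z=\Theta\!\left(pC^{\semih+1}\right),
\end{equation*}
where $u^{(i)}_a$ counts the phase/scalar choices routing $a$ to $\tl(e)$ non-degenerately over the first $i-1$ blocks and $w^{(i)}_b$ counts those routing $\hd(e)$ to $b$ non-degenerately over the last $\semih+1-i$ blocks. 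Two facts carry the argument. First, non-degeneracy ($\alpha_1\neq0$, $\alpha_{\semih+1}\neq0$) forces $u^{(i)}_{\tl(e)}=0$ for $i\geq2$ and $w^{(i)}_{\hd(e)}=0$ for $i\leq\semih$, so no single source dominates a ``middle'' term; combined with a short $\mathbb{F}_p$ computation (using that the moment curve meets a hyperplane in at most $\semih-1$ points) this bounds $\|\hat u^{(i)}\|_\infty$ and $\|\hat w^{(i)}\|_\infty$ by an $O(1)$ multiple of the flat value, so each middle $F_i$ is a sum of $\Omega(C)$ pieces of size $O(r/C)$. Second, for $i\in\{1,\semih+1\}$ one of the two vectors is an indicator and the other is flat up to a factor $\tfrac{p-1}{p-2}$, so $F_1,F_{\semih+1}\leq r\tfrac{p-1}{p-2}$ deterministically. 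Hence everything reduces to controlling $\sum_{i=2}^{\semih}F_i$, whose mean is $r(\semih-1)(1+O(1/p))$, against the budget $1-F_1-F_{\semih+1}\geq 1-2r\tfrac{p-1}{p-2}$. A short calculation with $\tfrac1r=\semih+2-\eps$ shows the budget-to-mean ratio equals $\tfrac{\semih-\eps-2/(p-2)}{\semih-1}$ up to lower-order corrections --- its logarithm is exactly the $\gamma$ of the theorem --- and the hypothesis $p>2+\tfrac{2}{1-\eps}$ is precisely what keeps this ratio above $1$, i.e.\ $\gamma>0$.

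The core is a uniform exponential tail bound for the bilinear form $\sum_a u^{(i)}_a w^{(i)}_{\sigma(a)}$ at a random point of an $S_N$-orbit of $(u^{(i)},w^{(i)})$. In part~\ref{to-sec3.orn.unif}, $\sigma$ permutes only the coordinates of $w^{(i)}$, the summands are negatively associated~\cite{dubhashi1996balls}, and a multiplicative Chernoff bound gives $\Pr\!\left[\sum_{i=2}^{\semih}F_i>\tfrac{\semih-\eps-2/(p-2)}{\semih-1}\,\mu\right]\leq\exp(-\Omega(C\gamma^2))$, where $\mu$ denotes the mean and the exponent reflects the $\Omega(C)$-pieces structure. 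In part~\ref{to-sec3.orn.whp}, the relabeling permutes the coordinates of $u^{(i)}$ and $w^{(i)}$ simultaneously, negative association fails, and I would fix a balanced $3$-colouring of $[N]$, split the form into the three sub-sums whose source colour and image colour lie in disjoint classes, reveal the relabeling one colour class at a time so each conditioned sub-sum is again a weighted sum of negatively associated variables, and union-bound over the three parts. Taking $C=\gamma^{-2}\log\log N\cdot\ln N$ makes $C\gamma^2=\ln N\cdot\log\log N$, so the per-edge failure probability is $N^{-\Omega(\log\log N)}$; this survives the $\mathrm{poly}(N)$ union bound and beats $C_d/N^{d}$ for every fixed $d$, which is exactly what ``with high probability'' demands --- any $\omega(1)$ factor in place of $\log\log N$ would do.

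The main obstacle is the part~\ref{to-sec3.orn.whp} tail bound: showing the $3$-colouring-plus-conditioning scheme truly restores negative association (or at least a usable exponential-moment estimate) for each of the three sub-sums, and that the slack lost by splitting into colour classes does not spoil the clean constant $\gamma$ in the exponent. A secondary, more mechanical obstacle is the expected-load computation behind the definition of $\gamma$: one must carefully track the $\{p-2,p-1\}$ ambiguity in the non-degenerate pseudo-path count and the $\Theta(1/p)$ non-flatness of the $i\in\{1,\semih+1\}$ vectors, and verify they perturb the feasibility budget only at order $1/p$ --- which is exactly what the lower bound on $p$ in the statement guarantees.
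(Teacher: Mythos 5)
Your proposal follows essentially the same route as the paper's proof: reduce to (doubly stochastic/permutation) demands and transfer the randomness of $\mathcal{R}^\tau$ onto the demand via the edge bijection, decompose the load on each edge by hop position, bound the first- and last-hop flow deterministically by $\frac{p-1}{p-2}r$, write each middle-hop contribution as a bilinear form in pseudo-path-counting vectors, apply negative association plus Chernoff when only one side is permuted (part 1), use a balanced $3$-coloring with class-by-class revelation of the relabeling when both sides are permuted simultaneously (part 2), and take $C=\gamma^{-2}\log\log N\cdot\ln N$ with a $\mathrm{poly}(N)$ union bound and the same budget computation defining $\gamma$. Two refinements in the paper close exactly the obstacles you flag: the $\ell_\infty$ control is not a per-vector flatness statement (which fails, e.g., for $q>\semih/2$) but the product bound $\rho^-_q(i,\tl(e))\cdot\rho^+_{\semih-q}(\hd(e),j)\le C^{\semih-1}$, obtained by running the ``unique nonzero-coefficient pattern'' argument on whichever of $q,\semih-q$ is at most $\semih/2$, and this product form is precisely the hypothesis the bilinear tail bound needs; and the coloring-plus-conditioning step requires, beyond negative association of the conditioned sum, a separate concentration estimate for the random conditional mean (the paper's submatrix tail lemma and its events $\evnt_1,\evnt_2$), because the pair of index sets onto which a color class and its $\sigma$-image are mapped is itself random.
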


Note that if $\eps=1$, i.e.~if $\frac1r \in \mathbb{Z}$, then there are no primes $p$ which exceed $2 + \frac{2}{1-\eps}$, therefore we condition against $\eps=1$.

Both parts of the theorem will be proven by focusing on
  the congestion of physical edges in the design $\mathcal{R}^0$.
  For the first part, the focus on edges in $\mathcal{R}^0$ is
  obvious. For the second part, we make use of 
  the isomorphism between $\mathcal{R}^{\tau}$ and
  $\mathcal{R}^0$. Rather than considering a fixed demand function
  $D$ and random design $\mathcal{R}^{\tau}$, we may consider a
  fixed design $\mathcal{R}^0$ and 
  random demand function $D^{\tau}(t) = P^{-1} D(t) P$
  where $P$ denotes the permutation matrix with
  $P_{i,\tau(i)} = 1$ for all $i$.

Now, focusing on any particular
  edge $e \in \evirt(\mathcal{R}^0)$, we bound the probability
  that $e$ is overloaded by breaking down the (random) amount
  of flow traversing $e$ as a sum, over $0 \leq q \leq \semih$,
  of the amount of flow that crosses $e$ on the $(q+1)$-th hop
  of a routing path. We will describe how to interpret each
  of these random amounts of flow as the value of a bilinear
  form on a pair of vectors randomly sampled from an orbit of
  a permutation group action. (The bilinear form is
  related to the demand function $D$, and the pair of
  vectors is related to the routing scheme.) We will
  then use a Chernoff-type bound for the values of bilinear forms
  on permutation group orbits, to bound the probability that
  the amount of $(q+1)$-th hop flow crossing $e$ is larger
  than average.
  Finally we will impose a union bound to show the probability
  that any edge gets overloaded is extremely small.

Existing Chernoff-type bounds for
negatively associated random variables
are sufficient for the tail bound in the
first part of the theorem, but not
for the second part. (See \Cref{rmk:tailbound-double-isnt-NA}
below.) Instead, we prove the following novel tail bound for
the distribution of bilinear sums on orbits of a
permutation group action.

\begin{restatable}{theorem}{thmtailbound} \label{thm:tailbound}
  Suppose $\vc{u}, \, \vc{v} \in (\reals_{\geq 0})^N$ are
  non-zero, non-negative vectors satisfying
  \begin{equation} \label{eq:tailbound-hyp}
    \left( \frac{\|\vc{u}\|_1}{\|\vc{u}\|_{\infty}} \right)
    \left( \frac{\|\vc{v}\|_1}{\|\vc{v}\|_{\infty}} \right)
    \; \geq \; C N
  \end{equation}
  for some $C \geq 1$. Let $D$ be any $N$-by-$N$ doubly
  stochastic matrix and consider the bilinear form 
    \begin{equation} \label{eq:bilinear-form}
      B(\vc{x},\vc{y}) = \sum_{i \neq j} D_{ij} x_i y_j .
    \end{equation}
    Let $M = 1$ if $D$ is a permutation matrix, and $M = N^2$ otherwise.
    If $P$ is a uniformly random $N$-by-$N$ permutation matrix then:
  \begin{enumerate}
  \item for any $\gamma > 0$,
    \begin{equation} \label{eq:tailbound-single}
      \Pr \left( B(\vc{u},P \vc{v})  \, \geq \, e^{\gamma} \frac{\|\vc{u} \|_1 \, \|\vc{v}\|_1}{N} \right)
      \leq M e^{-\frac12 \gamma^2 C} ;
     \end{equation}
  \item for any $\gamma > 0$,
    \begin{equation} \label{eq:tailbound-double}
      \Pr \left( B(P \vc{u}, P \vc{v}) \, \geq \, e^{\gamma} \frac{\|\vc{u} \|_1 \, \|\vc{v}\|_1}{N} \right)
      \leq 15 M e^{-\frac{1}{100} \gamma^2 C} .
    \end{equation}
  \end{enumerate}
\end{restatable}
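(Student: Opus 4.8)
Here is the plan.

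\textbf{Part 1.} The plan for~\eqref{eq:tailbound-single} is to recognize the left-hand side as a weighted sum of negatively associated random variables and apply the corresponding Chernoff bound. Regrouping, $B(\vc u, P\vc v)=\sum_j w_j (P\vc v)_j$ with $w_j=\sum_{i\ne j}D_{ij}u_i$; unit column sums of $D$ give $w_j\in[0,\|\vc u\|_\infty]$, and unit row sums give $\sum_j w_j=\sum_i u_i(1-D_{ii})\le\|\vc u\|_1$. The entries $(P\vc v)_1,\dots,(P\vc v)_N$ are a uniformly random reordering of the coordinates of $\vc v$, hence negatively associated~\cite{dubhashi1996balls}, and multiplying by the nonnegative constants $w_j$ preserves this; so $B(\vc u,P\vc v)$ is a nonnegatively weighted sum of negatively associated random variables with each summand in $[0,\kappa]$, $\kappa:=\|\vc u\|_\infty\|\vc v\|_\infty$, and with mean $\tfrac{\|\vc v\|_1}{N}\sum_j w_j\le\mu_0:=\tfrac{\|\vc u\|_1\|\vc v\|_1}{N}$. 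Rescaling the summands to $[0,1]$, the exponential-moment bound gives $\Pr(B(\vc u,P\vc v)\ge e^\gamma\mu_0)\le\exp\!\big(\tfrac{\mu_0}{\kappa}(e^\lambda-1-\lambda e^\gamma)\big)$ for all $\lambda>0$; choosing $\lambda=\gamma$ and using $e^\gamma-1-\gamma e^\gamma\le-\tfrac12\gamma^2\le0$ together with $\tfrac{\mu_0}{\kappa}=\tfrac1N\big(\tfrac{\|\vc u\|_1}{\|\vc u\|_\infty}\big)\big(\tfrac{\|\vc v\|_1}{\|\vc v\|_\infty}\big)\ge C$ from~\eqref{eq:tailbound-hyp} bounds this by $e^{-\gamma^2C/2}\le Me^{-\gamma^2C/2}$. (This already proves the stronger version with $M=1$; the factor $M$ is kept only for parallelism with~\eqref{eq:tailbound-double}.)

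\textbf{Part 2, reduction to a permutation matrix.} By Birkhoff--von Neumann, $D=\sum_\ell\lambda_\ell Q_\ell$ is a convex combination of at most $N^2$ permutation matrices. As $B$ is linear in $D$ and each $B_{Q_\ell}(P\vc u,P\vc v)\ge0$, the event $B(P\vc u,P\vc v)\ge e^\gamma\mu_0$ forces $B_{Q_\ell}(P\vc u,P\vc v)\ge e^\gamma\mu_0$ for some $\ell$; a union bound over the $\le N^2$ terms reduces~\eqref{eq:tailbound-double} for general doubly stochastic $D$ to the case that $D$ is a permutation matrix, at the cost of exactly the factor $M=N^2$. So it remains to show $\Pr(B(P\vc u,P\vc v)\ge e^\gamma\mu_0)\le15\,e^{-\gamma^2C/100}$ when $D$ is a permutation matrix.

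\textbf{Part 2, the permutation case.} Here the summands $u_{\pi(i)}v_{\pi(j)}$ fail to be negatively associated when the same $\pi$ acts on both vectors (cf.\ \Cref{rmk:tailbound-double-isnt-NA}), so the plan is to \emph{decouple} the form using an independent, uniformly random $3$-coloring $\chi\colon[N]\to\{1,2,3\}$ of the node set. For distinct colors $c,d$ put $B^{cd}_\chi=\sum_{i\ne j}D_{ij}u_{\pi(i)}v_{\pi(j)}\mathbf{1}[\chi(\pi(i))=c,\,\chi(\pi(j))=d]$; since $\pi$ is injective, $\E_\chi[\mathbf{1}[\chi(\pi(i))=c,\chi(\pi(j))=d]]=\tfrac19$ for $i\ne j$, so $B(P\vc u,P\vc v)=3\,\E_\chi[\,B^{12}_\chi+B^{23}_\chi+B^{31}_\chi\mid\pi\,]$ --- an exact identity with no leftover same-color term. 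By Jensen, $\E_\pi e^{\lambda B}\le\E_{\pi,\chi}e^{3\lambda(B^{12}_\chi+B^{23}_\chi+B^{31}_\chi)}$, and by Hölder over the three pieces it suffices to bound $\E_\pi[e^{9\lambda B^{cd}_\chi}\mid\chi]$ for each cyclically adjacent pair. These are the three parts: in $B^{cd}_\chi$ the slots carrying $u$-values lie in $\chi^{-1}(c)$ and those carrying $v$-values lie in the disjoint class $\chi^{-1}(d)$, so conditioning on $\chi$, on the induced partition $\{\pi^{-1}(\chi^{-1}(c))\}_c$ of the domain, and then on $\pi$ restricted to $\pi^{-1}(\chi^{-1}(c))$ freezes the coefficients $\tilde w_j:=\sum_{i\in\pi^{-1}(\chi^{-1}(c))}D_{ij}u_{\pi(i)}\le\|\vc u\|_\infty$ and leaves $\{v_{\pi(j)}\}_{j\in\pi^{-1}(\chi^{-1}(d))}$ a uniformly random ordering of $\{v_k:k\in\chi^{-1}(d)\}$, a negatively associated family; each conditioned piece is thus of the Part~1 type, with conditional mean $O(\mu_0)$ and effective parameter $\Omega(C)$ once one conditions on the overwhelmingly likely event that the coloring is ``balanced'' --- every class of size $\Theta(N)$ and carrying a $\Theta(1)$ fraction of $\|\vc u\|_1$ and of $\|\vc v\|_1$. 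Assembling the conditional moment bounds through Hölder and Jensen, optimizing $\lambda$, and absorbing the unbalanced-coloring contribution gives an exponential bound valid when $\gamma$ is bounded away from $0$; the complementary small-$\gamma$ regime is covered directly by a bounded-differences/Bernstein estimate for $B$ as a function of $\pi$ (each transposition changes $B$ by $O(\kappa)$). The generous constant $\tfrac1{100}$ is what lets these two estimates be stitched together, and the factor $15$ tracks the three pieces and the bad-coloring events.

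\textbf{Main obstacle.} The whole difficulty is concentrated in the permutation case. Without negative association one needs a coloring-based decomposition that is simultaneously (i) \emph{exact} in expectation over the coloring, so no uncontrolled ``diagonal'' same-color remainder survives; and (ii) such that each decoupled piece, \emph{after conditioning on the coarse color partition}, still has conditional mean comparable to $\mu_0$ and a conditional version of hypothesis~\eqref{eq:tailbound-hyp} still in force --- i.e.\ shrinking to color classes of size $\approx N/3$ must not collapse the effective value of $C$. The delicate steps are controlling the norm ratios of $\vc u$ and $\vc v$ restricted to a \emph{random} color class (concentration of random subset sums, with a union bound over the three colors), and ensuring that the exponential decoupling bound and the near-mean concentration bound together cover all $\gamma>0$, given that the former is useless for $\gamma$ near $0$ and the latter for $\gamma$ large.
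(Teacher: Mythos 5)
Your Part 1 is correct and in fact a bit cleaner than the paper's: by writing $B(\vc{u},P\vc{v})=\sum_j w_j(P\vc{v})_j$ with $w_j=\sum_{i\neq j}D_{ij}u_i\in[0,\|\vc{u}\|_\infty]$ you avoid the Birkhoff--von Neumann decomposition entirely and get \eqref{eq:tailbound-single} with $M=1$; the paper instead reduces to permutation matrices and applies the negative-association Chernoff bound to $u_iv_{\pi(i)}$, paying the union-bound factor $M$. The reduction of Part 2 to a permutation matrix $\sigma$ also matches the paper.

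The gap is in the heart of Part 2. Your decomposition $B=3\,\E_\chi[B^{12}_\chi+B^{23}_\chi+B^{31}_\chi\mid\pi]$ is only an identity \emph{in expectation over the random coloring}, so you are forced through Jensen and H\"older on moment generating functions rather than a pointwise union bound over pieces. Quantify what that route yields: after conditioning, the best uniform bound you have on a piece's conditional mean is $\frac{\|\vc{u}|_{S_c}\|_1\|\vc{v}|_{S_d}\|_1}{|S_d|}\approx\mu_0/3$ (not $\mu_0/9$; getting $\mu_0/9$ would require controlling $\sum_{k\in S_c,\;\pi\sigma\pi^{-1}(k)\in S_d}u_k$, which reintroduces exactly the joint dependence on $\pi$ the decoupling was meant to remove, and it must hold \emph{inside} the exponential, not merely w.h.p., since on unbalanced colorings the conditional MGF can be huge). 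Optimizing the resulting bound $\exp\bigl(\tfrac{\mu_0}{3\kappa}(e^{9\lambda\kappa}-1)-\lambda e^\gamma\mu_0\bigr)$ over $\lambda$ gives a nontrivial exponent only when $e^\gamma>3$ (and a $\Omega(\gamma^2C)$ exponent only for $\gamma$ bounded well above $1+\ln 3$). Meanwhile your fallback for smaller $\gamma$, a bounded-differences/Bernstein bound in $\pi$ with increments $O(\kappa)$, yields at best $\exp\bigl(-\Omega(\gamma^2\mu_0^2/(N\kappa^2))\bigr)=\exp\bigl(-\Omega(\gamma^2C^2/N)\bigr)$, which is far weaker than $e^{-\Omega(\gamma^2C)}$ whenever $C\ll\sqrt{N}$; there is no off-the-shelf Bernstein inequality for the doubly-permuted statistic $\sum_i u_{\pi(i)}v_{\pi(\sigma(i))}$ with variance proxy $\kappa\mu_0$. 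So for $\gamma$ between roughly $\sqrt{100\ln 15/C}$ (below which the claimed bound is vacuous) and a constant $\gtrsim\ln 3$, neither of your two estimates proves \eqref{eq:tailbound-double} --- and this is precisely the regime the paper needs, where $\gamma$ is a small constant and $C=\Theta(\log N\log\log N/\gamma^2)\ll\sqrt N$.

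For contrast, the paper avoids any regime splitting by using a \emph{deterministic} balanced proper 3-coloring of the cycle diagram of $\sigma$ (so $\chi(i)\neq\chi(\sigma(i))$ and the decomposition $B_\sigma=B^{(0)}+B^{(1)}+B^{(2)}$ is pointwise, allowing a union bound over pieces). For each color class $I$ it conditions on $\tau$ outside $I$, which freezes the image sets $Q=\tau(I)$, $R=\tau(\sigma(I))$ (disjoint, by properness), makes the restriction $\tau|_I$ a uniform bijection (a Part-1 situation), and controls the conditional mean $Y/K_q$ with $Y=\sum_{j\in Q}\sum_{k\in R}u_jv_k$ both above and below via a separate sampling-without-replacement tail bound (\Cref{lem:submatrix-tail}); crucially, the lower bound on $Y$ restores a conditional version of hypothesis \eqref{eq:tailbound-hyp} with constant $\Omega(C)$, so every piece's deviation is measured multiplicatively against its own conditional mean and the exponent $\Omega(\gamma^2C)$ comes out uniformly in $\gamma>0$. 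Some mechanism of this kind --- a pointwise decomposition plus conditional restoration of \eqref{eq:tailbound-hyp} --- is what your proposal is missing.
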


The proof of \Cref{thm:tailbound} is deferred to
\Cref{sec:tailbound}.

\subsection*{Proof of \Cref{thm:fixed-design-tradeoff}}\label{sec:ub-proof}

\begin{proof}
We may assume without loss of generality that
  the demand matrix $D(t)$ is doubly stochastic
  for all $t$. For part 1 of the theorem this is
  because $D(t)$ is assumed to be a random
  permutation matrix. For part 2, it is because
  every non-negative matrix whose row and column
  sums are bounded above by 1 can be made into
  a doubly stochastic matrix by (weakly) increasing
  each of the matrix entries~\cite{stoc-paper}.
  Modifying the demand function in this way
  cannot decrease the induced flow on any edge,
  so it cannot increase the probability that
  $f(R,rD)$ is feasible. Thus, we will assume for
  the remainder of the proof that $D(t)$ is doubly
  stochastic for all $t$.
  
Fix an edge $e$ and $0\leq q\leq \semih$, and consider the amount of flow traversing edge $e$ traveling on paths where edge $e$ occurs in the $(q+1)$-th phase block\footnote{We number phase blocks in a flow path using the convention that phase block 1 is the first {\em complete} phase block in the flow path. Recall from \Cref{sec:orn-rout-scheme} that this is also the first phase block in which it is possible that the flow is transmitted on a physical edge.} of the flow path. 
We will denote this value as the {\em amount of $(q+1)$-th hop flow traversing edge $e$}.\footnote{Note this is a different value than if edge $e$ is the $(q+1)$-th physical hop traversed on the path. It may be the case that in some earlier phase blocks of the path, flow may not have traversed any physical hop. If this is confusing, revisit {\em pseudo-paths} in \Cref{sec:orn-rout-scheme}.}

First we examine $q=0$.
First-hop flow traversing edge $e$ originates at
  source node $\tl(e)$ during the
  phase block preceding the one to which $e$ belongs.
  There are $C(p-1)$ time steps during that phase block,
  and $r$ units of flow per time step originate at
  $\tl(e)$. Each unit of flow is divided evenly among
  a set of at least $(p-2) C^{\semih+1}$ pseudo-paths, at
  most $C^{\semih}$ of which begin with edge $e$ as their
  first hop. (After fixing the first hop and the destination
  of a $(\semih+1)$-hop pseudo-path, the rest of the path
  is uniquely determined by the $g$-tuple of phases
  $x_2,\ldots,x_{g+1}$.) Hence, of the $r C (p-1)$ units
  of flow that could traverse $e$ as their first hop,
  the fraction that actually do traverse $e$ as their
  first hop is at most $\frac{C^{\semih}}{(p-2) C^{\semih+1}}$.
  Consequently, the amount of first-hop flow on $e$ is
  bounded above by
  $\frac{r C (p-1) \cdot C^{\semih}}{(p-2) C^{\semih+1}}
  = \left( \frac{p-1}{p-2} \right) r .$
  (Note that this is not a probabilistic statement;
  the upper bound on first-hop flow holds with probability 1.)
  A symmetric argument shows that the amount of
  last-hop flow on $e$ is bounded above by
  $\left( \frac{p-1}{p-2} \right) r $
  as well.

Now suppose $1\leq q\leq \semih-1$, and let $X_i$ be the random variable realizing the amount of $(q+1)$-th hop flow traversing edge $e$ due to source node $i$. Clearly, the total amount of $(q+1)$-th hop flow traversing $e$ will be $\sum_i X_i$.
  Let $I$ denote the interval of timesteps
  constituting the $q^{\mathrm{th}}$ phase block
  before the phase block that contains edge $e$;
  recall that this means $I$ is made up of $C(p-1)$
  consecutive timesteps.
  Let 
  \[
  \overline{D}_{ij} = \frac{1}{r C (p-1)} \sum_{t \in I} D(t)_{ij}
  \]
  denote the (normalized) rate of flow demanded by
  source-destination pair $(i,j)$ during phase block $I$.
  The normalizing factor makes $\overline{D}$ into a
  doubly stochastic matrix.
Let $\rho^-_q(i,e)$ denote the number of $q$-hop pseudo-paths from $i$ to $\tl(e)$ with non-zero first coefficient, and let $\rho^+_{\semih-q}(e,j)$ denote the number of $(\semih-q)$-hop pseudo-paths from $\hd(e)$ to $j$ with non-zero last coefficient.
Finally, let $\rho_{\semih+1}(i,j)$ denote the number of
  non-degenerate $(\semih+1)$-hop pseudo-paths from
  $i$ to $j$. Of the flow that originates at $i$ with
  destination $j$ during time window $I$, the fraction
  of flow that traverses edge $e$ under our routing scheme
for $\mathcal{R}^0$ is $\rho^-_q(i,e) \cdot \rho^+_{\semih-q}(e,j) / \rho_{\semih+1}(i,j)$. Hence,
\begin{align}
  \nonumber
  X_i & = \sum_{j \in [N], \, j \neq i} \frac{\rho^-_q(i,e) \cdot \rho^+_{\semih-q}(e,j)}{\rho_{\semih+1}(i,j)} \cdot \left( \sum_{t \in I} D(t)_{ij} \right) \\
  \nonumber
      & \leq 
  \sum_{j \in [N], \, j \neq i} \frac{\rho^-_q(i,e) \cdot \rho^+_{\semih-q}(e,j) \cdot r C (p-1) \cdot \overline{D}_{ij}}{(p-2) C^{\semih+1}} \\
  \nonumber
      & = \left( \frac{p-1}{p-2} \right) r
  \sum_{j \in [N], \, j \neq i} \overline{D}_{ij} \left( \frac{\rho^-_q(i,e)}{C^q} \right) \left( \frac{\rho^+_{\semih-q}(e,j)}{C^{\semih-q}} \right) \\
  \label{eq:ub.sumx}
      \sum_{i=1}^{N} X_i & \leq \left( \frac{p-1}{p-2} \right) r 
      \sum_{i \neq j} \overline{D}_{ij} \left( \frac{\rho^-_q(i,e)}{C^q} \right) \left( \frac{\rho^+_{\semih-q}(e,j)}{C^{\semih-q}} \right) =
      \sum_{i \neq j} \overline{D}_{ij} u_i v_j
\end{align}
where
\begin{equation} \label{eq:ub.uv}
  u_i = \left( \frac{p-1}{p-2} \right) r \left(
  \frac{\rho^-_q(i,e)}{C^q} \right), \qquad 
  v_j = \frac{\rho^+_{\semih-q}(e,j)}{C^{\semih-q}} .
\end{equation}

To prove the first part of the theorem, \Cref{thm:orn-tradeoff}.1, when the
ORN design is fixed to be $\mathcal{R}^0$ and
the demand function is the time-stationary
demand $D_{\sigma}$ for a random permutation
$\sigma$, then
\[
\sum_{i \neq j} \overline{D}_{ij} u_i v_j =
\sum_{i \neq \sigma(i)} u_i v_{\sigma(i)}  \leq
\sum_{i=1}^N u_i v_{\sigma(i)} .
\]
The distribution of $\sigma$ is the same as
the distribution of $\tau \circ \pi$ where
$\pi$ is an arbitrary (non-random) permutation
without fixed points, and $\tau$ is a uniformly random
permutation. Letting $P$ denote the permutation matrix
representing $\tau$, the amount of $(q+1)^{\mathrm{th}}$
hop flow on edge $e$ is stochastically dominated
by 
\[
\sum_{i=1}^N u_i v_{\tau(\pi(i))} =
B_{\pi}(\vc{u}, P \vc{v})
\]
where $B_{\pi}$ denotes the bilinear
form $B_{\pi}(\vc{x},\vc{y}) = \sum_{i=1}^N x_i y_{\pi(i)}$.

Similarly, to prove the second part of the theorem, \Cref{thm:orn-tradeoff}.2,
recall that we are drawing a random ORN
design $\mathcal{R}^{\tau}$ from the distribution
$\mathscr{R}_N(C,r)$, and that the induced
$(q+1)$-th hop flow
on the edge of $\mathcal{R}^{\tau}$ corresponding
to $e$, under demand function $D$, is equal to
the induced $(q+1)$-th hop flow on edge $e$ under demand function
$P^{-1} D P$. Again letting $P$ denote the permutation
matrix representing $\tau$, this induced flow is
bounded above by
\[
\sum_{i \neq j} (P^{-1} \overline{D} P)_{ij} u_i v_j
= \sum_{i \neq j} \overline{D}_{ij} u_{\tau(i)} v_{\tau(j)}
= B(P \vc{u}, \, P \vc{v} )
\]
where $B$ is the bilinear form
$B(\vc{x}, \vc{y}) = \sum_{i \neq j} \overline{D}_{ij} x_i y_j$.

Hence, we are in a position to prove tail bounds
on the induced $(q+1)$-th hop flow on edge $e$,
using the Chernoff-type bounds in \Cref{thm:tailbound},
provided we can estimate the norms $\| \vc{u} \|_1, \,
\| \vc{v} \|_1, \, \| \vc{u} \|_{\infty}, \,  \| \vc{v} \|_{\infty}.$
For $\| \vc{u} \|_1$ we have
$
  \| \vc{u} \|_1 = \frac{p-1}{p-2} \cdot \frac{r}{C^q} \cdot
  \sum_{i=1}^N \rho^-_q(i,e). $
The sum on the right side can be calculated by realizing
that it counts the total number of $q$-hop pseudo-paths
with non-zero first coefficient that end at $\tl(e)$.
There are $C^q$ ways of choosing a $q$-tuple of phases
from the $q$ phase blocks preceding the phase block
containing $e$, for each such choice there are
$(p-1)p^{q-1}$ ways to choose a sequence of coefficients
beginning with a non-zero value. Hence,

\begin{equation*} 
  \| \vc{u} \|_1 = \frac{p-1}{p-2} \cdot \frac{r}{C^q} \cdot
  (p-1) p^{q-1} C^q = \frac{(p-1)^2}{p(p-2)} \cdot p^q \cdot r .
\end{equation*}
Similarly,
\begin{equation*} 
  \| \vc{v} \|_1 = \frac{p-1}{p} \cdot p^{\semih-q} .
\end{equation*}

Now we turn to bounding $\|\vc{u}\|_{\infty}, \,
\| \vc{v} \|_{\infty}$ from above, which is tantamount
to bounding the number of $q$-hop pseudo-paths
from $i$ to $\tl(e)$ and $(\semih-q)$-hop
pseudo-paths from $\hd(e)$ to $j$, with
non-zero first and last coefficients respectively.
One such upper bound is easy to derive:
for each of the $C^q$ many ways of selecting
one phase $\vc{x}_i$ from each of the $q$ phase
blocks preceding $\tl(e)$, there is at most
one $q$-hop pseudo-path from $i$ to $\tl(e)$
using that sequence of phases. This is because
the existence of two distinct such pseudo-paths
would imply that the vector $\tl(e) - i$ could be represented in
two distinct ways as a linear combination of
vectors in the set $\{\vc{x}_1, \ldots, \vc{x}_q\}$,
violating linear independence. For an analogous reason,
$\rho^+_q(\hd(e),j) \leq C^{\semih-q}$.

However, if $q \leq \semih/2$ then there is a
tighter upper bound: 
$\rho^-_q(i,\tl(e)) \leq C^{q-1}$.
To see why, first observe that any $2q$
of the $C(\semih+1)$ Vandermonde vectors
used in the $\semih+1$ phase blocks preceding
edge $e$ must be linearly independent,
since $2q \leq \semih$. If
$(x_1,\alpha_1),\ldots,(x_q,\alpha_q)$
and $(x'_1,\alpha'_1),\ldots,(x'_q,\alpha'_q)$
are two pseudo-paths from $i$ to
$\tl(e)$ then 
\[
\{(x_i,\alpha_i) \mid \alpha_i \neq 0\} \; = \;
\{(x'_j,\alpha'_j) \mid \alpha'_j \neq 0 \},
\]
as otherwise the vector $(\tl(e) - i)$ could
be represented in two inequivalent ways as a
linear combination of elements of
$\{x_1,x'_1,x_2,x'_2,\ldots,x_q,x'_q\}$,
contradicting linear independence.
Consequently, when $q \leq g/2$, two
distinct $q$-hop pseudo-paths from
$i$ to $\tl(e)$ can only differ in
the choice of phases $x_i$ with $\alpha_i = 0$.
In other words, every $q$-hop pseudo-path
from $i$ to $\tl(e)$ has the same coefficient
sequence $\alpha_1,\alpha_2, \ldots, \alpha_q$,
and in constructing the corresponding
phase sequence we have only one choice of
phase when $\alpha_i \neq 0$ and
$C$ choices when $\alpha_i = 0$.
Furthermore, there is at least one value
of $i$, namely $i=1$, for which $\alpha_i \neq 0$.
Consequently, $\rho^-_q(i,\tl(e)) \leq C^{q-1}$
when $q \leq \semih/2$, as claimed. An analogous
argument proves that $\rho^+_q(\hd(e),j) \leq
C^{\semih-q-1}$ when $\semih-q \leq \semih/2$. For every
$q$, at least one of $q, \semih-q$ is less
than or equal to $\semih/2$, and hence
\begin{align*}
  \rho^-_q(i,\tl(e)) \cdot
  \rho^+_q(\hd(e),j) & \leq
  \max \{ C^{q-1} \cdot C^{\semih-q}, \,
  C^q \cdot C^{\semih-q-1} \} = C^{\semih-1} \\
  \| \vc{u} \|_{\infty} \| \vc{v} \|_{\infty}
  & \leq
  \left( \frac{p-1}{p-2} \right) r
  \left( \frac{\rho^-_q(i,\tl(e)) \cdot
    \rho^+_q(\hd(e),j)}{C^\semih} \right) 
  \leq
  \left( \frac{p-1}{p-2} \right) \frac{r}{C} \\
  \left( \frac{ \| \vc{u} \|_1 \| \vc{v} \|_1 }{ \| \vc{u} \|_{\infty} \| \vc{v} \|_{\infty} } \right) & \geq
  \frac{ \frac{ (p-1)^3 }{p^2 (p-2)} \cdot p^{\semih} \cdot r }
  { \frac{p-1}{p-2} \cdot \frac{r}{C} } 
   = \left( \frac{p-1}{p} \right)^2 C N
   \geq \frac12 C N
\end{align*}
for $p \geq 5$. If we observe that
$\frac{ \| \vc{u} \|_1 \| \vc{v} \|_1 }{N}
= \frac{(p-1)^3}{p^2 (p-2)} r <
r,$ then
we may use \Cref{thm:tailbound} to conclude
that for any $\gamma > 0$,
\begin{align*}
  \Pr \left( B_{\pi}(\vc{u}, P \vc{v} ) \geq 
   e^{\gamma} r \right)
  & \leq \; N^2 e^{- \frac14 \gamma^2 C} \\
  \Pr \left( B(P \vc{u}, P \vc{v} ) \geq 
   e^{\gamma} r \right)
  & \leq 15 N^2 e^{- \frac1{200} \gamma^2 C } .
\end{align*}
  Supposing $C \geq \frac{\log\log N}{\gamma^2}\ln(N)$ for some positive integer, then we union bound over all $C (p-1) (g+1) N$  edges of the virtual topology and all $1 \leq q \leq g-1$ to find
\begin{align*}
	\Pr[&\text{any edge has}\geq e^{\gamma} r \text{ } (q+1)\text{-th hop flow for some }1\leq q\leq \semih-1]\\ 
	&\leq NC(p-1)(g+1)(g-1) \cdot 15 N^2 \left( e^{-\frac{1}{200}\gamma^2} \right)^{C} \\
	&\leq N^{3+1/\semih} \frac{\log\log N}{\gamma^2}\ln(N) (\semih^2-1) e ^{-\frac{1}{200} \log\log N\ln(N)} \\
	&\leq \left( N^{3+1/\semih}\frac{\log\log N\ln(N)}{\gamma^2} (\semih^2-1) \right)N^{- \frac{1}{200} \log\log N} \\
	&\leq \bigo\left(\frac{1}{\gamma^2 N^d}\right) \text{ for any constant }d.
\end{align*}
This fulfills our definition of with high probability for fixed $\gamma$.

Finally, we need to show that if none of the bad events as described above occur, if every edge has at most $e^{\gamma} r$ $(q+1)$-th hop flow for $1 \leq q \leq g-1$, then no edge will be overloaded.
Recall also that the $(q+1)$-th hop flow on $e$ for $q \in \{0,g\}$ is $\left( \frac{p-1}{p-2} \right) r = r + \frac{r}{p-2}$. 
Recall also that $e^{\gamma} = \frac{\semih - \eps - 2/(p-2)}{\semih-1}$, $\semih = \lfloor\frac{1}{r}-1\rfloor$, and $\eps = \semih + 1 - \left( \frac{1}{r}-1 \right) = 2 + \semih - \frac1r$. Hence, if no bad events occur, the induced flow
on each edge will be bounded above by
\begin{align*}
  2 r + \frac{2r }{p-2} + (\semih-1) e^{\gamma} r & = \left(2 + \frac{2}{p-2} + \semih - \eps - \frac{2}{p-2} \right) r 
	& = \left( 2 + \semih-\eps \right) r = \left( \frac1r \right) r = 1 . 
\end{align*}
\end{proof}

\subsection{A Tail Bound for Bilinear Sums}
\label{sec:tailbound}

In \Cref{sec:ub-tradeoff}, our analysis of the distribution of
the amount of flow traversing an edge $e$ depends on certain
tail bounds for the distribution of bilinear sums on orbits
of a permutation
group action.
The relevant tail bound is stated as
\Cref{thm:tailbound} above. 
This section is devoted to proving the theorem.
The proof will make use of a Chernoff-type concentration
bound for negatively associated random variables. We begin by recalling
some definitions and facts about negative association; see
\cite{dubhashi1996balls,joag1983negative,neg-assoc-notes} for
an introduction to this topic.

\begin{dfn}[\cite{joag1983negative,khursheed1981positive}] \label{def:negative-association}
A set of random variables $X_1,...,X_n$ are {\em negatively associated} if for any two functions $f,g:\mathbb{R}^n\rightarrow \reals$ that are either both monotone increasing or both monotone decreasing, and dependent on\footnote{For the purposes of this definition, an $n$-variate function $f$ is dependent on a set of indices $I\subseteq[n]$ if $f(x_1,\hdots,x_n) = f(y_1,\hdots,y_n)$ holds whenever $x_i=y_i$ for all $i\in I$.} disjoint subsets of indices $S_f,S_g\subseteq [n]$, then 
\[ \E[f(\xvec) \cdot g(\xvec)] \leq \E[f(\xvec)] \cdot \E[g(\xvec)] \]
\end{dfn}

Many examples of negatively associated random variables can be constructed using
the following definition and lemma.
\begin{dfn} \label{def:const-ordered-rows}
An $n$ by $m$ matrix $A$ has {\em consistently ordered rows} if there exists some permutation $\pi: [m]\rightarrow[m]$ of the columns of $A$ such that for all rows $i\in[n]$, $A[i,\pi(1)] \leq \cdots \leq A[i,\pi(m)]$. 
\end{dfn}
\begin{restatable}{lemma}{lemconstorderedrows}\label{lem:constorderedrows}
Suppose $A$ is an $n$ by $n$ matrix, and $X_1,...,X_n$ are random variables sampled by the following process: sample a permutation $\pi:[n]\rightarrow [n]$ uniformly at random, and set $X_i = A[i,\pi(i)]$.
If the entries of $A$ are non-negative and $A$ has consistently ordered rows, then $X_1,...,X_n$ are negatively associated.
\end{restatable}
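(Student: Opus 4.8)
The plan is to reduce the statement to a known closure property of negative association, namely that a family of $0/1$ random variables obtained by permuting a single ``one-hot'' vector is negatively associated, and that negative association is preserved under (i) taking monotone functions of disjoint blocks of variables and (ii) forming independent unions. Concretely, I would first handle the case where $A$ is already column-sorted, i.e.\ the identity permutation $\pi$ works in Definition~\ref{def:const-ordered-rows}, so $A[i,1]\le A[i,2]\le\cdots\le A[i,n]$ for every row $i$; the general case follows immediately by relabeling the columns, since a uniformly random $\pi:[n]\to[n]$ composed with a fixed permutation of $[n]$ is still uniform, and this just reindexes which entry of each row is selected.

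Next I would expose the classical source of negative association underlying a random permutation: let $\sigma$ be the uniformly random permutation, and for each pair $(i,k)\in[n]^2$ define the indicator $Z_{ik}=\mathbf 1[\sigma(i)=k]$. The matrix $(Z_{ik})$ is a uniformly random permutation matrix, and it is a standard fact (the ``permutation distribution'' or ``random one-hot rows'' example, e.g.\ via the $0$-$1$ Lemma / Feder--Mihail or the treatment in \cite{joag1983negative,dubhashi1996balls}) that the collection $\{Z_{ik}\}_{i,k}$ is negatively associated. Then I would write each selected variable as
\[
  X_i \;=\; \sum_{k=1}^{n} A[i,k]\, Z_{ik},
\]
which is a function of the $i$-th ``block'' of indicators $\{Z_{ik}: k\in[n]\}$, and — crucially, using the column-sorted assumption and nonnegativity of the entries — a \emph{monotone nondecreasing} function of that block: increasing any $Z_{ik}$ (equivalently, moving the unique $1$ in row $i$ to a later column) can only increase $X_i$, because entries are sorted and nonnegative. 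Since the blocks $\{Z_{ik}:k\in[n]\}$ for distinct $i$ are disjoint sets of indices, and each $X_i$ is a monotone function of its own block, the closure property of negative association under monotone functions of disjoint subsets (Property~P\,6 / P\,7 in the standard references, cf.\ \cite{dubhashi1996balls}) yields that $X_1,\dots,X_n$ are negatively associated, which is exactly Definition~\ref{def:negative-association} applied to the functions $f,g$ in question.

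The main obstacle — and the only place the hypotheses are really used — is verifying the monotonicity claim carefully, since ``$X_i$ is a monotone function of the vector $(Z_{i1},\dots,Z_{in})$'' is slightly subtle: these indicators are not free binary variables (exactly one is $1$), so one must check monotonicity on the support, or more cleanly, extend $X_i$ to a genuinely monotone function on all of $\{0,1\}^n$ that agrees with the selection rule on one-hot vectors; taking $f_i(z_1,\dots,z_n)=\max_{k: z_k=1} A[i,k]$ (with value $A[i,1]$, say, on the all-zero input) works, is monotone nondecreasing in each $z_k$ because the $A[i,k]$ are nondecreasing in $k$, and equals $X_i$ whenever the input is one-hot. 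One also must note that the required monotone $f,g$ in Definition~\ref{def:negative-association} are applied to the $X_i$'s themselves, and composing monotone-in-$X$ with monotone-in-$Z$ keeps monotonicity in $Z$, so disjointness of the index blocks is preserved and the closure property applies directly. Nonnegativity of $A$ is what guarantees these extended functions are monotone on the whole cube (without it one could still argue on the support, but the clean extension fails), and the consistently-ordered-rows hypothesis is exactly what makes a \emph{single} column permutation simultaneously sort every row, so that the same disjoint-block structure works for all $i$ at once.
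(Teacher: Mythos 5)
Your reduction breaks at its very first step: the ``standard fact'' you invoke --- that the full collection of permutation-matrix indicators $\{Z_{ik}\}_{i,k\in[n]}$, $Z_{ik}=\mathbf{1}[\sigma(i)=k]$, is negatively associated --- is false. Take $n=2$: then $Z_{11}=Z_{22}$ always (both are $1$ for the identity and $0$ for the transposition), so $\E[Z_{11}Z_{22}]=\tfrac12>\tfrac14=\E[Z_{11}]\,\E[Z_{22}]$, and the increasing functions $f=Z_{11}$, $g=Z_{22}$ depend on disjoint index sets, directly violating Definition~\ref{def:negative-association}. More generally, for $i\neq i'$ and $k\neq k'$ one has $\Pr[\sigma(i)=k,\ \sigma(i')=k']=\tfrac{1}{n(n-1)}>\tfrac1{n^2}$, so entries in distinct rows and distinct columns are \emph{positively} correlated. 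The zero--one lemma of Dubhashi--Ranjan only applies to $0/1$ variables whose sum is at most $1$, i.e.\ to a single row (or column) of the permutation matrix; the whole matrix sums to $n$, and no closure property lets you paste the rows together. Since the underlying family is not NA, the ``monotone functions of disjoint blocks'' closure property has nothing to act on, and the argument collapses --- your column-sorting reduction and the monotone extension $f_i(z)=\max_{k:z_k=1}A[i,k]$ are fine but moot. Indeed, the cross-row positive correlation you would be sweeping under the rug is exactly the difficulty the lemma must overcome (it is also the phenomenon behind \Cref{rmk:tailbound-double-isnt-NA}).

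A workable route (and the one the paper takes) argues about the $X_i$ directly, by induction on $n$: writing negative association as $\mathrm{Cov}(f(\vec X),g(\vec X))\le 0$, condition on the random index $I=\pi^{-1}(1)$ of the variable that receives its smallest column value, apply the law of total covariance, handle the first term by the inductive hypothesis, and handle the second term by showing that, after re-ordering the rows (which does not change the joint law of the $X_i$), one of $\E[f(\vec X)\mid I]$, $\E[g(\vec X)\mid I]$ is monotone increasing in $I$ and the other decreasing, so Chebyshev's algebraic inequality gives a non-positive covariance. The consistently-ordered-rows and non-negativity hypotheses are what make this conditioning/monotonicity step work; they cannot be routed through an NA property of the permutation matrix itself, because no such property holds.
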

\begin{proof} 
	This will be proved by induction on $n$. Note that negative association amounts to showing that the covariance $Cov(f(\xvec), g(\xvec)) \leq 0$. WLOG, since $A$ has consistently ordered rows, we can assume that $A[i,1]\leq...,A[i,n]$ for all $i\in[n]$.
	
	Base case: $n=2$. Then $A$ is a 2 by 2 matrix, and since $f$,$g$ are both either monotone increasing or monotone decreasing, then
	\begin{align*}
		Cov(f(\xvec), g(\xvec)) = & \mbox{ } \frac{1}{4}\Big( f(A[1,1])g(A[2,1]) + f(A[1,2])g(A[2,2]) \\
		& - f(A[1,1])g(A[2,2]) - f(A[1,2])g(A[2,1]) \Big) \leq 0
	\end{align*}
	
	Now suppose the lemma is true for $n=k$, and for now suppose $f,g$ are both monotone increasing. We will need two properties of covariance.
	
	\textit{Property 1: (law of total covariance)} Let $X,Y,$ and $Z$ be any random variables. Then $Cov(X,Y) = \E[Cov(X,Y)|Z] + Cov(\E[X|Z],\E[Y|Z])$.
	
	\textit{Property 2: (Chebyshev's algebraic inequality)} Given a random variable $Z$ and monotone increasing $h_1$ and monotone decreasing $h_2$, then $Cov(h_1(Z),h_2(Z)) \leq 0$.
	
	Now, consider the random variable $I = \pi^{-1}(1)$. This indicates which random variable $X_i$ realizes its smallest value. Then by Property 1, 
	\[Cov(f(\xvec), g(\xvec)) = \E\big[ Cov(f(\xvec), g(\xvec))|I \big] + Cov\big( \E\big[ f(\xvec)|I \big], \E\big[ g(\xvec)|I \big] \big)\]
	
	For any fixed $I$, the first term is random over 1 fewer variable, meaning this falls under the inductive hypothesis and is $\leq 0$.
	
	To show the second term is $\leq 0$, we will show that as functions of $I$, one of $\E\big[ f(\xvec)|I \big]$ or $\E\big[ g(\xvec)|I \big]$ is monotone increasing, and the other is monotone decreasing.
	
	Due to how the random variables $X_i$ are chosen from $A$, they can be equivalently chosen from any matrix $A'$ equivalent up to a re-ordering of rows. We will re-order the rows of $A$ to enforce $h_1(I) = \E\big[ f(\xvec)|I \big]$ monotone increasing and $h_2(I) = \E\big[ g(\xvec)|I \big]$ monotone decreasing in $I$.
	
	Let $\sigma_f:[|S_f|]\rightarrow S_f$ impose the ordering $h_1(\sigma_f(1)) \leq \hdots \leq h_1(\sigma_f(|S_f|))$. Additionally, let $\sigma_g:[|S_g|]\rightarrow S_g$ impose $h_2(\sigma_g(1))\leq \hdots \leq h_2(\sigma_g(|S_g|))$.
	
	Note that for $x\in S_f$, and $y\not\in S_f$, then $\E\big[ f(\xvec)|I=x \big] \leq \E\big[ f(\xvec)|I=y \big]$, and the same holds true for $g$ and $S_g$. We will re-order the rows of $A$ in the following way: $\sigma_f(1),\hdots,\sigma_f(|S_f|)$, followed by all indices not within either $S_f$ or $S_g$, followed by $\sigma_g(|S_g|),\hdots,\sigma_g(1)$. Then $h_1$ will be monotone increasing and $h_2$ will be monotone decreasing, thus showing $Cov(f(\xvec), g(\xvec))\leq 0$. An almost identical proof will show this true for $f,g$ both monotone decreasing.
\end{proof}
\begin{corollary} \label{cor:rankone-NA}
  If $\vc{u},\, \vc{v} \in (\reals_{\geq 0})^N$ are non-negative vectors,
  then the random variables $X_1,X_2,\ldots,X_N$ defined by
  sampling a uniformly random permutation $\pi : [N] \to [N]$
  and setting $X_i = u_i v_{\pi(i)}$ are negatively associated.
\end{corollary}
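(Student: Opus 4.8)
The plan is to deduce this directly from \Cref{lem:constorderedrows} by exhibiting the right matrix. I would define the $N \times N$ matrix $A$ by $A[i,j] = u_i v_j$. This is a rank-one matrix with non-negative entries, since $\vc{u}$ and $\vc{v}$ lie in $(\reals_{\geq 0})^N$. With this choice, the sampling process described in \Cref{lem:constorderedrows} — draw $\pi : [N] \to [N]$ uniformly and set $X_i = A[i,\pi(i)]$ — produces exactly the random variables $X_i = u_i v_{\pi(i)}$ appearing in the statement. So the entire content of the proof is to verify the one remaining hypothesis of the lemma, namely that $A$ has consistently ordered rows.

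To check this, I would fix a permutation $\rho$ of the column indices that sorts the entries of $\vc{v}$ in nondecreasing order, i.e.\ $v_{\rho(1)} \leq v_{\rho(2)} \leq \cdots \leq v_{\rho(N)}$ (breaking ties arbitrarily). Then, for each row index $i$, multiplying this chain of inequalities by the nonnegative scalar $u_i$ yields $A[i,\rho(1)] = u_i v_{\rho(1)} \leq u_i v_{\rho(2)} \leq \cdots \leq u_i v_{\rho(N)} = A[i,\rho(N)]$. Thus the single column permutation $\rho$ works simultaneously for every row, which is precisely \Cref{def:const-ordered-rows}. The rank-one structure is exactly what makes one column ordering suffice for all rows at once, and this is where non-negativity of $\vc{u}$ enters: a negative coordinate $u_i$ would reverse the inequalities in row $i$ and break consistency.

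Applying \Cref{lem:constorderedrows} to this $A$ then immediately gives that $X_1, \ldots, X_N$ are negatively associated, completing the proof. I do not anticipate any real obstacle; the only points worth a remark are the degenerate cases, none of which cause trouble. If some $u_i = 0$ the corresponding row of $A$ is identically zero, hence (weakly) nondecreasing under any column permutation, and if several entries of $\vc{v}$ coincide the sort $\rho$ is simply not unique — any tie-breaking choice works. Hence the corollary follows with essentially no additional work beyond the lemma.
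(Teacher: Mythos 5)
Your proposal is correct and is essentially the paper's own proof: the paper likewise sets $A = \vc{u}\vc{v}^{\trans}$ and invokes \Cref{lem:constorderedrows}, with your verification of consistently ordered rows (sort by the entries of $\vc{v}$, which works for every row since the $u_i$ are non-negative) just making explicit what the paper leaves implicit.
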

\begin{proof}
  The matrix $A = \vc{u} \vc{v}^{\trans}$ has non-negative entries
  and consistently ordered rows, so we may apply 
  \Cref{lem:constorderedrows} to deduce the corollary.
\end{proof}
\begin{corollary} \label{cor:permut-NA}
  Let $\mathscr{X} = \{x_1,x_2,\ldots,x_m\}$ be any multiset of non-negative numbers, 
  and for some $n \leq m$ let $X_1, X_2, \ldots, X_n$
  denote random variables obtained by drawing $n$
  uniformly random samples {\em without replacement} from $\mathscr{X}$.
  (In other words, the conditional distribution of $X_i$
  given $X_1,\ldots,X_{i-1}$ is uniform over the multiset
  $\mathscr{X} \setminus \{X_1,\ldots,X_{i-1}\}$.)
  Then $X_1,\ldots,X_n$ are negatively associated. 
\end{corollary}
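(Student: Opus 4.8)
The plan is to reduce the corollary to \Cref{lem:constorderedrows} by realizing the without-replacement sampling process through a uniformly random permutation of $[m]$, and then to observe that negative association is inherited by sub-collections of random variables.

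First I would reformulate the sampling process. Enumerate the multiset as $\mathscr{X} = \{x_1,\ldots,x_m\}$ (listing each element with its multiplicity), draw a uniformly random permutation $\pi : [m] \to [m]$, and set $Y_i = x_{\pi(i)}$ for every $i \in [m]$. The joint law of $(Y_1,\ldots,Y_n)$ is exactly that of $n$ draws without replacement from $\mathscr{X}$: since $\pi$ is uniform, $(\pi(1),\ldots,\pi(n))$ is a uniformly random ordered tuple of distinct elements of $[m]$, so conditionally on $Y_1,\ldots,Y_{i-1}$ the value $Y_i$ is uniform over the remaining multiset $\mathscr{X} \setminus \{Y_1,\ldots,Y_{i-1}\}$, matching the description in the statement. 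Hence it suffices to show $Y_1,\ldots,Y_n$ are negatively associated.

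Next I would apply \Cref{lem:constorderedrows} with the $m$ by $m$ matrix $A$ defined by $A[i,j] = x_j$ for all $i,j$; that is, every row of $A$ equals $(x_1,\ldots,x_m)$. Its entries are non-negative by hypothesis, and it has consistently ordered rows trivially: the single column permutation that sorts $(x_1,\ldots,x_m)$ into non-decreasing order works simultaneously for every row, because all rows coincide. The lemma then tells us that the variables $Y_i = A[i,\pi(i)] = x_{\pi(i)}$, for $i \in [m]$ and $\pi$ uniformly random, are negatively associated.

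Finally I would invoke the closure of negative association under passing to a sub-collection: if $Y_1,\ldots,Y_m$ are negatively associated then so is the sub-collection $Y_1,\ldots,Y_n$. This is immediate from \Cref{def:negative-association}, since any pair of monotone functions depending on disjoint index subsets $S_f, S_g \subseteq [n]$ also depends on disjoint subsets of $[m]$, so the defining covariance inequality is inherited verbatim. Combining the three steps proves the corollary. There is essentially no hard step here; the only points warranting a word of care are the equivalence between the permutation model and without-replacement sampling, and the (routine) subset-closure property of negative association, both of which follow directly from the definitions.
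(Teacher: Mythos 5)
Your proposal is correct and is essentially the paper's own argument: your matrix $A[i,j]=x_j$ is exactly the rank-one matrix $\vc{u}\vc{v}^{\trans}$ (with one vector the all-ones vector and the other $(x_1,\ldots,x_m)$) that the paper feeds into \Cref{cor:rankone-NA}, itself a direct application of \Cref{lem:constorderedrows}, and both proofs finish with the same closure of negative association under taking sub-collections. The only difference is that you invoke \Cref{lem:constorderedrows} directly rather than through \Cref{cor:rankone-NA}, which is immaterial.
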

\begin{proof}
  The special case $n = m$, in which the variables
  $X_1, \ldots, X_m$ constitute a random permutation
  of the elements of $\mathscr{X}$, can be obtained
  from \Cref{cor:rankone-NA} by setting
  $\vc{u} = (x_1,x_2,\ldots,x_m)^{\trans}$ and
  $\vc{v} = (1,1,\ldots,1)^{\trans}$. The general
  case in which $n \leq m$ can then be obtained
  by observing that the property of negative
  association is preserved under taking subsets
  of a set of random variables.
\end{proof}

We will be making use of the following Chernoff bound
for negatively associated random variables.
\begin{restatable}{lemma}{lemchernoffNA}\label{lem:chernoff-NA}
Suppose $X_1,...,X_N$ are negatively associated variables for which $X_i\in[0,1]$ always, and $\E[\sum_i X_i]=\mu$. Then Chernoff's multiplicative tail bound holds. That is, for any $\gamma > 0$, 
\begin{align} \label{eq:upper-tail-NA}
  \Pr\left[ \sum_i X_i \geq e^{\gamma} \mu \right] & \leq
  \left[ \exp \left( e^{\gamma} - 1 - \gamma e^{\gamma} \right) \right]^\mu < e^{-\frac{1}{2} \gamma^2 \mu} \\
  \label{eq:lower-tail-NA}
  \Pr\left[ \sum_i X_i \leq e^{-\gamma} \mu \right] & \leq
  \left[ \exp \left( e^{-\gamma} - 1 + \gamma e^{-\gamma} \right) \right]^{\mu} .
\end{align}
Furthermore, when $0 < \gamma < \frac12$ the second inequality implies
\begin{equation} \label{eq:lt-NA-restated}
  \Pr\left[ \sum_i X_i \leq e^{-\gamma} \mu \right] \leq
  e^{-\frac13 \gamma^2 \mu} .
\end{equation}
\end{restatable}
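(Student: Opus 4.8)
The plan is to run the classical exponential-moment (Chernoff) argument, invoking negative association at exactly one point: to factor the moment generating function of $\sum_i X_i$. The first step is to extract from \Cref{def:negative-association} the multiplicative inequality $\E\!\left[\prod_{i=1}^N f_i(X_i)\right] \le \prod_{i=1}^N \E[f_i(X_i)]$, valid whenever $f_1,\dots,f_N : \reals \to \reals_{\ge 0}$ are all monotone non-decreasing, or all monotone non-increasing. This follows by induction on $N$: writing $g = \prod_{i<N} f_i$, the function $g(X_1,\dots,X_{N-1})$ is non-negative, monotone in the same direction as $f_N$, depends only on the index set $\{1,\dots,N-1\}$, and is disjoint in its dependence from $f_N(X_N)$; hence \Cref{def:negative-association} gives $\E[f_N(X_N)\, g(X_1,\dots,X_{N-1})] \le \E[f_N(X_N)]\,\E[g(X_1,\dots,X_{N-1})]$, and we recurse on the second factor.

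For the upper tail, fix $t > 0$ and apply Markov's inequality to $e^{t\sum_i X_i}$, obtaining $\Pr\!\left[\sum_i X_i \ge a\right] \le e^{-ta}\,\E\!\left[\prod_i e^{tX_i}\right]$. Since $x \mapsto e^{tx}$ is non-negative and non-decreasing, the factorization inequality yields $\E\!\left[\prod_i e^{tX_i}\right] \le \prod_i \E[e^{tX_i}]$. By convexity of $e^{tx}$ on $[0,1]$ together with $X_i \in [0,1]$, we have $e^{tX_i} \le 1 + X_i(e^t - 1)$, so $\E[e^{tX_i}] \le 1 + \E[X_i](e^t-1) \le \exp\!\big(\E[X_i](e^t-1)\big)$; multiplying over $i$ gives $\prod_i \E[e^{tX_i}] \le \exp\!\big((e^t-1)\mu\big)$. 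Minimizing $-ta + (e^t-1)\mu$ over $t>0$ with $a = e^\gamma \mu$ forces $e^t = e^\gamma$, i.e.\ $t = \gamma$, and substituting back gives $\Pr\!\left[\sum_i X_i \ge e^\gamma\mu\right] \le \big[\exp(e^\gamma - 1 - \gamma e^\gamma)\big]^\mu$. To obtain the stated sub-Gaussian form it then suffices to check $e^\gamma - 1 - \gamma e^\gamma < -\tfrac{1}{2}\gamma^2$ for $\gamma > 0$, which holds because $h(\gamma) := e^\gamma - 1 - \gamma e^\gamma + \tfrac{1}{2}\gamma^2$ satisfies $h(0) = 0$ and $h'(\gamma) = \gamma(1 - e^\gamma) < 0$ for $\gamma > 0$.

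The lower tail is handled symmetrically. Fix $s > 0$ and apply Markov to $e^{-s\sum_i X_i}$, giving $\Pr\!\left[\sum_i X_i \le a\right] \le e^{sa}\,\E\!\left[\prod_i e^{-sX_i}\right]$. Now $x \mapsto e^{-sx}$ is non-negative and non-increasing, so the factorization inequality still applies --- this is precisely why \Cref{def:negative-association} is stated for pairs of functions that are both monotone increasing or both monotone decreasing --- and the same convexity bound gives $\E[e^{-sX_i}] \le \exp\!\big(\E[X_i](e^{-s}-1)\big)$, hence $\E\!\left[\prod_i e^{-sX_i}\right] \le \exp\!\big((e^{-s}-1)\mu\big)$. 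Minimizing $sa + (e^{-s}-1)\mu$ over $s>0$ with $a = e^{-\gamma}\mu$ forces $s = \gamma$ and yields $\Pr\!\left[\sum_i X_i \le e^{-\gamma}\mu\right] \le \big[\exp(e^{-\gamma} - 1 + \gamma e^{-\gamma})\big]^\mu$. For inequality~\eqref{eq:lt-NA-restated} it remains to verify $e^{-\gamma} - 1 + \gamma e^{-\gamma} \le -\tfrac{1}{3}\gamma^2$ for $0 < \gamma < \tfrac{1}{2}$: with $k(\gamma) := e^{-\gamma} - 1 + \gamma e^{-\gamma} + \tfrac{1}{3}\gamma^2$ we have $k(0) = 0$ and $k'(\gamma) = \gamma\big(\tfrac{2}{3} - e^{-\gamma}\big)$, which is negative on $(0, \ln\tfrac{3}{2})$ and positive on $(\ln\tfrac{3}{2}, \tfrac{1}{2})$, so $k$ attains its maximum over $[0,\tfrac{1}{2}]$ at an endpoint; since $k(0) = 0$ and a direct evaluation gives $k(\tfrac{1}{2}) = \tfrac{3}{2} e^{-1/2} - 1 + \tfrac{1}{12} < 0$, we conclude $k(\gamma) < 0$ throughout $(0,\tfrac{1}{2})$.

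Essentially every step is routine; the one place requiring care is the invocation of negative association for the exponential moment, namely arranging that the functions $e^{\pm t X_i}$ are simultaneously monotone in a single direction so that \Cref{def:negative-association} chains into the $N$-fold product bound. Once that is in hand, the argument reduces to the two elementary one-variable inequalities above, which convert the exact exponents into the claimed sub-Gaussian tails.
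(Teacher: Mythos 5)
Your proposal is correct and takes essentially the same route as the paper: the classical moment-generating-function Chernoff argument for negatively associated variables (which the paper simply imports from its cited references and then converts to the stated form via the substitutions $\gamma=\ln(1+\delta)$ and $\gamma=-\ln(1-\delta)$), followed by elementary verification of the two scalar inequalities $e^{\gamma}-1-\gamma e^{\gamma}\leq -\tfrac12\gamma^2$ and $e^{-\gamma}-1+\gamma e^{-\gamma}\leq -\tfrac13\gamma^2$. The only differences are cosmetic: you make the factorization $\E[\prod_i e^{\pm tX_i}]\leq\prod_i\E[e^{\pm tX_i}]$ explicit from \Cref{def:negative-association} (which the paper leaves to the references), and you check the scalar inequalities by differentiating auxiliary functions, where the paper uses an integral comparison and Taylor's theorem.
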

The Chernoff bound is often expressed in terms of the tail probabilities
$\Pr\left[ \sum_i X_i \geq (1+\delta) \mu \right]$
and $\Pr\left[ \sum_i X_i \leq (1-\delta) \mu \right]$,
with the bound on the right side of the inequality then being written
as a function of $\delta$.
For a proof, see~\cite{dubhashi1996balls,neg-assoc-notes}. 
The version of the Chernoff bound stated above is obtained from the
usual one by substituting $\gamma = \ln(1 + \delta)$ in the first inequality and
$\gamma = - \ln(1-\delta)$ in the second.
The inequality $- e^{\gamma} + 1 + \gamma e^{\gamma} \geq \frac12 \gamma^2$
is derived by writing it in the equivalent form $\int_0^\gamma t e^t \, dt
\geq \int_0^\gamma t \, dt$ and comparing integrands. The inequality
$-e^{-\gamma} + 1 - \gamma e^{-\gamma} \geq \frac13 \gamma^2$ is justified
by using Taylor's Theorem to deduce that the left side is bounded
below by $\frac12 \gamma^2 - \frac13 \gamma^3$ for $0 < \gamma < 1$
and then noting that $\frac12 \gamma^2 \geq \frac13 \gamma^2 + \frac13 \gamma^3$
when $0 < \gamma < \frac12$. 

As a first application of \Cref{lem:chernoff-NA} we can prove
the first tail bound asserted in \Cref{thm:tailbound}.
\begin{lemma} \label{lem:tailbound-single}
  Suppose $\vc{u}, \, \vc{v} \in (\reals_{\geq 0})^N$ are
  non-zero, non-negative vectors satisfying
  \begin{equation*} 
    \left( \frac{\|\vc{u}\|_1}{\|\vc{u}\|_{\infty}} \right)
    \left( \frac{\|\vc{v}\|_1}{\|\vc{v}\|_{\infty}} \right)
    \; \geq \; C N
    \tag{\ref{eq:tailbound-hyp}}
  \end{equation*}
  Suppose $D$ is a doubly stochastic matrix defining
  a bilinear form $B(\cdot,\cdot)$ via
    \begin{equation*} 
      B(\vc{x},\vc{y}) = \sum_{i \neq j} D_{ij} x_i y_j .
      \tag{\ref{eq:bilinear-form}}
    \end{equation*}
  Let $M = 1$ if $D$ is a permutation matrix, and $M = N^2$ otherwise.
  If $P$ is a uniformly random $N$-by-$N$ permutation matrix
  then for any $\gamma > 0$,
  \begin{equation*} 
      \Pr \left( B(\vc{u},P \vc{v})  \, \geq \, e^{\gamma} \frac{\|\vc{u} \|_1 \, \|\vc{v}\|_1}{N} \right)
      \leq M e^{-\frac12 \gamma^2 C} .
      \tag{\ref{eq:tailbound-single}}
  \end{equation*}
\end{lemma}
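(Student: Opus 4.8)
The plan is to absorb the doubly stochastic matrix $D$ into one of the two vectors, turning $B(\vc u, P\vc v)$ into a rank-one sum of exactly the form handled by \Cref{cor:rankone-NA}, and then to apply the negative-association Chernoff bound of \Cref{lem:chernoff-NA}. Writing $P$ for the permutation matrix of a uniformly random permutation, I would set $\tilde{\vc w}$ with entries $\tilde w_j := \sum_{i} D_{ij} u_i$; since all terms are non-negative, dropping the restriction $i \neq j$ gives $B(\vc u, P\vc v) = \sum_{i\neq j} D_{ij} u_i (P\vc v)_j \le \sum_{i,j} D_{ij} u_i (P\vc v)_j = \sum_j \tilde w_j (P\vc v)_j$. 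The point of keeping the diagonal terms is that double stochasticity then yields $\|\tilde{\vc w}\|_1 = \sum_j \sum_i D_{ij} u_i = \sum_i u_i = \|\vc u\|_1$ exactly, while $\tilde w_j \le \|\vc u\|_\infty \sum_i D_{ij} = \|\vc u\|_\infty$ gives $\|\tilde{\vc w}\|_\infty \le \|\vc u\|_\infty$; hence hypothesis~\eqref{eq:tailbound-hyp} transfers verbatim to the pair $(\tilde{\vc w}, \vc v)$.

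Next I would normalize, putting $X_j := (\tilde w_j / \|\vc u\|_\infty)\cdot((P\vc v)_j / \|\vc v\|_\infty)$, so each $X_j \in [0,1]$ by the bounds above and $\sum_j \tilde w_j (P\vc v)_j = \|\vc u\|_\infty \|\vc v\|_\infty \sum_j X_j$. Because the coordinates of $P\vc v$ are a uniformly random rearrangement of those of $\vc v$, \Cref{cor:rankone-NA} (applied to the non-negative vectors $\tilde{\vc w}/\|\vc u\|_\infty$ and $\vc v/\|\vc v\|_\infty$) shows $X_1, \dots, X_N$ are negatively associated, with mean $\mu = \E[\sum_j X_j] = \|\tilde{\vc w}\|_1 \|\vc v\|_1 / (N \|\vc u\|_\infty \|\vc v\|_\infty) = \|\vc u\|_1 \|\vc v\|_1 / (N \|\vc u\|_\infty \|\vc v\|_\infty) \ge C$, the last inequality being precisely~\eqref{eq:tailbound-hyp}. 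Then the upper-tail estimate~\eqref{eq:upper-tail-NA} of \Cref{lem:chernoff-NA} gives, for every $\gamma > 0$, $\Pr[\sum_j X_j \ge e^\gamma \mu] \le e^{-\frac12 \gamma^2 \mu} \le e^{-\frac12 \gamma^2 C}$ (using $\mu \ge C$). Unwinding the normalization, the event $\{B(\vc u, P\vc v) \ge e^\gamma \|\vc u\|_1 \|\vc v\|_1 / N\}$ is contained in $\{\sum_j X_j \ge e^\gamma \mu\}$, so its probability is at most $e^{-\frac12 \gamma^2 C}$, which is the desired bound.

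I do not expect a real obstacle in this part; the only care needed is in the reduction to the rank-one sum — discarding the $i\neq j$ terms so that $\|\tilde{\vc w}\|_1 = \|\vc u\|_1$, and using double stochasticity for the $\ell_\infty$ bound — so that no padding of the variables or rescaling of $C$ is required when invoking \Cref{lem:chernoff-NA}. Incidentally this argument delivers the bound with $M = 1$ for every doubly stochastic $D$, not merely for permutation matrices; the weaker constant $M = N^2$ stated in the lemma is what the cruder route — decomposing $D$ as a convex combination of at most $N^2$ permutation matrices via Birkhoff--von Neumann and union-bounding over them — would produce. The value of $M$ becomes genuinely relevant only for the coupled form $B(P\vc u, P\vc v)$ in part~2 of \Cref{thm:tailbound}, where negative association fails and one must decompose the sum differently.
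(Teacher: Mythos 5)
Your proof is correct, but it follows a genuinely different route from the paper's. The paper first invokes the Birkhoff--von Neumann and Carath\'{e}odory theorems to write $D$ as a convex combination of at most $N^2$ permutation matrices, reduces to a single permutation form $B_{\sigma}$, observes that $\pi = \tau \circ \sigma$ is uniform so that \Cref{cor:rankone-NA} applies to $X_i = u_i v_{\pi(i)}/(\|\vc{u}\|_{\infty}\|\vc{v}\|_{\infty})$, applies \Cref{lem:chernoff-NA}, and then pays a union bound over the constituents --- which is exactly where the factor $M$ in \eqref{eq:tailbound-single} comes from. You instead absorb $D$ into the non-random vector by setting $\tilde{w}_j = \sum_i D_{ij} u_i$, using the row sums to get $\|\tilde{\vc{w}}\|_1 = \|\vc{u}\|_1$ and the column sums to get $\|\tilde{\vc{w}}\|_{\infty} \leq \|\vc{u}\|_{\infty}$, after which a single application of \Cref{cor:rankone-NA} and \Cref{lem:chernoff-NA} finishes the job; all the steps (dropping the diagonal terms by non-negativity, the $[0,1]$ normalization, $\mu \geq C$ from \eqref{eq:tailbound-hyp}, monotonicity of $e^{-\frac12\gamma^2\mu}$ in $\mu$) check out. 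Your argument is both simpler and strictly stronger on this part: it yields the bound with $M=1$ for every doubly stochastic $D$, whereas the paper's statement only claims $M = N^2$ in that case. What the paper's BvN-based reduction buys is uniformity with the second tail bound \eqref{eq:tailbound-double}: for the coupled quantity $B(P\vc{u},P\vc{v}) = \sum_{i\neq j} D_{ij} u_{\tau(i)} v_{\tau(j)}$ the matrix $D$ cannot be folded into either vector (both indices are hit by the same random $\tau$), so the reduction to a fixed permutation $\sigma$ and its cycle structure is genuinely needed there, and the paper sets up that machinery once and reuses it --- a point you correctly identify at the end of your write-up.
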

\begin{proof}
  The Birkhoff-von Neumann Theorem says that $D$ can be expressed as a convex combination
  of permutation matrices, and Carath\'{e}odory's Theorem says that there exists such an
  expression in which the number of constituent permutation matrices is at most
  $(N-1)^2 + 1$, which is bounded above by $N^2$. Hence, $D$ can be expressed as a
  convex combination of at most $M$ permutation matrices, where $M$ is
  defined as in the lemma statement. 
  The bilinear form $B$ is thus a convex combination of at most
  $M$  bilinear forms $B_{\sigma}$, where $B_{\sigma}$ is defined
  for a permutation $\sigma$ by
  \[
  B_{\sigma}(\vc{u},\vc{v}) = \sum_{i : \sigma(i) \neq i} u_i v_{\sigma(i)} .
  \]
  We will prove the special case of the lemma when $D$ is a
  permutation matrix and $B = B_{\sigma}$ for some $\sigma$;
  the general case will then follow by the union bound.

  If $\tau$ is the random permutation such that $P_{i,\tau(i)}=1$ for all $i$, then for any
  permutation $\sigma$ the composition $\pi = \tau \circ \sigma$ is uniformly distributed
  over all permutations of $[N]$. Consequently, by \Cref{cor:rankone-NA},
  the random variables $X_i = \frac{u_i v_{\pi(i)}}{\|\vc{u}\|_{\infty} \| \vc{v}\|_{\infty}}$
  are negatively associated. By construction, they take values between 0 and 1. 
  Furthermore, the expected value of $\sum_{i=1}^N X_i$ can be computed by
  linearity of expectation, using the fact that the event $\pi(i)=j$ has
  probability $\frac1N$ for all $j$. 
  \[
  \mu = \E \left[ \sum_{i=1}^N X_i \right] =
  \frac1N \sum_{i=1}^N \sum_{j=1}^N \frac{u_{i} v_j}{\|\vc{u}\|_{\infty} \| \vc{v}\|_{\infty}}
  = \frac1N \frac{\| \vc{u} \|_1 \| \vc{v} \|_1}{\|\vc{u}\|_{\infty} \| \vc{v}\|_{\infty}}
  \geq C .
  \]
  Applying \Cref{lem:chernoff-NA}, the probability that
  $\sum_{i=1}^N X_i$ exceeds $\frac{e^{\gamma}}{N}
  \frac{\|\vc{u} \|_1 \, \|\vc{v}\|_1}{\|\vc{u}\|_{\infty} \vc{u}\|_{\infty}}$
  is less than $e^{- (1/2) \gamma^2 C}$. Inequality~\eqref{eq:tailbound-single}
  follows because $B_{\sigma}(\vc{u},\vc{v}) \leq {\|\vc{u}\|_{\infty} \| \vc{v}\|_{\infty}}
  \sum_{i=1}^N X_i$. 
\end{proof}
\begin{rmk} \label{rmk:tailbound-double-isnt-NA}
  After seeing the proof of the tail bound~\eqref{eq:tailbound-single},
  it is tempting to try proving an analogous tail bound
  for $B(P \vc{u}, P \vc{v})$ 
  using the random variables $X_1,\ldots,X_N$ defined by
  $$ X_i = \frac{u_{\tau(i)} v_{\tau(\sigma(i))}}{\|\vc{u}\|_{\infty} \| \vc{v}\|_{\infty}} . $$
  The trouble is that these random variables may
  fail to be negatively associated. As a simple example, suppose $\vc{u} = \vc{v} =
  (1,1,0,0)^{\trans}$ and let $\sigma = (1 \; 2)(3 \; 4)$ be the permutation of $\{1,2,3,4\}$ that
  transposes the first and last pairs of elements. Then
  $X_1 = u_{\tau(1)} v_{\tau(2)}$ and $X_2 = u_{\tau(2)} v_{\tau(1)}$.
  When $\tau(\{1,2\}) = \{1,2\}$ we have $X_1 = X_2 = 1$,
  and otherwise $X_1 = X_2 = 0$. Hence, $\E[X_1 X_2] = \frac16
  > \E[X_1] \E[X_2]$, violating negative association.
\end{rmk}
Despite the counterexample in \Cref{rmk:tailbound-double-isnt-NA}, we will
still be able to prove a tail bound for $B(P \vc{u}, P \vc{v})$
using negative association and the Chernoff bound,
however we will need to pursue a more indirect strategy.
We begin with the following tail bound
for random submatrices of a non-negative rank-one matrix.
\begin{lemma} \label{lem:submatrix-tail}
  Suppose $\vc{u}, \, \vc{v} \in (\reals_{\geq 0})^N$ are
  non-zero, non-negative vectors satisfying~\eqref{eq:tailbound-hyp}.
  For any $K \leq N/2$ let $(Q, R)$ denote a uniformly random sample
  from the set of ordered pairs of $K$-element subsets of $[N]$ that
  are disjoint from one another.
  Then for $0 < \gamma < 1$,
  \begin{align} \label{eq:submatrix-upper-tail}
    \Pr \left( \sum_{i \in Q} \sum_{j \in R} u_i v_j \, \geq \, 
    e^{\gamma} \frac{K^2}{N^2} \| \vc{u} \|_1 \| \vc{v} \|_1 \right)
    \; \leq \; 2 e^{- \frac18 \gamma^2 C K / N} \\
    \label{eq:submatrix-lower-tail}
    \Pr \left( \sum_{i \in Q} \sum_{j \in R} u_i v_j \, \leq \, 
    e^{-\gamma} \frac{K^2}{N^2} \| \vc{u} \|_1 \| \vc{v} \|_1 \right)
    \; \leq \; 2 e^{- \frac{1}{12} \gamma^2 C K / N} .
  \end{align}
\end{lemma}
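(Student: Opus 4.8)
The plan is to realize the random pair $(Q,R)$ through a single uniformly random permutation, which factors the quantity of interest into a product of two sums, each of which is then controlled individually by the Chernoff bound for negatively associated variables (\Cref{lem:chernoff-NA}). Concretely, I would let $\pi:[N]\to[N]$ be a uniformly random permutation and set $Q=\{\pi(1),\dots,\pi(K)\}$ and $R=\{\pi(K+1),\dots,\pi(2K)\}$; since $K\le N/2$ these two index blocks are disjoint, and a short counting argument ($K!\,K!\,(N-2K)!$ permutations realize each target pair) shows $(Q,R)$ is uniform over ordered pairs of disjoint $K$-subsets of $[N]$, so it suffices to prove both bounds under this coupling. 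Under it, $\sum_{i\in Q}\sum_{j\in R}u_iv_j = U\cdot V$ where $U=\sum_{i\in Q}u_i$ and $V=\sum_{j\in R}v_j$.

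The crucial observation is that, marginally, $Q$ and $R$ are each a uniformly random $K$-subset of $[N]$ — the image of a fixed block of $K$ positions under a uniform permutation is a uniform $K$-subset. Hence $U$ is distributed as the sum of $K$ draws without replacement from the multiset $\{u_1,\dots,u_N\}$, so by \Cref{cor:permut-NA} the summands, after dividing by $\|\vc{u}\|_\infty$, are negatively associated $[0,1]$-valued variables with total mean $\mu_u:=\tfrac{K}{N}\cdot\tfrac{\|\vc{u}\|_1}{\|\vc{u}\|_\infty}$; symmetrically $V/\|\vc{v}\|_\infty$ has mean $\mu_v:=\tfrac{K}{N}\cdot\tfrac{\|\vc{v}\|_1}{\|\vc{v}\|_\infty}$. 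Applying \Cref{lem:chernoff-NA} to each gives, for any $\gamma'>0$, $\Pr[U\ge e^{\gamma'}\tfrac{K}{N}\|\vc{u}\|_1]\le e^{-\frac12(\gamma')^2\mu_u}$ (and likewise for $V$), and for $0<\gamma'<\tfrac12$ the matching lower-tail bounds with exponents $-\tfrac13(\gamma')^2\mu_u$ and $-\tfrac13(\gamma')^2\mu_v$. Now the hypothesis \eqref{eq:tailbound-hyp} reads $\bigl(\tfrac{\|\vc{u}\|_1}{\|\vc{u}\|_\infty}\bigr)\bigl(\tfrac{\|\vc{v}\|_1}{\|\vc{v}\|_\infty}\bigr)\ge CN$; together with the trivial inequalities $\tfrac{\|\vc{u}\|_1}{\|\vc{u}\|_\infty}\le N$ and $\tfrac{\|\vc{v}\|_1}{\|\vc{v}\|_\infty}\le N$, this forces each ratio to be at least $C$, so $\mu_u,\mu_v\ge CK/N$ and every exponent above is at least $\tfrac12(\gamma')^2 CK/N$ (respectively $\tfrac13(\gamma')^2 CK/N$).

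To finish, I would assemble the two one-sided bounds by a union bound on a product. If $U<e^{\gamma/2}\tfrac{K}{N}\|\vc{u}\|_1$ and $V<e^{\gamma/2}\tfrac{K}{N}\|\vc{v}\|_1$, then $UV<e^{\gamma}\tfrac{K^2}{N^2}\|\vc{u}\|_1\|\vc{v}\|_1$; so taking $\gamma'=\gamma/2$ and summing the two upper-tail bounds yields \eqref{eq:submatrix-upper-tail} with the stated constant $\tfrac18=\tfrac12\cdot\tfrac14$. Symmetrically, $U>e^{-\gamma/2}\tfrac{K}{N}\|\vc{u}\|_1$ and $V>e^{-\gamma/2}\tfrac{K}{N}\|\vc{v}\|_1$ together imply $UV>e^{-\gamma}\tfrac{K^2}{N^2}\|\vc{u}\|_1\|\vc{v}\|_1$, and since $0<\gamma<1$ we have $\gamma/2<\tfrac12$ so the lower-tail bounds apply with $\gamma'=\gamma/2$, giving \eqref{eq:submatrix-lower-tail} with constant $\tfrac1{12}=\tfrac13\cdot\tfrac14$.

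I expect the one genuine subtlety to be recognizing that one should \emph{not} instead condition on $Q$ and analyze the conditional law of the $R$-sum: for a fixed $Q$ the conditional mean of $V$ is $\tfrac{K}{N-K}\sum_{j\notin Q}v_j$, which for a $Q$ that misses most of the mass of $\vc{v}$ can exceed the target $\tfrac{K}{N}\|\vc{v}\|_1$ by a factor approaching $2$, and the multiplicative slack $e^{\gamma}$ cannot absorb a constant factor for small $\gamma$. Pushing all of the randomness through a single permutation $\pi$ avoids this completely, since it keeps the marginal laws of both $U$ and $V$ exactly those of uniform-$K$-subset sums, at the cost of only a two-event union bound.
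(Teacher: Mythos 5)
Your proposal is correct and follows essentially the same route as the paper's proof: both reduce the product event to the two one\mbox{-}sided events $\sum_{i\in Q}u_i \gtrless e^{\pm\gamma/2}\frac{K}{N}\|\vc{u}\|_1$ and $\sum_{j\in R}v_j \gtrless e^{\pm\gamma/2}\frac{K}{N}\|\vc{v}\|_1$, note that each sum is distributed as $K$ draws without replacement (hence negatively associated by \Cref{cor:permut-NA}, with mean at least $CK/N$ by \eqref{eq:tailbound-hyp} and the trivial bound $\|\vc{v}\|_1/\|\vc{v}\|_\infty\le N$), and apply \Cref{lem:chernoff-NA} with a union bound, yielding the same constants $\tfrac18$ and $\tfrac1{12}$. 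Your explicit coupling through a single permutation $\pi$ is just a concrete realization of the paper's observation that the marginal of each of $Q$ and $R$ is a uniform $K$-subset, so the two arguments are the same in substance.
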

\begin{proof}
  If $\sum_{i \in Q} \sum_{j \in R} u_i v_j \, \geq \, 
  e^{\gamma} \frac{K^2}{N^2} \| \vc{u} \|_1 \| \vc{v} \|_1$
  then at least one of the inequalities
  \begin{align} \label{eq:smt.1}
  \sum_{i \in Q} \frac{u_i}{\|\vc{u}\|_{\infty}} & \geq
  e^{\gamma/2} \frac{K}{N} \frac{\|\vc{u}\|_1}{\|\vc{u}\|_{\infty}} \\
  \label{eq:smt.2}
  \sum_{j \in R} \frac{v_i}{\|\vc{v}\|_{\infty}} & \geq
  e^{\gamma/2} \frac{K}{N} \frac{\|\vc{v}\|_1}{\|\vc{v}\|_{\infty}} 
  \end{align}
  is satisfied.
  Similarly, if $\sum_{i \in Q} \sum_{j \in R} u_i v_j \, \leq \, 
  e^{-\gamma} \frac{K^2}{N^2} \| \vc{u} \|_1 \| \vc{v} \|_1$
  then at least one of the inequalities
  \begin{align} \label{eq:smt.3}
  \sum_{i \in Q} \frac{u_i}{\|\vc{u}\|_{\infty}} & \leq
  e^{-\gamma/2} \frac{K}{N} \frac{\|\vc{u}\|_1}{\|\vc{u}\|_{\infty}} \\
  \label{eq:smt.4}
  \sum_{j \in R} \frac{v_i}{\|\vc{v}\|_{\infty}} & \leq
  e^{-\gamma/2} \frac{K}{N} \frac{\|\vc{v}\|_1}{\|\vc{v}\|_{\infty}} 
  \end{align}
  is satisfied. 
  To bound the probabilities of these events,
  let $X_1, X_2, \ldots, X_K$ be random variables obtained
  by drawing $K$ uniformly random samples without replacement
  from the multiset
  $\{ \frac{u_i}{\| \vc{u} \|_{\infty}} \, \mid \, 1 \le i \le n \}$
  and observe that $X_1 + \cdots + X_K$ and
  $\sum_{i \in Q}  \frac{u_i}{\|\vc{u}\|_{\infty}}$ are
  identically distributed. By \Cref{cor:permut-NA} the
  random variables $X_1,\ldots,X_K$ are negatively
  associated, by construction they are $[0,1]$-valued,
  and by linearity of expectation their sum has
  expected value $\frac{K}{N} \frac{\|\vc{u}\|_1}{\|\vc{u}\|_{\infty}}$.
  The assumption that $\left( \frac{\|\vc{u}\|_1}{\|\vc{u}\|_{\infty}}
  \right) \left( \frac{\|\vc{v}\|_1}{\|\vc{v}\|_{\infty}} \right)
  \geq CN$, combined with the inequality
  $\frac{\|\vc{v}\|_1}{\|\vc{v}\|_{\infty}} \leq N$,
  implies $\frac{K}{N}
  \frac{\|\vc{u}\|_1}{\|\vc{u}\|_{\infty}} \geq C K / N$.
  The Chernoff bound now implies that the probability of 
  inequality~\eqref{eq:smt.1} being satisfied is
  at most $e^{-\frac18 \gamma^2 C K / N}$, and the probability
  of inequality~\eqref{eq:smt.3} being satisfied is
  at most $e^{-\frac{1}{12} \gamma^2 C K / N}$.
  A similar argument using $K$ random variables
  drawn without replacement from the multiset
  $\{ \frac{v_i}{\| \vc{v} \|_{\infty}} \, \mid \, 1 \le i \le n \}$
  establishes that the probabilities of inequalities~\eqref{eq:smt.2}
  and~\eqref{eq:smt.4} being satisfied are bounded above
  by $e^{-\frac{1}{8} \gamma^2 C K / N}$ and
  $e^{-\frac{1}{12} \gamma^2 C K / N}$, respectively.
  The lemma now follows by applying the union bound.
\end{proof}

We are now ready to restate and prove \Cref{thm:tailbound}.

\thmtailbound*

\begin{proof}
  The first tail bound, inequality~\eqref{eq:tailbound-single},
  was already proven in \Cref{lem:tailbound-single}, so turn to
  proving~\eqref{eq:tailbound-double}. As in the proof
  of~\eqref{eq:tailbound-single}, we will be using
  the Birkhoff-von Neumann Theorem, Carath\'{e}odory's
  Theorem, and the union bound to reduce to the case where
  the doubly stochastic matrix $D$ is a permutation matrix.
  Accordingly, for the remainder of the proof we will be focused
  on a fixed permutation $\sigma$ and its associated
  bilinear form
  \[
    B_{\sigma}(\vc{x},\vc{y}) = \sum_{i : \sigma(i) \neq i} x_i y_{\sigma(i)} ,
  \]  
  and our goal will be to prove the tail bound~\eqref{eq:tailbound-double}
  when $B = B_{\sigma}$ and $M=1$.

  To start, we note that it is without loss of generality
  to assume that $\sigma$ has at most one fixed point. This
  is because if $F$ is the fixed-point set of $\sigma$ and
  $|F| > 1$, then we can compose $\sigma$ with a
  permutation whose fixed-point-set is the complement
  of $F$, to obtain a fixed-point-free permutation
  $\tilde{\sigma}$ that agrees with $\sigma$ on the
  complement of $F$. For every pair of non-negative vectors $\vc{x}, \vc{y}$
  we have $B_{\tilde{\sigma}}(\vc{x},\vc{y})
  \geq B_{\sigma}(\vc{x},\vc{y})$, so an upper
  bound on the probability of
  $B_{\tilde{\sigma}}(P \vc{u}, P \vc{v}) \, \geq \, e^{\gamma} \frac{\|\vc{u} \|_1 \, \|\vc{v}\|_1}{N}$
  will suffice to
  prove an upper bound on the probability of the same
  event for $B_{\sigma}$. Henceforth we will ignore the
  distinction between $\sigma$ and $\tilde{\sigma}$ and
  we'll simply assume that $\sigma$ has at most one
  fixed point. Let $N^*$ denote the complement of $F$
  in $[N]$, i.e.~$N^* = \{ i \mid \sigma(i) \neq i\}$.

  Define the \emph{cycle diagram} of $\sigma$ to be the
  directed graph with vertex set $N^*$ and edge set
  $\{(i,\sigma(i)) \mid i \in N^*\}$, which is a disjoint
  union of directed cycles. 
  The next step of the proof is to define a
  \emph{balanced 3-coloring} 
  $\chi : N^* \to \{0,1,2\}$
  of the cycle diagram of $\sigma$, by which we mean
  a proper 3-coloring such that each color is used
  at least   $\lfloor |N^*| / 3 \rfloor$ and at
  most $\lceil |N^*| / 3 \rceil$ times.
  We will then break down the bilinear form
  $B_{\sigma}$ as a sum 
  $\bsq[0] + \bsq[1] + \bsq[1]$ 
  where for $q \in \{0,1,2\}$,
  \[ \bsq(\vc{u}, \, \vc{v}) =
  \sum_{i : \chi(i) = q} u_i v_{\sigma(i)},
  \]
  and we will 
  prove exponential tail bounds
  for each of the quantities
  $\bsq(P \vc{u}, P \vc{v})$.
  The purpose of the 3-coloring
  is to allow us to condition on
  an event that breaks up dependencies
  such as the one identified in
  \Cref{rmk:tailbound-double-isnt-NA},
  enabling the use of negative association
  and Chernoff bounds.
  
  One can find a balanced coloring of the cycle
  diagram of $\sigma$ by a greedy strategy, 
  combining the following two simple
  observations.
  \begin{enumerate}
  \item \emph{Every directed 2-cycle
    has a balanced 3-coloring.}
    If the cycle has length $k$, then 
    color the $i^{\mathrm{th}}$ vertex
    of the cycle with $\chi(i) = i \pmod{3}$
    unless $k \equiv 1 \pmod{3}$, in
    which case the first $k-1$ vertices
    of the cycle are colored using
    $\chi(i) = i \pmod{3}$ and the last
    vertex is colored with the unique
    color that differs from both of its
    neighbors' colors.
  \item \emph{The disjoint union of two
    graphs with balanced 3-colorings
    also has a balanced 3-coloring.}
    If a balanced 3-coloring of a graph
    with $n$ vertices, let us say that
    a color is \emph{overused} if it is
    used more than $\lfloor n/3 \rfloor$
    times. If graph $G$ is the disjoint
    union of $G_0$ and $G_1$, each of
    which has a balanced 3-coloring,
    let $k_0$ and $k_1$ denote the
    number of overused colors in $G_0$
    and $G_1$, respectively. If
    $k_0 + k_1 \leq 3$ then we can
    recolor $G_1$ if necessary so that its set of
    overused colors is disjoint from
    the set of overused colors in $G_0$.
    The union of the two colorings is
    then a balanced 3-coloring of $G$.
    If $k_0 + k_1 > 3$ then it must be
    the case that $k_0 = k_1 = 2$, in
    which case we can recolor $G_1$ if
    necessary so that each $q \in \{0,1,2\}$
    is overused in at least one of
    $G_0, \, G_1$, and exactly one
    color is overused in both. The union
    of the two colorings is then a
    balanced coloring of $G$.
  \end{enumerate}
  Having defined the coloring $\chi$
  we now focus on one specific color
  $q \in \{0,1,2\}$ and aim to prove
  a tail bound for $\bsq(P \vc{u}, \, P \vc{v})$
  when $\tau$ is a uniformly random permutation
  and $P$ is the permutation matrix with
  $P_{i, \tau(i)} = 1$ for all $i$.
  To do so we will define $I = \chi^{-1}(\{q\})$
  to be the set of indices $i \in N^*$ whose
  color is $q$, and we will condition on the
  random variable $Z = \tau|_{N^* \setminus I}$,
  the restriction of $\tau$ to indices whose
  color differs from $q$. Some useful observations
  are the following.
  \begin{enumerate}[label=\textbf{[O\arabic*]}]
  \item \label{obs:q} The set $Q = \tau(I)$ is uniquely
    determined by $Z$: it is equal to the
    complement of $\tau(N^* \setminus I)$
    in $N^*$.
  \item \label{obs:r} The set $R = \tau(\sigma(I))$ is also
    uniquely determined by $Z$. In fact,
    because $\chi(\sigma(i)) \neq \chi(i)$
    for all $i$, the set $\sigma(I)$ must be
    disjoint from $I$, so the value of
    $\tau(i)$ for each $i \in \sigma(I)$
    is determined by $Z$.
  \item \label{obs:qr} Let $K_q = |I|$. Since $I$ and $\sigma(I)$
    are disjoint $K_q$-element subsets of $N^*$
    and $\tau$ is a uniformly random permutation
    of $N^*$, the joint distribution of the pair
    of sets $(Q,R) = (\tau(I),\tau(\sigma(I))$
    is the uniform distribution on ordered pairs
    of disjoint $K_q$-element subsets of $N^*$.
  \item \label{obs:bij} Conditional on $Z$, the restriction
    of $\tau$ to $I$ is a uniformly random
    bijection between $I$ and  $Q$.
  \end{enumerate}
  Define a random variable $Y$ by
  \[
  Y = \sum_{j \in Q} \sum_{k \in R} u_j v_k
  \]
  and observe that the value of $Y$ is determined by $Z$,
  since $Z$ determines the sets $Q$ and $R$. 
  By Observation~\ref{obs:bij} and linearity of expectation we have
  \begin{equation} \label{eq:expect-bsq}
    \E[\bsq(P \vc{u},P \vc{v}) \, \mid \, Z] =
    \sum_{i \in I} \E[u_{\tau(i)} v_{\tau(\sigma(i))} \, \mid \, Z] =
    \frac{1}{K_q} \sum_{j \in Q} \sum_{k \in R} u_j v_k = \frac{Y}{K_q} .
  \end{equation}
  Our goal now turns to bounding the probabilities of the following ``bad events.''
  \begin{align*}
    \evnt_1 & = \left\{ Y \leq e^{-\frac57 \gamma} \frac{K_q^2}{N^2} \| \vc{u} \|_1 \| \vc{v} \|_1 \right\} \\
    \evnt_2 & = \left\{ Y \geq e^{\frac47 \gamma} \frac{K_q^2}{N^2} \| \vc{u} \|_1 \| \vc{v} \|_1 \right\} \\
    \evnt_3 &= \left\{ \bsq(P \vc{u}, P \vc{v}) \geq e^{\frac37 \gamma} \frac{Y}{K_q} \right\} .
  \end{align*}
  First, using \Cref{lem:submatrix-tail} and the inequality
  $K/N \geq \frac1N \lfloor (N-1)/3 \rfloor \geq 1/4$, we
  have
  \begin{equation} \label{eq:evnt12}
    \Pr(\evnt_1) \leq 2 e^{-\frac{1}{12} \left( \frac{5\gamma}{7} \right)^2 \frac{C}{4} }
    < 2 e^{-\frac{1}{100} \gamma^2 C}, \qquad
    \Pr(\evnt_2) \leq 2 e^{-\frac18 \left( \frac{4 \gamma}{7} \right)^2 \frac{C}{4} }
    < 2 e^{-\frac{1}{100} \gamma^2 C} .
  \end{equation}
  Next we turn to bounding the conditional probability
  $\Pr(\evnt_3 \setminus \evnt_1 \, \mid \, Z=z)$, for each
  value $z$ in the support of $Z$. Recall
  that the value of $Y$ is determined by $Z$, and the
  event $\evnt_1$ is determined by the value of $Y$.
  Hence, the values $z$ in the support of $Z$ may be
  partitioned into two sets: $\mathcal{Z}_0$ is the
  set of $z$ such that $\evnt_1$ does not occur when
  $Z=z$, and $\mathcal{Z}_1$ is the set of $z$ such
  that $\evnt_1$ occurs when $Z=z$. Obviously, for
  $z \in \mathcal{Z}_1$, $\Pr(\evnt_1 \, \mid \, Z=z) = 1$
  so $\Pr(\evnt_3 \setminus \evnt_1 \, \mid \, Z=z) = 0$. 

  Assume henceforth that $z \in \mathcal{Z}_0$. Then
  $Y > e^{-\frac57 \gamma} \frac{K_q^2}{N^2} \| \vc{u} \|_1 \| \vc{v} \|_1$.
  Now, let $\vc{u}_Q$ denote the subvector of $\vc{u}$ indexed by the
  elements of $Q$, and let $\vc{v}_R$ denote the subvector of $\vc{v}$
  indexed by the elements of $R$. We will apply \Cref{lem:tailbound-single}
  to this pair of vectors. Note that
  $\|\vc{u}_Q\|_1 \| \vc{v}_R \|_1 = Y$. Hence,
  \begin{align*} 
    \frac{ \|\vc{u}_Q\|_1 \| \vc{v}_R \|_1 }{ \|\vc{u}_Q\|_{\infty} \| \vc{v}_R \|_{\infty} }
    & \geq \frac{Y}{ \|\vc{u}\|_{\infty} \| \vc{v} \|_{\infty} } \\
    & > e^{-\frac57 \gamma} \frac{K_q^2}{N^2}
    \frac{ \| \vc{u} \|_1 \| \vc{v} \|_1 }{ \|\vc{u}\|_{\infty} \| \vc{v} \|_{\infty} }
    & \geq e^{-\frac57 \gamma} \frac{K_q^2}{N^2} \cdot C N
    & > \frac{e^{-\frac57} K_q}{N} \cdot C K_q > \frac{e^{-\frac57}}{4} C K_q > \frac{C}{9} K_q .
  \end{align*}
  By Observation~\ref{obs:bij}, the random variable $\bsq(P \vc{u}, P \vc{v})$
  can be calculated by sampling a uniformly random bijection $\pi$
  between $Q$ and $R$ and computing the sum
  $\sum_{i \in Q} u_{i} v_{\pi(i)}$. Hence, by 
  \Cref{lem:tailbound-single},
  \begin{equation} \label{eq:evnt3}
    \Pr(\evnt_3 \, | \, Z = z \in \mathcal{Z}_0) \leq
    e^{-\frac12 (\frac37 \gamma)^2 \frac{C}{9} } <
    e^{-\frac{1}{100} \gamma^2 C} .
  \end{equation}
  Combining the cases $z \in \mathcal{Z}_0$ and $z \in \mathcal{Z}_1$,
  we have proven that $\Pr(\evnt_3 \setminus \evnt_1 \, | \, Z) < e^{-\frac{1}{100} \gamma^2 C}$
  pointwise. Hence,
  \[
  \Pr(\evnt_3 \setminus \evnt_1) =
  \E_{Z} \left[ \Pr(\evnt_3 \setminus \evnt_1 \, | \, Z) \right] < e^{-\frac{1}{100} \gamma^2 C} .
  \]
  Now, by the union bound, we find that
  \[
  \Pr(\evnt_2 \cup \evnt_3) \leq
  \Pr(\evnt_1 \cup \evnt_2 \cup \evnt_3)
  \leq \Pr(\evnt_1) + \Pr(\evnt_2) + \Pr(\evnt_3 \setminus \evnt_1)
  \leq 5 e^{-\frac{1}{100} \gamma^2 C} .
  \]
  On the complement of $\evnt_2 \cup \evnt_3$, we have the inequalities
  \begin{equation} \label{eq:good-q}
  \bsq(P \vc{u}, P \vc{v}) < e^{\frac35 \gamma} \frac{Y}{K_q} < e^{\frac35 \gamma} \cdot e^{\frac25 \gamma}
  \cdot \frac{1}{K_q} \cdot \frac{K_q^2}{N^2} \cdot \| \vc{u} \|_1 \| \vc{v} \|_1 =
  e^{\gamma} \frac{K_q}{N} \cdot \frac{  \| \vc{u} \|_1 \| \vc{v} \|_1 }{N} .
  \end{equation}
  With probability at least $1 - 15 e^{-\frac{1}{100} \gamma^2 C }$, the
  event $\evnt_2 \cup \evnt_3$ does not occur for any $q \in \{0,1,2\}$.
  In that case,
  \begin{equation} \label{eq:all-good-q}
    B_{\sigma}(P \vc{u}, P \vc{v}) =
    \bsq[0](P \vc{u}, P \vc{v}) + \bsq[1](P \vc{u}, P \vc{v}) + \bsq[2](P \vc{u}, P \vc{v})
    < e^{\gamma} \frac{K_0 + K_1 + K_2}{N} \cdot \frac{  \| \vc{u} \|_1 \| \vc{v} \|_1 }{N}
    \leq e^{\gamma} \frac{  \| \vc{u} \|_1 \| \vc{v} \|_1 }{N} .
  \end{equation}
  Hence, the negation of this inequality occurs with probability
  at most $15 e^{-\frac{1}{100} \gamma^2 C}$, as claimed. 
\end{proof}

\section{Upper Bound: Semi-Oblivious Design} \label{sec:semi-obliv-design}

In this section we prove \Cref{thm:tradeoff}.\ref{to.sorn.whp}, restated below.

\noindent \textbf{\Cref{thm:tradeoff}.\ref{to.sorn.whp}.} 
\textit{ Given any fixed throughput value $r\in(0,\frac{1}{2}]$, let $\semih = \semih(r) = \lfloor\frac{1}{r}-1\rfloor$, 	and let }
\begin{align*}
    L_{upp}(r,N) & = \semih N^{1/\semih} .
\end{align*}
\textit{Then assuming $\frac{1}{r}\not\in\mathbb{Z}$, there exists a family of distributions over semi-oblivious reconfigurable network designs for infinitely many network sizes $N$ which attains maximum latency $\bigotilde(L_{upp}(r,N))$ with high probability (and in expectation) over time-stationary demands, and achieves throughput $r$ with probability 1. }
\vspace{3mm}

Similar to \Cref{sec:upper-bound}, we will begin by constructing an SORN design $\mathcal{S}^0$ which is parameterized by $N$, $\semih$, and $C$, where $C$ is a parameter which we set during our analysis to achieve the appropriate tradeoffs between throughput and latency.
We will then analyze $\mathscr{S}_N(\semih,C)$, a distribution over all SORN designs $\mathcal{S}^\tau$ which are equivalent to $\mathcal{S}^0$ up to re-labeling of nodes, and show that it satisfies the conclusion of \Cref{thm:tradeoff}.3. 
Before we define $\mathcal{S}^0$, we first provide some intuition behind the design.

\begin{dfn} \label{def:c-h-frame}
	A {\em $\chframe$} in $\mathbb{F}_p^\semih$ is a sequence of $C(\semih+1)$ vectors for which the following property holds. Any set of $\semih$ distinct vectors 
	forms a basis over the vector space $\mathbb{F}_p^\semih$.
\end{dfn}

The ORN design described in \Cref{sec:upper-bound} was defined using phases of Vandermonde vectors. 
This was only done to achieve the property that any set of $\semih$ vectors, each chosen from a different phase block, formed a basis over $\mathbb{F}_p^\semih$. 
No other special property of Vandermonde vectors was required.
Thus, using any $\chframe$ gives the same throughput-latency tradeoffs found in \Cref{thm:tradeoff}.

In order to guarantee throughput $r$ rather than achieve it with high probability, we need to provide alternate routing paths in the low probability case that the network becomes congested.
We will do this by rotating through a series of different $\chframe$s, so that in an entire period of the schedule, each node is directly connected to most other nodes an equal number of times. 
Our alternate paths will then use a simple 2-hop Valiant load balancing (VLB) routing strategy.

\begin{restatable}{lemma}{lemframetwisting}\label{lem:frame-twisting}
Suppose $A\in\mathbb{F}_p^{\semih\times \semih}$ is an invertible matrix, and $\mathcal{V} = (v_1,v_2,\hdots,v_{C(\semih+1)})$ is a  $(C,g)$-constellation in $\mathbb{F}_p^\semih$. 
Then the sequence $A\mathcal{V} = (Av_1,Av_2,\hdots,Av_{C(\semih+1)})$ is also a $(C,g)$-constellation in $\mathbb{F}_p^\semih$.
\end{restatable}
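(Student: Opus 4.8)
The plan is to exploit the single fact that an invertible linear map sends bases to bases, so it carries any configuration realizing the uniform matroid of rank $\semih$ to another such configuration. Concretely, I would argue as follows. Let $w_1,\dots,w_{\semih}$ be any $\semih$ distinct vectors appearing in the sequence $A\mathcal{V}$, and write $w_k = A v_{i_k}$ for suitable indices. Since $A$ is invertible it is injective, so $v_{i_1},\dots,v_{i_{\semih}}$ are themselves $\semih$ distinct vectors appearing in $\mathcal{V}$. By the defining property of a $(C,\semih)$-constellation, $v_{i_1},\dots,v_{i_{\semih}}$ form a basis of $\mathbb{F}_p^{\semih}$, hence are linearly independent. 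Applying $A$ preserves linear independence: from $\sum_k c_k w_k = 0$ we get $\sum_k c_k v_{i_k} = A^{-1}\!\left(\sum_k c_k w_k\right) = 0$, forcing every $c_k = 0$. Thus $w_1,\dots,w_{\semih}$ are $\semih$ linearly independent vectors in the $\semih$-dimensional space $\mathbb{F}_p^{\semih}$, and therefore a basis. Since the choice of the $\semih$ distinct vectors was arbitrary, $A\mathcal{V}$ satisfies \Cref{def:c-h-frame}.

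There is essentially no obstacle here; the only bookkeeping worth stating explicitly is the passage between ``distinct vectors'' and ``distinct indices,'' which is exactly where injectivity of $A$ is used, so that $\semih$ distinct entries of $A\mathcal{V}$ pull back to $\semih$ distinct entries of $\mathcal{V}$ and the constellation hypothesis applies verbatim. Everything else is the standard observation that $GL_{\semih}(\mathbb{F}_p)$ acts on the set of $(C,\semih)$-constellations, which is all this lemma records and what will later license ``twisting'' a fixed constellation into a time-varying family without losing the basis property that the routing analysis of \Cref{sec:upper-bound} depends on.
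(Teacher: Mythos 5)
Your proof is correct and takes essentially the same route as the paper's, which argues by contradiction: a linear dependence among $Av_{i_1},\dots,Av_{i_\semih}$ would pull back under $A^{-1}$ to a dependence among $v_{i_1},\dots,v_{i_\semih}$, contradicting the constellation property of $\mathcal{V}$, whereas you run the same invertibility argument directly. (One minor bookkeeping note: distinctness of the vectors $Av_{i_k}$ already forces distinctness of the $v_{i_k}$ simply because $A$ is a function; injectivity of $A$ is not actually needed for that step.)
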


\begin{proof}
	Suppose not, that there exists some set of vectors $w_{i_1},\hdots w_{i_\semih}$ each from different blocks of $A\mathcal{V}$ which are linearly dependent. 
	Then WLOG there exists constants $\alpha_1,\hdots,\alpha_{\semih-1}$ such that $\alpha_1w_{i_1}+\hdots+\alpha_{\semih-1}w_{i_{\semih-1}} = w_{i_\semih}$.
	Then $\alpha_1Av_{i_1}+\hdots+\alpha_{\semih-1}Av_{i_{\semih-1}} = Av_{i_\semih}$ for vectors $v_{i_1},\hdots,v_{i_\semih}$ each from different blocks of $\mathcal{V}$. This is a contradiction due to distributivity of matrix and vector multiplication, and because $A$ is invertible.
\end{proof}

\subsection{Connection Schedule}\label{sec:semi-conn-sched}

We now move to defining the connection schedule of $\mathcal{S}^0$.
Consider the set of all diagonal invertible matrices $\mathcal{M}$, and let two matrices $M_1,M_2$ be related by $\sim$ if they are scalar multiples of one another. That is, if there is some scalar $a\in\mathbb{F}_p$ such that $M_1=aM_2$.
Let $\mathcal{A}\subset\mathcal{M}$ contain one representative from each of the equivalence classes of $\sim$. (Note that therefore, $|\mathcal{A}|=(p-1)^{g-1}$.)
Also let $\mathcal{V}$ be any sequence of $C(\semih+1)$ distinct Vandermonde vectors not including the vector $(1,0,\hdots,0)$. 
Order $\mathcal{V}$ arbitrarily, so that $\mathcal{V} = \{ \bm{v}_0,\bm{v}_1,\hdots,\bm{v}_{C(\semih+1)-1} \}$.

Then by \Cref{lem:frame-twisting}, $A\mathcal{V}$ is a $\chframe$ for any matrix $A\in\mathcal{A}$.
Order the set of matrices $\mathcal{A}$ arbitrarily, so that $\mathcal{A} = \{A_0,A_1,\hdots,A_{(p-1)^{\semih-1}-1}\}$. 
We rotate through the $\chframe$s formed by matrices in $\mathcal{A}$ to achieve our connection schedule.

More formally, we set the period length of the
  schedule to be $T = (p-1)^{\semih-1} C (\semih+1) (p-1) = (p-1)^{\semih} C (g+1) < C(g+1) N$, and we identify each congruence class
    $k \pmod{T}$ with 
 a constellation number $f$, a phase number $x$ and a scale factor $s$,
for which $0\leq f\leq (p-1)^{\semih-1}-1$, $0\leq x < C (\semih+1)$, and $1\leq s<p$, such that $k = C(\semih+1)(p-1) f + (p-1) x + s - 1$. 
It is useful to think of timesteps as 3-tuples, $k=(f,x,s)$, so we will sometimes abuse notation and refer to timestep $(f,x,s)$ in the sequel, when we mean $k = C(\semih+1)(p-1) f + (p-1) x + s - 1$.
The connection schedule of $\mathcal{R}^0$, during timesteps $t \equiv k \pmod{T}$, uses permutation
$\pi_{k}^0(a) = a + s A_{f}\bm{v}_x$,
where $f,x$ and $s$ are the constellation number, phase number, and scale associated to $k$.

As described above, $\mathscr{S}_N(\semih,C)$ is a distribution over all SORN designs $\mathcal{S}^\tau$ which are equivalent to $\mathcal{S}^0$ up to re-labeling.
When we sample a random design $\mathcal{S}^\tau$, we sample a uniformly random permutation of the node set $\tau:\mathbb{F}_p^h\rightarrow \mathbb{F}_p^\semih$, producing the schedule $\pi_{k}^\tau(a) = \tau^{-1} \Big(\pi_{k}^0\big(\tau(a) \big) \Big)$.
Note that, for every edge from node $a$ to node $\pi_t^0(a)$ in $\mathcal{S}^0$, there is a unique equivalent edge from $\tau(a)$ to $\tau(\pi_t^0(a))$ in $\mathcal{S}^\tau$.

\subsection{Routing Protocol} \label{sec:semi-routing-scheme}

The routing protocol $\{S_\sigma^0 : \sigma \mbox{ permut on } [N]\}$ will, for each $\sigma$, use one of two types of routing paths.
The first type is the $(\semih+1)$-hop paths that we wish to route on. 
For most $\sigma$, routing on these paths will not overload any edges in the network. Thus, for those $\sigma$, $S_\sigma^0$ will include only those such paths.

However, with low probability over $\sigma$, routing on these paths will cause too much congestion on some edge in the network to be used. 
In this case, we will designate an alternate set of paths for $S_\sigma^0$ to use.
The alternate set of paths will take only 2 hops in the network, and will suffer significantly higher maximum latency.
However, we will show that since this is a low probability event over choice of $\sigma$, this will not meaningfully increase our average latency.

To route from node $a$ to node $b$ starting at timestep $t$, first delay until a new $\chframe$ $A\mathcal{V}$ begins.

\textbf{$(\semih+1)$-hop paths.} \hspace{1mm} 
In this case, we use the same distribution over routing paths as in \Cref{sec:orn-rout-scheme}, when considering the set of $C(g+1)$ phases all belonging to the $\chframe$ beginning after time $t$.
Due to the added delay, paths of this type have maximum latency $2C(\semih+1)N^{1/h}$, instead of the maximum latency cited in \Cref{sec:orn-rout-scheme}.

\textbf{2-hop paths.} \hspace{1mm}
To describe the distribution over 2-hop paths, first consider the following.
Given a fixed Vandermonde vector $\bm{v}\in\mathcal{V}$, consider the set of edges formed by $a \rightarrow a + s A \bm{v} = b$ for all scalar factors $s$ and matrices $A\in\mathcal{A}$.
Note that an edge between any node pair $a,b$ for which the vector $b-a$ has only non-zero coordinates appears exactly once in this set.
This is because $A\in\mathcal{A}$ contains all invertible diagonal matrices which are not scalar multiples of each other. 
Additionally, an edge between $a,b$ never appears if the vector $b-a$ has any coordinates equal to zero. (Recall the vector $ (1,0,\hdots,0) \not\in \mathcal{V}$.)
Then across the entire period, an edge between any node pair $a,b$ for which the vector $b-a$ has only non-zero coordinates appears exactly $C(\semih+1)$ times, once for each $\bm{v} \in \mathcal{V}$.

Consider the following random process for choosing a 2-hop path from $a$ to $b$.
Uniformly at random, choose a node $b'$ for which both $b'-a$ and $b-b'$ have only non-zero coordinates.
Also uniformly at random, choose Vandermonde vectors $v_a, v_b \in \mathcal{V}$. 
Compute the unique invertible diagonal matrices $A_a, A_b\in\mathcal{A}$ and scalar factors $s_a, s_b\in\{1, \hdots, p-1\}$ for which $b'-a = s_a A_a v_a$ and $b-b'=s_b A_b v_b$.
Over the next full period of the schedule, or $(p-1)^{\semih-1} C(\semih+1) (p-1)$ timesteps, take the direct hop from $a$ to $b'$ which appears during the $\chframe$ $A_a\mathcal{V}$. 
Wait for the period to finish. 
Then during the next period, take the hop from $b'$ to $b$ which appears during the $\chframe$ $A_b\mathcal{V}$.

Note that paths of this type always take both hops during consecutive distinct periods, or iterations, of the schedule.
Thus, paths of this type will have maximum latency $2(p-1)^{\semih} C(\semih+1) + C(\semih+1)(p-1) \leq C(\semih+1)N^{1/h} + 2C(\semih+1)N \leq \bigotilde(N)$.

If routing $rD_\sigma$ on $(\semih+1)$-hop paths does not overload edges in the network, then $S_\sigma$ routes all demand between $a,\sigma(a)$ pairs on $(\semih+1)$-hop paths. 
Otherwise, if routing $rD_\sigma$ on $(\semih+1)$-hop paths would overload some edge in the network, then $S_\sigma$ routes all demand between $a,\sigma(a)$ pairs on 2-hop paths.

As written here, $S_\sigma^0$ must make one choice for all timesteps $t$: to either route on $(\semih+1)$-hop paths or 2-hop paths.
In \Cref{app:k-frame-cor}, we discuss how to analyze a design which allows $S_\sigma^0$ to route flow on a combination of $(\semih+1)$-hop and 2-hop paths, depending on starting timestep $t$.

To route over $\mathcal{S}^\tau$ for general $\tau$, note that the edges of $\mathcal{S}^\tau$ are in a bijection with $\mathcal{S}^0$. 
Thus, any path from node $a$ to node $b$ in $\mathcal{S}^\tau$ has a unique equivalent path from $\tau(a)$ to $\tau(b)$ in $\mathcal{S}^0$.
To define the routing protocol $\{S_\sigma^\tau:\sigma \mbox{ permut on } [N]\}$ in $\mathcal{S}^\tau$, simply apply this bijection to the routing paths from $\tau(a)$ to $\tau(\sigma(a))$ in $\{S_\sigma^0:\sigma \mbox{ permut on } [N]\}$.

\subsection{Throughput-Latency Tradeoff}\label{sec:semi-design-tradeoff}

\begin{restatable}{theorem}{thmsemilowprob}\label{thm:semi-low-prob}
Given a fixed throughput value $r$, let $\semih = \semih(r) = \lfloor\frac{1}{r}-1\rfloor$ and $\eps = \eps(r) = \semih + 1 - (\frac{1}{r}-1)$, and assume $\eps\neq 1$.
As $N$ ranges over the set of prime powers $p^{\semih}$ for primes $p$ exceeding $\max \left\{ C(\semih+1), 2 + \frac{2}{1-\eps}, \frac{g+3}{\eps}-2, \frac{2-\delta}{1-\delta} \right\}$ for $\delta=\frac{(g+1)^{1/g}} {(g+2-\eps)^{1/g}}$,
 let $\gamma=\ln \left( \frac{\semih + 2 - \eps}{\semih+1} \right)$ and $C=\frac{\log\log N}{\gamma^2}\ln(N)$, and let
\[ L_{upp} = gN^{1/g} \]
Then:
\begin{enumerate}
	\item the fixed SORN design $\mathcal{S}^0$ 
	guarantees throughput $r$ (with respect to stationary demands), and achieves maximum latency $\bigotilde(L_{upp})$ with high probability under the uniform distribution.
	\item the family of distributions $\mathscr{S}_N(\semih,C)$ guarantees throughput $r$, and achieves maximum latency $\bigotilde(L_{upp})$ with high probability.
\end{enumerate}
\end{restatable}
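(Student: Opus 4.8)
The plan is to prove the two assertions --- guaranteed throughput $r$ (with probability $1$) and maximum latency $\bigotilde(L_{upp})$ with high probability --- essentially separately, since the $(\semih+1)$-hop ``primary'' paths of \Cref{sec:semi-routing-scheme} are what give the latency bound, while the $2$-hop ``failover'' paths are what give the deterministic throughput guarantee.

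\textbf{Throughput.} For the probability-$1$ guarantee it suffices, by \Cref{def:guaranteed-thr} (and, for general time-stationary demands, Birkhoff--von Neumann together with the fact that a convex combination of individually feasible flows at total weight $\le 1$ is feasible), to show that for \emph{every} permutation $\sigma$ the induced flow $f(S^0_\sigma, rD_\sigma)$ is feasible. If the $(\semih+1)$-hop paths already route $rD_\sigma$ feasibly, $S^0_\sigma$ uses that flow; otherwise it uses the $2$-hop failover, and I would show the failover is feasible for \emph{all} $\sigma$. The key input is the equal-coverage property of \Cref{sec:semi-conn-sched}: since $\mathcal A$ holds one representative of each invertible diagonal matrix modulo scaling, every vector $w$ with all-nonzero coordinates is realized as $sA\bm v$ exactly once for each $\bm v \in \mathcal V$, so each logical direct hop $a \to a+w$ appears exactly $C(\semih+1)$ times per period, once per $\chframe$. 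The failover is thus a Valiant-load-balancing scheme on the near-complete graph of all-nonzero-difference pairs, and a steady-state calculation --- carefully tracking the initial delay to a $\chframe$ boundary, one full period for the spraying hop, and one more for the direct hop --- shows that each physical-edge occurrence receives spraying-hop (resp.\ direct-hop) load at most $r\,(\tfrac{p-1}{p-2})^{\semih}$: the intermediate $b'$ is uniform over a set of size $N_a \ge (p-2)^{\semih}$, the choice of $\bm v_a \in \mathcal V$ spreads a logical hop uniformly over its $C(\semih+1)$ occurrences, and the surviving factor counts the $(p-1)^{\semih-1}$ choices of which period the hop lands in. Hence the total edge load is at most $2r\,(\tfrac{p-1}{p-2})^{\semih} < 1$ once $p$ is large (which is implied by $p > C(\semih+1)$ for $N$ large). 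This transfers to every $\mathcal S^\tau$, hence to $\mathscr S_N(\semih,C)$, via the edge bijection $\mathcal S^\tau \leftrightarrow \mathcal S^0$.

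\textbf{Latency.} Here I would show that with high probability --- over $\sigma$ for the fixed design $\mathcal S^0$, and over the relabeling $\tau$ (after reducing each Birkhoff component of a worst-case time-stationary demand to $\mathcal S^0$ under a randomly conjugated permutation demand) for $\mathscr S_N(\semih,C)$ --- routing on the $(\semih+1)$-hop paths is feasible, so $S_\sigma$ uses those paths and every routing path has latency at most $2C(\semih+1)N^{1/\semih} = \bigotilde(L_{upp})$; on the remaining event, of probability at most $C_d/N^d$ for every $d$, the scheme reverts to $2$-hop paths of latency $\bigotilde(N)$, which also yields the in-expectation bound since that event contributes only $\bigotilde(1)$ to the expected maximum latency. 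The feasibility argument follows the proof of \Cref{thm:fixed-design-tradeoff} closely: by \Cref{lem:frame-twisting} each $A\mathcal V$ is a $\chframe$, so the pseudo-path counts and linear-independence facts used there (any $\semih$ distinct frame vectors form a basis; any $2q \le \semih$ of them are independent) hold verbatim inside a single $\chframe$. The one genuinely new point is that a primary path lies entirely within one $\chframe$ and occupies its phase blocks $1,\dots,\semih+1$ in order, so every physical edge carries flow of only \emph{one} hop type: if $e$ lies in block $1$ or block $\semih+1$ its load is deterministic and bounded by $\tfrac{(\semih+1)(p-1)}{p-2}r < 1$ for $p$ large, and if $e$ lies in a middle block $2 \le \beta \le \semih$ then, after the same reduction as in \Cref{sec:upper-bound} (the relevant injection window is now a whole $\chframe$, which accounts for an extra factor $\semih+1$ in the $\ell_1$ norms), its load is a bilinear form $B_\pi(\vc u, P\vc v)$ (resp.\ $B(P\vc u, P\vc v)$) with $\|\vc u\|_1\|\vc v\|_1/N < (\semih+1)r$ and $(\|\vc u\|_1/\|\vc u\|_\infty)(\|\vc v\|_1/\|\vc v\|_\infty) \ge \tfrac12 CN$. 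Then \Cref{thm:tailbound} (inequality~\eqref{eq:tailbound-single} for the fixed design, \eqref{eq:tailbound-double} otherwise) bounds the probability that this load exceeds $e^{\gamma}$ times its expectation, and with $\gamma = \ln\tfrac{\semih+2-\eps}{\semih+1}$ and $1/r = \semih+2-\eps$ we get $e^{\gamma}(\semih+1)r = 1$, so the load stays below $1$. A union bound over the $\bigotilde(N^2)$ middle-block physical edges (and the $\le N^2$ Birkhoff components, in the second case), with $C = \tfrac{\log\log N}{\gamma^2}\ln N$, drives the failure probability below $C_d/N^d$ for every $d$.

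\textbf{Main obstacle.} I expect the $2$-hop failover analysis to be the crux: unlike in \Cref{sec:upper-bound}, one must account carefully for the path timing to see that, in steady state, a given physical-edge occurrence is fed by exactly one $\chframe$'s worth of injection times, and that the $1/N_a$ and $1/C(\semih+1)$ dilution factors are compensated by the $(p-1)^{\semih-1}$ choices of landing period --- it is this bookkeeping, together with equal coverage, that produces the clean $\approx r$ per-hop load. The adaptation of the primary-scheme accounting is comparatively routine, but the observation that each physical edge now carries a single hop type (which is exactly why the target congestion here differs from the one in \Cref{thm:fixed-design-tradeoff}) is the conceptual change that must be stated and used correctly.
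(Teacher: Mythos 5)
Your proposal follows essentially the same route as the paper's proof: it verifies that the 2-hop failover guarantees throughput $r$ for every permutation via the equal-coverage counting argument over intermediate nodes whose coordinate differences are all non-zero, and it obtains the latency bound by rerunning the bilinear-form/Chernoff analysis of \Cref{thm:fixed-design-tradeoff} inside a single $\chframe$, using precisely the paper's key observation that, because flow waits for a fresh constellation, each physical edge carries only one hop type, so the target congestion becomes $(\semih+1)e^{\gamma}r=1$ with $\gamma=\ln\frac{\semih+2-\eps}{\semih+1}$, followed by a union bound with $C=\frac{\log\log N}{\gamma^2}\ln N$ and the transfer to $\mathcal{S}^\tau$ via the edge bijection. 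The only deviations are constant-level bookkeeping in the failover analysis (you sum the first- and second-hop contributions to get a total of $2r\left(\tfrac{p-1}{p-2}\right)^{\semih}$, whereas the paper bounds each hop's load separately against the explicit thresholds on $p$ in the theorem statement), which does not change the structure or validity of the argument.
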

\vspace{3mm}

Note that if $\eps=1$, then $\frac{1}{r}\in\mathbb{Z}$, and there do not exist primes $p$ for which $p\geq 2+\frac{2}{1-\eps}$. Thus, we condition against $\eps=1$.

Both parts of this theorem will be proven by focusing on the probability that $S_\sigma^0$ must deviate from sending all flow on $(g+1)$-hop paths to sending all flow on 2-hop paths.
This is directly correlated with when congestion occurs on physical edges in the design $\mathcal{S}^0$, if we were to always send flow on $(g+1)$-hop paths.
We note the similarities between $\mathcal{S}^0$ and $\mathcal{R}^0$ from \Cref{sec:upper-bound}, and apply the same exponential tail bounds of bilinear sums to get our result.

\begin{proof}

First, let us confirm that the 2-hop ``failover scheme'' of $\mathcal{S}^{\tau}$ guarantees throughput $r$.
Fix some permutation demand $D_\sigma$ and an edge $e$, and consider for each demand pair $i,\sigma(i)$ how much much flow is crossing edge $e$ due to $i,\sigma(i)$ traveling on 2-hop paths. 
If 1st hop flow crosses edge $e$ from $i$ to $\sigma(i)$, then it must be the case that $tail(e)=i$ and both $head(e)-i$ and $\sigma(i)-head(e)$ have only non-zero coordinates.
Similarly, if 2nd hop flow crosses edge $e$ from $i$ to $\sigma(i)$, then $head(e)=\sigma(i)$ and both $tail(e)-i$ and $\sigma(i)-tail(e)$ have only non-zero coordinates.

Each demand pair $i,\sigma(i)$ contributes $rC(g+1)(p-1)^g$ total flow per period. 
For any node pair $i,\sigma(i)$, there are at least $(p-2)^g$ different nodes $b$ for which $b-i$ and $\sigma(1)-b$ both have only non-zero coordinates. 
And for each of these nodes $b$, there are exactly $C$ different phases which connect $i$ to $b$, and exactly $C$ different phases which connect $b$ to $\sigma(i)$.
Thus, the amount of first-hop flow traversing edge $e$ is no more than
$\frac{rC(g+1)(p-1)^g}{C(p-2)^g}$.
This is no more than 1 when $p\geq \frac{2-\delta}{1-\delta}$ for $\delta=\frac{(g+1)^{1/g}} {(g+2-\eps)^{1/g}}$, which we condition on in the statement of the theorem.

Thus, we focus on showing that $\mathcal{S}^0$ sends flow on $(g+1)$-hop paths with high probability over the uniform distribution.

Like before, we may assume without loss of generality that
  the demand matrix $D(t)$ is doubly stochastic
  for all $t$.

We first consider the failure probability of edges within each $\chframe$ individually.
Fix an edge $e$ and $0\leq q\leq \semih$, and consider the amount of flow traversing edge $e$ traveling on paths where edge $e$ occurs in the $(q+1)$-th phase block of the flow path. 

Note that, unlike in the proof of \Cref{thm:fixed-design-tradeoff}, edges $e$ that appear in the $(q+1)$th phase block of a $\chframe$, for $0\leq q\leq \semih$, will {\em only} have $(q+1)$-th hop flow traversing $e$, due to delaying flow before routing by whole $\chframe$s instead of single phase blocks.
Then the total amount of $(q+1)$-th hop flow traversing edge $e$ equals the total amount of any-hop flow traversing edge $e$. 
First we examine $q=0$.
First-hop flow traversing edge $e$ originates at source node $\tl(e)$ during the constellation preceding the one to which $e$ belongs.
There are $C(g+1)(p-1)$ time steps during that phase block, and $r$ units of flow per time step originate at $\tl(e)$. 
Each unit of flow is divided evenly among
  a set of at least $(p-2) C^{\semih+1}$ pseudo-paths, at
  most $C^{\semih}$ of which begin with edge $e$ as their
  first hop. 
(After fixing the first hop and the destination of a $(\semih+1)$-hop pseudo-path, the rest of the path is uniquely determined by the $g$-tuple of phases $x_2,\ldots,x_{g+1}$.) 
Hence, of the $r C (g+1) (p-1)$ units of flow that could traverse $e$ as their first hop, the fraction that actually do traverse $e$ as their first hop is at most $\frac{C^{\semih}}{(p-2) C^{\semih+1}}$.
Consequently, for an edge $e$ occurring in the first phase block of a $(C,g)$-constellation, the amount of first-hop flow on $e$ is bounded above by $\frac{r C (g+1) (p-1) \cdot C^{\semih}}{(p-2) C^{\semih+1}} = \left( \frac{p-1}{p-2} \right) (g+1) r .$
(Note that this is not a probabilistic statement; the upper bound on first-hop flow holds with probability 1.)
A symmetric argument shows that for an edge $e$ occurring in the last phase block of a $(C,g)$-constellation, the amount of last-hop flow on $e$ is bounded above by $\left( \frac{p-1}{p-2} \right) (g+1) r $ as well.

Now suppose $1\leq q\leq \semih-1$, and let $Y_i$ be the random variable realizing the amount of $(q+1)$-th hop flow traversing edge $e$ due to source node $i$, normalized by $\frac{1}{g+1}$. Clearly, the total amount of $(q+1)$-th hop flow traversing $e$ will be $(g+1) \sum_i Y_i$.
 The variables $Y_i$ act exactly as the random variables $X_i$ in \Cref{sec:ub-tradeoff}, in the proof of \Cref{thm:orn-tradeoff}.
 Therefore, the same tail bound conclusions about their sum
 are applicable.

Therefore, over the uniform distribution for the fixed design $\mathcal{S}^0$, and for the family of distributions $\mathscr{S}_N(\semih,C)$, we have
\begin{align*}
	& \Pr[e \text{ has }\geq (\semih+1) e^{\gamma} r \text{ flow when routing } (g+1) \text{-hop paths}] \leq 15 N^2 e^{- \frac1{200} \gamma^2 C } \\
	\implies & \Pr[\text{any edge } e \text{ has }\geq (\semih+1)e^\gamma r \text{ flow when routing } (g+1) \text{-hop paths}]\\
	&\leq (p-1)^{g-1}C(g+1)(p-1)N \cdot 15 N^2 \left( e^{-\frac{1}{200}\gamma^2} \right)^{C} \\
	&\leq 15 N^4 (g+1) \frac{\log\log N}{\gamma^2}\ln(N) e ^{-\frac{1}{200} \log\log N\ln(N)} \\
	&\leq \left( 15N^4 (g+1) \frac{\log\log N\ln(N)}{\gamma^2} \right)N^{- \frac{1}{200} \log\log N} \\
	&\leq \bigo\left(\frac{1}{\gamma^2 N^d}\right) \text{ for any constant }d.
\end{align*}

Finally, we need to show that if none of the bad events as described above occur, if every edge has at most $e^{\gamma} r$ $(q+1)$-th hop flow for $1 \leq q \leq g-1$, then no edge will be overloaded.

First, note that the amount of flow traversing edges $e$ during the first and last phase blocks of any constellation will be at most $\frac{p-1}{p-2}(g+1)r$. This is no more than 1 when $p\geq \frac{g+3}{\eps}-2$, which we conditioned on in the statement of the theorem.

Next, note that assuming no bad events occur, the amount of flow traversing edge $e$ occurring during any other phase block of any constellation must be at most
\begin{align*}
	(g+1)e^\gamma r = (g+1) \frac{g+2-\eps}{g+1} \frac{1}{g+2-\eps} = 1 .
\end{align*}

\end{proof}

Note that \Cref{thm:tradeoff}.3 is a direct corollary of \Cref{thm:semi-low-prob}.2.

\subsection{Provably Separating the Capabilities Between ORNs and SORNs} \label{sec:semi-obliv-provable-sep}

In this section we show that semi-oblivious routing has a provable asymptotic advantage over oblivious routing in reconfigurable networks. 
In order to do so, we must compare the guaranteed throughput versus latency tradeoffs achieved by the family of SORN designs $\mathscr{S}_N(\semih,C)$ described above and distributions over ORN designs.
We will show below that our family of SORN designs $\mathscr{S}_N(\semih,C)$ has a provable asymptotic advantage over ORNs in {\em average latency}. To do so, we provide the following lower bound on average (expected) latency of distributions over ORN designs.

\begin{restatable}{theorem}{thmlbinformal}\label{thm:obliv-avg-lat-lb}
	Consider any constant $r \in (0,\frac{1}{2}].$ Let $h = h(r) = \floor{\frac{1}{2r}}$ and $\eps_o = \eps_o(r) = h + 1 - \frac{1}{2r}$,
	and let $L_{obl}(r,N)$ be the function 
	
	\[ L_{obl}(r,N) = \eps_o(\eps_o N)^{1/h} + N^{1/(h+1)}  . \]
	
	Then for every $N > 1$ and every distribution of ORN designs $\mathscr{R}$ on $N$ nodes that guarantees throughput $r$, the expected \textbf{average} latency of $\mathcal{R}\sim\mathscr{R}$ is at least $\Omega(L_{obl}(r,N))$. 
\end{restatable}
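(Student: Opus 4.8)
The plan is to adapt the linear‑programming‑duality lower bound of \cite{stoc-paper} in two respects: so that it bounds \emph{average} rather than \emph{maximum} latency, and so that it applies to a distribution over ORN designs rather than a single one. Since a throughput‑$r$‑guaranteeing distribution $\mathscr{R}$ is supported on designs that each guarantee throughput $r$, we have $\mathbb{E}_{\mathcal{R}\sim\mathscr{R}}[L_{avg}(\mathcal{R})] \ge \min_{\mathcal{R}} L_{avg}(\mathcal{R})$, so it suffices to prove $L_{avg}(\mathcal{R}) = \Omega(L_{obl}(r,N))$ for an arbitrary fixed design $\mathcal{R} = (\bm{\pi}, R)$ that guarantees throughput $r$.

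The collection of path‑distributions $\{R(a,b,t)\}$ is a feasible point of the following ``oblivious minimum‑average‑latency'' LP over the connection schedule $\bm{\pi}$: minimize $\sum_{a,b,t}\sum_{P} x_{a,b,t}(P)\,L(P)$ over $x \ge 0$, subject to $\sum_P x_{a,b,t}(P) = 1$ for every $(a,b,t)$ and, for every permutation $\sigma$ and every physical edge $e$, the congestion constraint $r\sum_{a,t}\sum_{P \ni e} x_{a,\sigma(a),t}(P) \le 1$ (the inner sum over paths $P$ from $(a,t)$ to $\sigma(a)$ that traverse $e$), which is exactly the ``$f(R,rD_\sigma)$ feasible'' requirement. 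Hence $L_{avg}(\mathcal{R}) \ge \frac{1}{N^2 T}$ times the LP optimum, and by weak duality it is enough to exhibit a feasible solution of the dual with large objective. The dual has a free reward $\phi_{a,b,t}$ per source–destination–time triple and a nonnegative price $\psi_{\sigma,e}$ per (permutation, physical edge) pair; feasibility asserts $\phi_{a,b,t} \le L(P) + r\sum_{e \in P \cap \ephys}\sum_{\sigma:\,\sigma(a)=b}\psi_{\sigma,e}$ for every path $P$ from $(a,t)$ to $b$, and the objective is $\sum_{a,b,t}\phi_{a,b,t} - \sum_{\sigma,e}\psi_{\sigma,e}$.

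Following \cite{stoc-paper} I would use a \emph{structured} dual solution: fix an edge set $E_0 \subseteq \ephys$ and a scalar $\psi^\star \ge 0$, set $\psi_{\sigma,e} = \psi^\star$ when $e \in E_0$ (for all $\sigma$) and $\psi_{\sigma,e}=0$ otherwise, and set each $\phi_{a,b,t}$ to its largest feasible value $\min_P [\,L(P) + r\psi^\star (N-1)!\,|P\cap E_0|\,]$. Tuning $\psi^\star$ to the value where its marginal benefit vanishes — the point at which the cut condition $r\sum_{a,b,t}\min_P|P\cap E_0| \le N|E_0|$ (a consequence of the existence of any throughput‑$r$ oblivious scheme on $\bm\pi$) becomes tight along the currently‑optimal paths — makes the $\psi$‑terms cancel, so the dual objective collapses to $\sum_{a,b,t}\lambda^\star_{E_0}(a,b,t)$, the sum over triples of the latency of the $a\!\to\!b$ path that optimally trades its own length against the number of $E_0$‑edges it uses. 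It then remains to choose $E_0$ and to lower‑bound this sum, which uses the elementary counting fact (already noted in the paper) that, because a routing path out of $(a,t)$ is determined by its set of physical‑hop times, at most $\sum_{j=0}^{k}\binom{\lambda}{j}$ nodes lie within latency $\lambda$ and $k$ physical hops of $(a,t)$. Balancing the two regimes $k=h$ (reaching about $\eps_o N$ nodes) and $k=h+1$ (reaching all $N$ nodes) should produce exactly the two summands $\eps_o(\eps_o N)^{1/h}$ and $N^{1/(h+1)}$ of $L_{obl}(r,N)$.

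The main obstacle, and the part most dependent on \cite{stoc-paper}, is choosing $E_0$ and $\psi^\star$ so as to capture the Valiant–Brebner factor of two. Obliviousness enters through the $(N-1)!$ multiplier on the edge prices: because one routing scheme must serve every permutation, each $E_0$‑edge is effectively charged against all $(N-1)!$ demand pairs that could route through it, and this is precisely what turns the naive demand‑aware ``speed‑of‑light'' exponent $\lfloor 1/r\rfloor$ into $h = \lfloor 1/(2r)\rfloor$; a uniform price over all edges, or a demand‑aware analysis, only recovers the weaker bound. Getting $E_0$ right and verifying cut‑condition tightness is the delicate computation. A secondary subtlety is the passage from maximum to average latency: it must be done by keeping latency in the LP objective as above, rather than by truncating to near‑average‑latency paths via a Markov argument applied to a single permutation — the latter would destroy the worst‑case‑permutation structure of the dual and only reprove the weaker high‑probability bound (part~\ref{to.orn.whp.lb} of \Cref{thm:tradeoff}).
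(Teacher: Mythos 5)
Your reduction to a fixed design, your LP (minimize total latency subject to per-permutation capacity constraints), and the dual you write down are all fine, and putting latency in the objective rather than in a constraint is an acceptable variant of what the paper does (the paper keeps throughput as the objective and adds an average-latency constraint, whose dual multiplier $\gamma$ then appears in the analysis). The genuine gap is in your structured dual ansatz. You set $\psi_{\sigma,e}=\psi^\star$ for all $\sigma$ whenever $e\in E_0$, i.e.\ prices that are uniform over permutations. Such a solution only ever uses the \emph{aggregate} of the per-$\sigma$ constraints: the reward per triple picks up $r\psi^\star(N-1)!\,|P\cap E_0|$ while the penalty is $N!\,\psi^\star|E_0|$, and the two factorials cancel to exactly the factor $1/N$ you would get from the single uniform-demand capacity constraint $\frac{r}{N}\sum_{a,b,t}\sum_{P\ni e}x_{a,b,t}(P)\le 1$. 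Your claim that the $(N-1)!$ multiplier is ``precisely what turns the demand-aware exponent into $h$'' is therefore backwards: with $\sigma$-independent prices the dual can never certify more than the uniform-multicommodity-flow bound, which corresponds to hop count roughly $\lfloor 1/r\rfloor$ and latency exponent $1/\lfloor 1/r\rfloor$, not the claimed $\eps_o(\eps_o N)^{1/h}$ with $h=\lfloor 1/(2r)\rfloor$. No choice of $E_0$ or of the tuning point for $\psi^\star$ repairs this, because the deficiency is structural, not a matter of which edges are priced.

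What the paper actually does (Appendix B.1) is keep prices that depend on the demand \emph{pair}: after reformulating the factorially many per-$\sigma$ constraints via a bipartite-matching (K\"onig-type) step as in \cite{stoc-paper}, the dual has variables $y_{a,b,e}$ and $z_e$ with $z_e\ge\sum_b y_{a,b,e}$ and $z_e\ge\sum_a y_{a,b,e}$, and the constructed solution is $\hat y_{a,b,e}=m^+_\theta(e,a)+m^-_\theta(e,b)$, where $m^+_\theta(e,a)$ indicates that $e$ is reachable from $a$ within $\theta$ physical hops and latency $\le kL$, and $m^-_\theta(e,b)$ is the analogous backward indicator, together with $\hat\gamma=2\theta/(kL)$ for the latency constraint. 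The factor of two that creates the VLB exponent comes from this two-sided price: every short low-latency path from $a$ to $b$ pays $2$ per physical edge (so $\hat x_{a,b,t}\ge 2\theta$ unless the path is long or slow), while $z_e=\max\{\sum_a \hat y_{a,b,e},\sum_b \hat y_{a,b,e}\}$ stays of order $N+\binom{kL}{\theta-1}$ by \Cref{lem:counting-lem}; the max-of-row/column-sums is exactly where the adversarial (worst-case) permutation survives in the dual. Balancing then forces $\theta\approx\frac1{2r}$, i.e.\ $\theta=h+1$, and choosing $k=2(h+1)/\eps_o$ yields $\Omega(\eps_o(\eps_o N)^{1/h})$, with the $N^{1/(h+1)}$ term obtained by rerunning the bound at the smaller rate $r'=\frac{1}{2(h+1)}$. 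To make your approach work you would need to replace your uniform $\psi_{\sigma,e}$ by prices of this pair-dependent form (or equivalently carry out the matching reformulation so that the ``worst permutation'' appears as a max of row/column sums), at which point you are essentially reconstructing the paper's proof.
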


The proof of \Cref{thm:obliv-avg-lat-lb} follows a similar structure as the lower bound proof of \cite{stoc-paper}, only with an added average latency constraint in the starting linear program, which results in an additional variable in the corresponding dual program, which must be reasoned about and assigned a value.
We leave the proof to \Cref{app:avg-lat-obliv}.

\begin{restatable}{theorem}{thmprovablesep}\label{thm:provable-sep}
	Consider any constant $r \in (0,\frac{1}{2}]$, and let $\semih = \semih(r) = \lfloor\frac{1}{r}-1\rfloor$ and $\eps = \eps(r) = \semih + 1 - (\frac{1}{r}-1)$
	Then if $r\in\left(0,\frac{1}{4}\right] \cup \left[\frac{1}{4-(2/N^{1/6})},\frac{1}{3}\right]$ and $\frac{1}{r}$ is not an integer, the family of SORN designs $\mathscr{S}_N(g,C)$ achieves asymptotically better average latency than any family of ORN designs which guarantees throughput $r$.
\end{restatable}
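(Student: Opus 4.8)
The plan is to assemble the statement from two results already established in the paper — the semi-oblivious upper bound \Cref{thm:semi-low-prob} and the oblivious average-latency lower bound \Cref{thm:obliv-avg-lat-lb} — and then reduce the comparison of the two bounds to one elementary arithmetic fact: that $h(r)=\floor{\tfrac1{2r}}$ is strictly smaller than $\semih(r)=\floor{\tfrac1r-1}$ on the stated range. Granting that, the oblivious lower bound grows like $N^{1/h}$ while the semi-oblivious upper bound grows like $\bigotilde(N^{1/\semih})$; since $\tfrac1\semih<\tfrac1h$, the former asymptotically dominates the latter along the infinitely many admissible sizes $N=p^\semih$, which is exactly the claimed separation. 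Both sides concern designs that guarantee throughput $r$ with probability $1$ — the SORN family does so by \Cref{thm:semi-low-prob} — so the comparison is apples to apples.

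\emph{Semi-oblivious upper bound.} Fix $r$ in the stated range with $\tfrac1r\notin\mathbb{Z}$, and let $\gamma,C$ be as in \Cref{thm:semi-low-prob}. For all sufficiently large primes $p$ with $N=p^\semih$ admissible, \Cref{thm:semi-low-prob} says that $\mathscr{S}_N(\semih,C)$ guarantees throughput $r$ and, with high probability over the random relabeling $\tau$ and the uniform permutation $\sigma$, routes only on $(\semih+1)$-hop paths; on that event every routing path has latency at most the maximum latency $\bigotilde(\semih N^{1/\semih})$, which in particular bounds $L_{avg}$. On the complementary failover event, of probability $O(N^{-d})$ for every constant $d$, the $2$-hop paths used have latency $\bigotilde(N)$. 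Averaging over $\tau$ and over the demand permutation thus gives $\E[L_{avg}(\mathcal{S})]\le\bigotilde(\semih N^{1/\semih})+\bigotilde(N)\cdot O(N^{-d})=\bigotilde(\semih N^{1/\semih})$, which is the ``in expectation'' clause of \Cref{thm:tradeoff}.\ref{to.sorn.whp}. This step is also the source of the awkward left endpoint near $r=\tfrac14$: the lower bounds on $p$ required by \Cref{thm:semi-low-prob} — in particular $\tfrac{\semih+3}{\eps}-2$ — diverge as $\eps\to0$, i.e.\ as $r\to\tfrac14^+$ with $\semih=2$, so keeping $N=p^\semih$ admissible forces $\eps\gtrsim N^{-1/6}$, equivalently $r\ge 1/(4-2/N^{1/6})$; for $r\le\tfrac14$ no such constraint is needed.

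\emph{Oblivious lower bound and comparison.} By \Cref{thm:obliv-avg-lat-lb}, every distribution $\mathscr{R}$ over $N$-node ORN designs guaranteeing throughput $r$ has $\E[L_{avg}(\mathcal{R})]=\Omega\big(\eps_o(\eps_o N)^{1/h}+N^{1/(h+1)}\big)$ with $h=\floor{\tfrac1{2r}}$ and $\eps_o=h+1-\tfrac1{2r}$. Since $\tfrac1r\notin\mathbb{Z}$ forces $\tfrac1{2r}\notin\mathbb{Z}$, the number $\eps_o$ is a constant in $(0,1)$, so this lower bound is $\Omega(N^{1/h})$. It remains to check $h<\semih$: writing $n=\floor{1/r}$, the hypothesis $\tfrac1r\notin\mathbb{Z}$ gives $\semih=n-1$, and a short case check on the parity of $n$ gives $h=\floor{n/2}$; since $1/r\ge3$ throughout the stated range we have $n\ge3$, hence $h=\floor{n/2}\le n-2=\semih-1<\semih$. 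Therefore, writing $\mathcal{S}\sim\mathscr{S}_N(\semih,C)$ and $\mathcal{R}\sim\mathscr{R}$, for every throughput-$r$ ORN family $\mathscr{R}$,
\[
\E[L_{avg}(\mathcal{S})]=\bigotilde\!\big(N^{1/\semih}\big)=o\!\big(N^{1/h}\big)=o\!\big(\E[L_{avg}(\mathcal{R})]\big)
\]
as $N\to\infty$ through the admissible prime powers $p^\semih$, the last equality because $\E[L_{avg}(\mathcal{R})]=\Omega(N^{1/h})$ and $\tfrac1\semih<\tfrac1h$. This is the asserted asymptotic improvement.

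\emph{Main obstacle.} There is no deep new ingredient; the work is in the bookkeeping at the two ends. First, turning the ``maximum latency with high probability'' guarantee of \Cref{thm:semi-low-prob} into a bound on the \emph{expected average} latency requires the failover latency $\bigotilde(N)$ to be swamped by the $O(N^{-d})$ failure probability (fine for $d\ge2$) and a careful account of which randomness the high-probability event is over. Second, one must reconcile the SORN family being defined only for $N=p^\semih$ with $p$ past thresholds that degenerate as $r\to\tfrac14^+$, whereas the oblivious lower bound holds for all $N$ — this is precisely what forces the left endpoint $1/(4-2/N^{1/6})$ of the second interval. The arithmetic inequality $h<\semih$ itself is elementary.
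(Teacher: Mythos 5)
Your overall strategy is exactly the paper's: combine the oblivious average-latency lower bound (\Cref{thm:obliv-avg-lat-lb}) with the SORN guarantee of \Cref{thm:semi-low-prob}, convert the high-probability maximum-latency bound into an average-latency bound $\bigotilde(\semih N^{1/\semih})$ using the probability-1 fallback latency $\bigotilde(N)$, and then compare exponents via $h(r) < \semih(r)$. For $r$ a genuine constant with $\frac1r \notin \mathbb{Z}$ this chain is sound, and your arithmetic ($h = \floor{n/2} \le n-2 = \semih-1$ for $n = \floor{1/r} \ge 3$) is correct.

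The gap is in how you dispose of the interval $\left[\frac{1}{4-(2/N^{1/6})},\frac13\right]$. You attribute the left endpoint to the admissibility constraints of \Cref{thm:semi-low-prob}, claiming that the requirement $p \ge \frac{\semih+3}{\eps}-2$ with $N=p^{\semih}$ forces $\eps \gtrsim N^{-1/6}$. It does not: with $\semih=2$ and $p=\sqrt N$ that constraint only forces $\eps \gtrsim N^{-1/2}$, so your explanation of the endpoint is wrong, and, more importantly, your argument never engages with the regime the endpoint actually exists for. The endpoint matters on the \emph{lower-bound} side: when $r$ is allowed to sit within $O(N^{-1/6})$ of $\frac14$ (the natural reading, since the endpoint depends on $N$), one has $\semih=2$, $h=1$, and $\eps_o = \eps/2$ sub-constant, so your step ``$\eps_o$ is a constant, hence the lower bound is $\Omega(N^{1/h})=\Omega(N)$'' fails. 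In that regime the term $N^{1/(h+1)}=\sqrt N$ of $L_{obl}$ is of the same order as the SORN upper bound $\bigotilde(\sqrt N)$ and gives no separation; one must instead use the quantitative bound $\eps_o(\eps_o N)^{1/h} = \eps_o^2 N \ge N^{2/3}$, which is precisely what the condition $\eps_o \ge N^{-1/6}$ (equivalently $r \ge 1/(4-2/N^{1/6})$) buys and is how the paper's third case concludes $N^{2/3} \gg \bigotilde(2\sqrt N)$. So either you restrict to the literal ``fixed constant $r$'' reading — in which case the $N$-dependent endpoint in the statement is left unexplained and unused by your proof — or you must replace the constant-$\eps_o$ shortcut near $r=\frac14$ with this explicit computation.
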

\begin{proof}
By \Cref{thm:obliv-avg-lat-lb}, any family of ORN designs which guarantees throughput $r$ must suffer average latency $\Omega\left( L_{obl}(r,N) \right)$.
Also recall that the family of SORN designs $\mathscr{S}_N(g,C)$ achieves maximum latency $\bigotilde(g N^{1/g})$ with high probability as long as $\frac{1}{r}$ is not an integer. This implies it also achieves average latency $\bigotilde(g N^{1/g})$, since with probability 1 it achieves maximum latency $\bigotilde(N)$.
We divide the set of throughput values $r\in\left(0,\frac{1}{4}\right] \cup \left[\frac{1}{4-(2/N^{1/6})},\frac{1}{3}\right]$ into the following cases.

\begin{enumerate} 
	\item $r\leq\frac{1}{5}$. 
	Then $\semih(r) > h(r)+1$. Since $L_{obl}(r,N)\geq N^{1/(h+1)}$, then $L_{obl}(r,N)$ is asymptotically greater than $\bigotilde(L_{upp})$.
	\item $r\in\left(\frac{1}{5},\frac{1}{4}\right]$. Then $\eps_o(r)= h + 1 - \frac{1}{2r} \geq \frac{1}{2}$, and $\semih(r) > h(r)$. Therefore, $L_{obl}(r,N)$ is asymptotically greater than $\bigotilde(L_{upp})$.
	\item $r\in\left[\frac{1}{4-(2/N^{1/6})},\frac{1}{3}\right]$.  Then $\eps_o(r) \geq \frac{1}{N^{1/6}}$, $g(r)=2$, and $h(r)=1$. So $\eps_o(\eps_o N)^{1/h} \geq \left(\frac{1}{N^{1/6}}\right)^2 N = N^{2/3}$. Additionally, $gN^{1/g} = 2 \sqrt{N}$. Therefore, $L_{obl}(r,N)$ is asymptotically greater than $\bigotilde(L_{upp})$. 
\end{enumerate}

\end{proof}

\section{Lower Bound} \label{sec:lower-bound}

In this section we prove \Cref{thm:tradeoff}.4, restated below.\\

\noindent \textbf{\Cref{thm:tradeoff}.4.} \textit{
Given any fixed throughput value $r\in(0,\frac{1}{2}]$, let $\semih = \semih(r) = \lfloor\frac{1}{r}-1\rfloor$ and $\eps = \eps(r) = \semih + 1 - (\frac{1}{r}-1)$, and let 
\begin{equation*}
	L_{low}(r,N) = \semih \left((\eps N)^{1/\semih} + N^{1/(\semih+1)}\right) 
\end{equation*}
Then any fixed ORN design $\mathcal{R}$ of size $N$ which achieves throughput $r$ with high probability must suffer at least $\Omega(L_{low}(r,N))$ maximum latency. 
}
\vspace{3mm}

\begin{proof}

We will start by upper bounding throughput for a given maximum latency.
We begin with a set of $N!$ linear programs, one for each possible permutation $\sigma$ on the node set, to solve the following problem:
given maximum latency $L$ and some reconfigurable network schedule, LP$_\sigma$ finds a set of routing paths to route $r$ flow between each $a,\sigma(a)$ pair, and maximizes the value $r$ for which this is possible. 
\begin{center}\fbox{\begin{minipage}{0.75\textwidth}
	\textbf{Primal LP$_\sigma$}
	\vspace{-3mm}
	\[\begin{lparray}
    \mbox{maximize} & r \\
    \mbox{subject to} & \sum_{P\in\pths_L(a,\sigma(a),t)} \mathcal{R}_{\sigma}(a,t,P)  = r \hspace{9mm} \forall a\in[N], t\in[T] \\
    & \sum_{a,t}\sum_{P\in\pths_L(a,\sigma(a),t) : e\in P} \mathcal{R}_{\sigma}(a,t,P)  \leq 1  \hfill  \qquad \forall e \in \ephys \\
    &\mathcal{R}_{\sigma}(a,t,P)  \geq 0 \hfill \forall a\in[N], t\in[T], P\in \pths_L(a,\sigma(a),t)
  \end{lparray}
\] \end{minipage} }
\end{center}

We then take the dual program of each LP$_\sigma$ to find Dual$_\sigma$.

\begin{center}\fbox{\begin{minipage}{0.75\textwidth}
	\textbf{Dual$_\sigma$}
	\vspace{-3mm}
	\[\begin{lparray}
    \mbox{minimize} & \sum_e \beta_{\sigma e} \\
    \mbox{subject to} & \alpha_{at\sigma} \leq \sum_{e\in P} \beta_{\sigma e} \hspace{5mm}\hfill \forall a\in[N], t\in[T], P\in \pths_L(a,\sigma(a),t) \\
    & \sum_{at}\alpha_{at\sigma} \geq 1 \\
    & \beta_{\sigma e} \geq 0 \hfill \forall e\in\ephys
  \end{lparray}
\]\end{minipage} }
\end{center}

For each Dual$_\sigma$, we will define a dual solution. Then, we will analyze an upper bound on the objective value of Dual$_\sigma$, with high probability over the random sampling of $\sigma$.

We will also reframe Dual$_\sigma$ in the following way, which will be easier to work with. Note that $\left( \sum \beta_{\sigma e} \right) \big/\left( \sum \alpha_{at\sigma} \right)$ is still an upper bound on throughput.

\begin{center}\fbox{\begin{minipage}{0.45\textwidth}
	\vspace{-4mm}
	\[\begin{lparray}
    \mbox{minimize} & \left( \sum_e \beta_{\sigma e} \right) \big/
        \left( \sum_{at} \alpha_{at\sigma} \right) \\
    \mbox{subject to} & \alpha_{at\sigma} \leq \sum_{e\in P} \beta_{\sigma e} \hspace{5mm}\hfill \forall a,t,P \\
    & \beta_{\sigma e} \geq 0
  \end{lparray}
\]\end{minipage} }
\end{center}

To understand how we construct and analyze dual solutions for Dual$_\sigma$, we'll start by showing that oblivious designs cannot achieve throughput better than $1/2$, even with high probability. Define
    \[
    \beta_{\sigma e} = 
    \begin{cases}
        2 & \mbox{if } e \mbox{ connects some } a\rightarrow \sigma(a) \mbox{ pair}  \\
        1 & \mbox{otherwise}.
    \end{cases}
    \]
and let $\alpha_{at\sigma}=\min_{P\in \pths_L(a,\sigma(a),t)}\{\sum_{e\in P} \beta_{\sigma e}\}$. By construction, $\alpha_{at\sigma} \ge 2$ for all $a,t$. So $\sum_{a,t} \alpha_{at\sigma} \ge 2 NT$, where $T$ is the period of the schedule.

Additionally, in expectation, $\mathbb{E} [\beta_{\sigma e}] = 1 + \frac{1}{N}$
for all $e$. So, $\mathbb{E} \left[ \sum_{ e} \beta_{\sigma e} \right] = (1 + \frac{1}{N}) N T$ \\
Then $\mathbb{E}[( \sum_e \beta_{\sigma e}) \big/ ( \sum_{at} \alpha_{at\sigma})] \leq \frac{1}{2}\left(1+\frac{1}{n}\right)$, which converges to $\frac{1}{2}$ as $N\rightarrow\infty$.

Now, suppose that throughput $r$ is achievable with high probability. That would mean that routing the demands $rD_\sigma$ gives a feasible flow with probability at least $\left(1-\frac{1}{N}\right)$ over a uniformly random choice $\sigma$. 
If routing demands $rD_\sigma$ is feasible for a fixed permutation $\sigma$, then it must be the case that the objective value of LP$_\sigma$ is at least $r$. 

And since the objective value of LP$_\sigma$ is always non-negative, then this implies that over a uniformly random permutation $\sigma$, the expected objective value of LP$_\sigma$ is at least $r\cdot(1-1/N)$.

The inequality $r\cdot(1-1/N) \leq \frac{1}{2}(1+\frac{1}{N})$ implies that $r$ must be at most $\frac{1}{2} + \frac{2}{N-1}$.

\textbf{Dual$_\sigma$ solutions to bound general throughput.}    
Now we'll show good dual solutions for general $r$. Given parameter $\theta\in \mathbb{Z}_{\geq 1}$, set 
    \[
    \beta_{\tau e} = 
    \begin{cases}
        \theta+1 & \mbox{if }e\mbox{ on a path of } \theta  \mbox{ phys edges between some }
         u\rightarrow\sigma(u) \mbox{ pair} \\
        1 & \mbox{otherwise}.
    \end{cases}
    \]
By construction, $\alpha_{at\sigma}\geq \theta +1$, so $\sum_{a,t}\alpha_{at\sigma} = NT(\theta +1)$, where $T$ is the period. Additionally,
\begin{align*}
    \mathbb{E} [\beta_{\sigma e}] &= 1 + \theta \Pr(e \mbox{ is on a path of $\le \theta$ phys edges with $\sigma$-matched endpoints})
\end{align*}
To bound the above value, we apply the following lemma from \cite{stoc-paper}.

\begin{restatable}{lemma}{lemcountinglem}\label{lem:counting-lem}
\textbf{(Counting Lemma)\cite{stoc-paper}} If in an ORN topology, some node $a$ can reach $k$ other nodes in at most $L$ timesteps using at most $h$ physical hops per path for some integer $h$, then $k\leq 2 {L \choose h}$, assuming $h\leq \frac{1}{3}L$.
\end{restatable}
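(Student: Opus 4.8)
The plan is to split the lemma into two independent pieces: a combinatorial \emph{encoding} step that bounds $k$ by a partial sum of binomial coefficients, and an elementary \emph{tail bound} step that collapses that partial sum to $2\binom{L}{h}$ using the hypothesis $h \le \tfrac13 L$.

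For the encoding step, fix the node $a$ together with the starting time $t$ from which the paths in question originate. Any path in $\gpi$ from $(a,t)$ that uses at most $L$ edges can be extended, by appending virtual (waiting) edges at its endpoint, to a path that uses exactly $L$ edges and ends at the same node; hence without loss of generality every path under consideration occupies timesteps $t, t+1, \ldots, t+L-1$, taking either a physical or a virtual edge in each. Such a path is completely determined by the set $S \subseteq \{t, t+1, \ldots, t+L-1\}$ of timesteps in which it takes a physical edge: this is exactly the observation recorded in the introduction that the connection schedule specifies, for each node and each timestep, a \emph{unique} node to which a physical edge may be sent, so the trajectory $a = a_0 \mapsto a_1 \mapsto \cdots \mapsto a_L$ is forced once we know when the hops occur. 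Consequently the set of nodes reachable from $(a,t)$ within $L$ timesteps using at most $h$ physical hops has size at most the number of subsets $S \subseteq \{t,\ldots,t+L-1\}$ with $|S| \le h$, namely $\sum_{i=0}^{h}\binom{L}{i}$ (distinct subsets may map to the same node, which only helps). Excluding the node $a$ itself, corresponding to $S = \emptyset$, would improve this by one, but we will not need that; so $k \le \sum_{i=0}^{h}\binom{L}{i}$.

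For the tail bound step, I would show $\sum_{i=0}^{h}\binom{L}{i} \le 2\binom{L}{h}$ whenever $h \le \tfrac13 L$ by a geometric-decay estimate. For $1 \le i \le h$ we have $\binom{L}{i-1}\big/\binom{L}{i} = \tfrac{i}{L-i+1}$, which is increasing in $i$ (numerator up, denominator down, and $L-i+1>0$ since $i \le h < L$), so it is at most its value at $i=h$, namely $\tfrac{h}{L-h+1} \le \tfrac{L/3}{L-L/3+1} = \tfrac{L/3}{2L/3+1} < \tfrac12$. Therefore $\binom{L}{i} \le 2^{-(h-i)}\binom{L}{h}$ for every $0 \le i \le h$, and summing the geometric series gives $\sum_{i=0}^{h}\binom{L}{i} \le \binom{L}{h}\sum_{j \ge 0} 2^{-j} = 2\binom{L}{h}$. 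Combined with the encoding step, this yields $k \le 2\binom{L}{h}$.

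Neither step is deep. The only point that warrants care is the encoding step: the normalization that lets one assume every path uses exactly $L$ timesteps (via appending virtual edges) and the assertion that a path out of $(a,t)$ is determined by its set of hop-times, which relies on the degree-$1$, permutation structure of the connection schedule. The small arithmetic verifying that $\tfrac{h}{L-h+1}$ is increasing and bounded by $\tfrac12$ is the only other thing worth spelling out, and it is routine; I expect the encoding step to be the conceptual crux.
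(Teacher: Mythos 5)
Your proof is correct. Note that the paper itself does not prove this lemma — it is cited verbatim from \cite{stoc-paper} — but your encoding step (reachable nodes parameterized by the set of hop-times, using the degree-1 permutation structure of the connection schedule to deduce that the trajectory is forced once the hop-times are fixed) is exactly the counting argument this paper sketches in the introduction when it asserts that reaching all $N$ nodes within $L$ steps using $g$ hops requires $\sum_{i=0}^{g}\binom{L}{i} \ge N$, and presumably also what the cited source does. Your tail-bound step is a standard and correctly executed estimate: the ratio $\binom{L}{i-1}/\binom{L}{i} = i/(L-i+1)$ is indeed increasing in $i$ over the range in question, its value at $i=h\le L/3$ is $h/(L-h+1) \le L/(2L+3) < 1/2$, and the geometric sum then gives $\sum_{i=0}^{h}\binom{L}{i} < 2\binom{L}{h}$. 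The normalization to exactly-$L$-step paths by padding with virtual edges and the remark that surjectivity of the subset-to-node map is not needed are both sound.
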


Applying the Counting Lemma, the probability that edge $e$ is on a path of no more than $\theta$ physical edges with $\sigma$-matched endpoints is at most
\[ \frac{1}{N}\sum_{m=0}^{\theta-1}2{L\choose m}2{L\choose \theta-1-m} \leq \frac{4}{N}{2L\choose \theta-1} \]
assuming $\theta-1\leq\frac{1}{3}L$.
Then
\begin{align*}
    \mathbb{E} [\beta_{\sigma e}] & \leq 1 + \frac{4 \theta}{N} \binom{2L}{\theta-1} \\
    \implies \mathbb{E}\left[ \sum_e\beta{\sigma e} \right] & \leq NT\left(1 + \frac{4 \theta}{N} \binom{2L}{\theta-1}\right)
\end{align*}

Meaning we can bound the expected objective value of Dual$_\sigma$ throughput achievable under random permutation traffic.
\begin{align*}
    \E[\text{obj. value of Dual}_\sigma] &\leq \E\left[ \sum_e \beta_{\sigma e} \right] \big/ \left( NT(\theta+1) \right)\\
    &\leq \left( 1 + \frac{4 \theta}{N} \binom{2L}{\theta-1} \right) \big/ \left( \theta +1 \right) 
\end{align*}

As before, we use this expectation to find an upper bound on the achievable throughput rate with high probability
\[ r\left(1-\frac{1}{N^d}\right) \leq  \left( 1 + \frac{4 \theta}{N} \binom{2L}{\theta-1} \right) \big/ \left( \theta +1 \right) \]

We then simplify and isolate $L$ to one side of the inequality, to find the following lower bound on maximum latency. 
The inequality $\frac{a!}{(a-b)!}\leq a^b$ and Stirling's approximation $(k!)^{\frac{1}{k}} \geq \frac{k}{e}\sqrt{2\pi k}^{\frac{1}{k}}$ prove useful during this simplification process.
\[ L  \geq \frac{\theta-1}{2e} N^{\frac{1}{\theta-1}} \Bigg( \left(\frac{N^d-1}{N^d}r-\frac{1}{\theta+1}\right)\frac{\sqrt{2\pi(\theta-1)}}{4\theta} \Bigg) ^{\frac{1}{\theta-1}} \]

To ensure that this bound stays above 0, we approximately need $(r (\theta+1) - 1) > 0$, meaning $\theta$ must be greater than $\frac{1}{r}-1$. Setting $\theta$ as the smallest integer for which this holds, we find $\theta = \lfloor\frac{1}{r}\rfloor$.
Let $\semih=\theta-1$ and $\varepsilon = \semih + 1 - (\frac{1}{r}-1)$. Then we substitute $r=\frac{1}{\semih+2-\eps}$ to find

\begin{align*}
    L &\geq \frac{\semih}{2e} \left( \eps N \right)^{1/\semih} \left( \frac{\sqrt{2\pi \semih}}{4(h+1)(\semih+2-\eps)} \right)^{1/\semih}  \\
    \implies L &\geq \Omega\left( \semih\left((\varepsilon N)^{1/\semih} + N^{1/(\semih+1)}\right) \right).
\end{align*}
\end{proof}

\begin{restatable}{corollary}{corsemioblivtradeoff}\label{cor:semi-obliv-lb-maxlat}
Given any fixed throughput value $r\in(0,\frac{1}{2}]$, let $\semih = \semih(r) = \lfloor\frac{1}{r}-1\rfloor$ and $\eps = \eps(r) = \semih + 1 - (\frac{1}{r}-1)$. 
\begin{enumerate}
	\item Then any fixed SORN design which guarantees throughput $r$ (with respect to fixed demands), must suffer maximum latency at least $\Omega(L_{low}(r,N))$.
	\item Additionally, any distribution over SORN designs $\mathscr{S}$ each of size $N$, which guarantees throughput $r$ (with respect to fixed demands) over the random sampling $\mathcal{R}\sim\mathscr{R}$ must suffer at least $\Omega(L_{low}(r,N))$ maximum latency. 	
\end{enumerate}
\end{restatable}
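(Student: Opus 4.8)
The plan is to obtain both parts of the corollary essentially for free from the lower-bound argument already carried out in \Cref{sec:lower-bound} for \Cref{thm:tradeoff}, part~\ref{to.orn.whp.lb}. The crucial observation is that the primal program $\mathrm{LP}_\sigma$ used there is nothing other than the feasibility program for a \emph{demand-aware} routing scheme on a \emph{fixed} connection schedule under a maximum-latency budget $L$: its variables $\mathcal{R}_\sigma(a,t,P)$ are indexed by the permutation $\sigma$, the equality constraints encode routing $r$ units of flow from each $a$ to $\sigma(a)$ at each time step, and the inequality constraints encode feasibility on every physical edge. Hence, if a fixed SORN design $\mathcal{S}$ with connection schedule $\bm{\pi}$ and maximum latency $L = L_{max}(\mathcal{S})$ guarantees throughput $r$ (with respect to time-stationary demands, equivalently permutation demands by Birkhoff--von Neumann, as in \Cref{def:guaranteed-thr}), then for \emph{every} permutation $\sigma$ the scaled routing scheme $P \mapsto r\cdot S_\sigma(a,t,P)$ is a feasible solution of $\mathrm{LP}_\sigma$ on the schedule $\bm{\pi}$ with budget $L$, so the optimal value of $\mathrm{LP}_\sigma$ is at least $r$. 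This is exactly the point at which semi-obliviousness enters: the routing solution is permitted to depend on $\sigma$.

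For part 1, I would then invoke the dual analysis of \Cref{sec:lower-bound} verbatim. Passing to $\mathrm{Dual}_\sigma$ and plugging in the dual solution with parameter $\theta = \lfloor 1/r\rfloor = \semih + 1$ --- weight $\theta+1$ on edges lying on a path of at most $\theta$ physical hops between a $\sigma$-matched pair, and weight $1$ elsewhere --- and bounding the expected objective via the Counting Lemma (\Cref{lem:counting-lem}), one gets that the expectation over uniformly random $\sigma$ of the optimal value of $\mathrm{Dual}_\sigma$ is at most $\left(1 + \tfrac{4\theta}{N}\binom{2L}{\theta-1}\right)/(\theta+1)$. Since $\mathrm{LP}_\sigma$ (of the same value as $\mathrm{Dual}_\sigma$) has value at least $r$ for \emph{every} $\sigma$ --- a probability-$1$ statement rather than a high-probability one --- we obtain the clean inequality $r \le \left(1 + \tfrac{4\theta}{N}\binom{2L}{\theta-1}\right)/(\theta+1)$, which is even slightly stronger than the $r(1-N^{-d})$ bound used in the ORN case. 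Isolating $L$ using $\tfrac{a!}{(a-b)!} \le a^b$ and Stirling's approximation, exactly as in \Cref{sec:lower-bound}, yields $L \ge \Omega(L_{low}(r,N))$, which is the claim of part 1.

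For part 2, I would reduce to part 1 by a union bound over the finitely many permutation demands. Suppose a distribution $\mathscr{S}$ over SORN designs of size $N$ guarantees throughput $r$, i.e.\ for each of the $N!$ permutations $\sigma$, routing $rD_\sigma$ on $\mathcal{S} \sim \mathscr{S}$ induces a feasible flow with probability $1$. Because the set of permutations is finite, a union bound shows that with probability $1$ the sampled design $\mathcal{S}$ simultaneously routes every $rD_\sigma$ feasibly; that is, $\mathcal{S}$ guarantees throughput $r$ in the sense required by part 1. Applying part 1 to this random but almost surely valid design gives $L_{max}(\mathcal{S}) \ge \Omega(L_{low}(r,N))$ almost surely. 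Since the bound $L_{low}(r,N)$ depends only on $N$ and not on the particular design, $\mathscr{S}$ cannot be supported on designs of maximum latency $o(L_{low}(r,N))$, which is the assertion of part 2.

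I do not anticipate a real obstacle: all of the analytic content --- the dual solution, the Counting Lemma bound, and the Stirling manipulation isolating $L$ --- is inherited wholesale from \Cref{sec:lower-bound}. The only conceptual steps are (i) recognizing that $\mathrm{LP}_\sigma$ already models the semi-oblivious regime (connection schedule fixed, routing $\sigma$-adaptive), so a probability-$1$ throughput guarantee forces the optimal value of \emph{every} $\mathrm{LP}_\sigma$ to be at least $r$ rather than just on a $1-N^{-d}$ fraction of permutations, and (ii) the trivial finite union bound passing from part 1 to part 2. A minor technical point is the measurability of the events $\{f(S_\sigma, rD_\sigma)\text{ feasible}\}$ under $\mathscr{S}$, which is routine for any reasonable measurable structure on the space of SORN designs and does not affect the argument.
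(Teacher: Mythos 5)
Your proposal is correct and follows essentially the same route as the paper: it recognizes that $\mathrm{LP}_\sigma$ already models the semi-oblivious setting so a probability-$1$ throughput guarantee forces every $\mathrm{LP}_\sigma$ to have value at least $r$, reuses the dual solution and Counting Lemma analysis with the $\left(1-N^{-d}\right)$ factor dropped, and reduces part 2 to part 1 via the observation that (almost) every design the distribution produces must itself guarantee throughput $r$. Your union bound over the $N!$ permutations is just a slightly more careful phrasing of the paper's statement that every design in the support guarantees throughput $r$, so there is no substantive difference.
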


Before we begin the proof, note that this lower bound does not make any claims about what maximum latencies are achievable with high probability for SORNs which guarantee throughput $r$. 
In \Cref{app:avg-lat-semi-obliv}, we give a similar lower bound on the {\em average} (or, expected) latency of any SORN design which guarantees throughput $r$. 
This lower bound has an additional multiplicative dependence on $\eps$.
Thus, the lower bound on maximum latency and the lower bound on expected latency
match to within a constant factor for most values of $r$: when
$ \frac{1}{r} \not\in \bigcup_{m = 2}^{\infty} \left( m - \frac{1}{K}, m \right) $, for any large constant $K$. 
\begin{proof}
The linear program as written in the proof of \Cref{thm:tradeoff}.4, when considered as whole instead of as a family of $N!$ different programs, sets up this SORN problem exactly. 
It asks: given a particular reconfigurable network schedule, for each possible permutation $\sigma$, maximize the guaranteed throughput rate while routing flow between $\sigma$-matched pairs. 
Since the expectation $\E_\sigma\left[ \sum_e \beta_{\sigma e} \right]$, is upper bound by $NT\left(1 + \frac{4 \theta}{N} \binom{2L}{\theta-1}\right)$, then there exists at least one $\sigma$ for which the bound holds. The rest of the proof follows similarly to the proof of \Cref{thm:tradeoff}.4, only without the factor $\left(1-\frac{1}{N^d}\right)$.

Note that every SORN design $\mathcal{S}$ within the support of $\mathscr{S}$ must itself guarantee throughput $r$ (with probability 1). 
Thus, each design $\mathcal{S}$ must suffer maximum latency at least $\Omega(L_{low}(r,N))$, and the whole distribution must also suffer maximum latency at least $\Omega(L_{low}(r,N))$.
\end{proof}

\section{Conclusion and Open Questions} \label{sec:open-questions}

In this paper, we showed that, compared to the guaranteed throughput versus latency tradeoff achieved in \cite{stoc-paper}, a strictly superior latency-throughput tradeoff is achievable when the throughput bound is relaxed to hold with high probability. 
We showed that the same improved tradeoff is also achievable with guaranteed throughput under time-stationary demands, provided the latency bound is relaxed to hold with high probability and that the network is allowed to be semi-oblivious, using an oblivious (randomized) connection schedule but demand-aware routing.
We proved that the latter result is not achievable by any fully-oblivious reconfigurable network design, marking a rare case in which semi-oblivious routing has a provable asymptotic advantage over oblivious routing.

\textbf{Removing the logarithmic gap and when $\eps$ is small.} 
Our designs only attain maximum latency $\bigo(L_{upp}(r,N))$ up to a $\bigotilde(\log N)$ factor, leaving a logarithmic gap between our upper and lower bounds. 
Is there an ORN or SORN design that achieves maximum latency $\bigo(L_{upp}(r,N))$?
Alternatively, is there a stronger lower bound than the one we presented in \Cref{sec:lower-bound}?

Additionally, when $\eps^{1/g}$ is sub-constant, then $L_{upp}(r,N) > \bigo(L_{low}(r,N))$.
This leaves us with a small but measurable fraction of throughput values for which we cannot find ORN and SORN designs which achieve provably optimal throughput-latency tradeoffs, even up to a logarithmic factor.
\cite{stoc-paper} handled this case by developing a second ORN family which sent flow on both $h$- and $(h+1)$-hop semi-paths.
We believe there a similar result for ORNs which achieve throughput with high probability and SORNs may be proven, by considering larger numbers of constellations when routing the hop-efficient paths.
However, we leave that to future work.

\textbf{Time-varying demands.}
In order to prove our throughput-latency tradeoffs for SORN designs, we were required to restrict ourselves to time-stationary (permutation) demands.
While this still shows that semi-oblivious routing has a provable asymptotic advantage over oblivious routing in the case of reconfigurable networks, it is desirable to find SORN designs which can handle time-varying demands.
Our SORN design $\mathscr{S}_N(g,C)$ works for almost all time-varying demands. 
However, in the case that it must route all flow (from every starting timestep $t$) along 2-hop paths, there is no obvious way to ``ramp back up'' to sending flow on $(g+1)$-hop paths again without waiting for most flow in the network to clear, which would require almost 2 full periods, or iterations of the schedule.

\textbf{Bridging the gap between theory and practice.}
As with previous work in this domain, we make several assumptions that do not hold for practical networks in order to make the analysis tractable.
In particular, our model of ORNs does not account for propagation delay between nodes.
In a practical network, it takes time for each message to traverse each physical link.
Our model of ORNs can easily be adjusted to take this into account with our definition of the virtual topology, 
and our design itself could be modified by taking advantage of the fact that flow paths always take at most one physical hop per phase block. 
However, large propagation delays penalize solutions which take more physical hops, which inherently changes the attainable throughput versus latency tradeoffs in a real system.
Once propagation delays become superlinear in $N$, one should always maximize throughput, since latency becomes dominated by propagation delay.
It is worth exploring where and how this shift from a full tradeoff curve to a single optimal point occurs, as propagation delay increases.

Additionally, we assume fractional flow: each unit of flow can be fractionally divided and sent across multiple different paths. In a practical network, flow is sent in discrete packets, which cannot be divided. 
Due to this assumption, our model sends small fractions of flow from multiple paths across the same link. However in a real system, only one packet from one path may traverse the link during a single timestep.
As a result, in real systems queuing may happen, which is best addressed using a congestion control system.
Congestion control has a decades-long history of active research across various networking contexts.
Our proposed designs present a new context for this area of research, and will likely require both adapting existing ideas from other contexts, as well as new innovations.

\endgroup

\bibliographystyle{alpha}
\bibliography{biblio}{}

\newpage
\appendix
\section{Mixing $(\semih+1)$-hop and 2-hop paths in our Semi-Oblivious Design} \label{app:k-frame-cor}

In defining the SORN design $\mathcal{S}$, we always chose to route permutation demand $D_\sigma$ on 2-hop paths if any edge $e$ in any $\chframe$ would become overloaded from routing $D_\sigma$ on $(\semih+1)$-hop paths.
However, this choice is a bit extreme. 
After all, the connection schedule of $\mathcal{S}$ iterates through $(p-1)^{\semih-1}$ different $\chframe$s.  

Label a $\chframe$ $A\mathcal{V}$ as {\em contentious} if there exists some edge $e$ occurring during constellation $A\mathcal{V}$ which is overloaded when routing demand $D_\sigma$ with the $(g+1)$-hop routing scheme.
It would be desirable the flow which would be routed along non-contentious constellations could still be routed on the more latency-efficient $(g+1)$-hop paths, while only the flow that would be routed on the contentious constellations is relegated to the 2-hop alternate paths.

This strategy slightly decreases the achievable throughput rate, due to reserving a small amount of edge capacity on each edge for 2-hop paths.
However, as long as the number of contentious $\chframe$s $k$ is small, we can still provably achieve throughput $r$ for any $r\in(0,\frac{1}{2})$ for which $\eps(r) = \floor{\frac{1}{r}-1}+1-(\frac{1}{r}-1)\neq 1$. (Or in other words, for $r$ which is not the reciprocal of an integer.) 

Specifically, if there are no more than $\frac{(1-\eps)(p-2)^g}{4(p-1)}$ contentious $\chframe$s over the entire period, then as described above, only route the flow that would be routed on contentious constellations on the alternate 2-hop paths.
(This is exactly the flow that originates during a constellation immediately prior to a contentious constellation.) 
Route all other flow on $(g+1)$-hop paths.
If there are more than $\frac{(1-\eps)(p-2)^g}{4(p-1)}$ contentious $\chframe$s, then route all flow on 2-hop paths.

\begin{restatable}{corollary}{corkframesdesign}\label{cor:k-frames-design}
Given a fixed throughput value $r$, let $\semih = \semih(r) = \lfloor\frac{1}{r}-1\rfloor$ and $\eps = \eps(r) = \semih + 1 - (\frac{1}{r}-1)$, and assume $\eps\neq 1$.
Let $\delta=\frac{1-\eps}{2(\semih-1)}$ and $C=\frac{6\log\log N}{\delta^2}\ln(N)$, and assume that $N=p^\semih$ for prime $p$ for which $C(\semih+1)\leq p$.
Consider the SORN design $\mathcal{S}$ described above with parameters $C$ and $\semih$, with the following alteration.

Then this scheme can guarantee throughput $r$ and achieves maximum latency $\bigotilde\left( gN^{1/g} \right)$ with high probability over the random sampling over $\sigma$, and achieves maximum latency $\bigotilde(N)$ in the low-probability case.
\end{restatable}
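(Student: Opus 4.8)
The plan is to reduce, as in the proofs of \Cref{thm:fixed-design-tradeoff} and \Cref{thm:semi-low-prob}, to the fixed design $\mathcal{S}^0$ under a uniformly random permutation demand $D_\sigma$, and then to condition on the (random) number $k$ of \emph{contentious} $\chframe$s --- a $\chframe$ being contentious when routing $rD_\sigma$ on $(\semih+1)$-hop paths would load some physical edge inside it above the contentiousness threshold $1-s$, where $s=\tfrac12(1-\eps)r$ is the small amount of capacity reserved on every edge for $2$-hop traffic. First I would establish the throughput guarantee, which must hold with probability $1$, by checking feasibility in three regimes. If $k=0$, every demand is routed on $(\semih+1)$-hop paths; middle-phase-block edges then carry at most $1-s<1$ by definition of the threshold, and first/last-phase-block edges carry the deterministic amount $\tfrac{p-1}{p-2}(\semih+1)r<1$, using the lower bounds on $p$ inherited from \Cref{thm:semi-low-prob}. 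If $0<k\le\tfrac{(1-\eps)(p-2)^\semih}{4(p-1)}$, the flow assigned to contentious $\chframe$s is rerouted onto $2$-hop paths and everything else stays on $(\semih+1)$-hop paths; feasibility in this case is the delicate step discussed next. If $k$ exceeds that threshold, all flow is routed on $2$-hop paths and feasibility follows verbatim from the failover computation in the proof of \Cref{thm:semi-low-prob} (this is where $C(\semih+1)\le p$ together with the companion $p$-conditions of that theorem are invoked). In all three regimes the induced flow for $rD_\sigma$ is feasible, so throughput $r$ is guaranteed with probability $1$.

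The feasibility check in the middle regime is the step I expect to be the main obstacle, since it requires pinning down how the constants $s$, $\delta=\tfrac{1-\eps}{2(\semih-1)}$, $C=\tfrac{6\log\log N}{\delta^2}\ln N$ and the bound $\tfrac{(1-\eps)(p-2)^\semih}{4(p-1)}$ on $k$ fit together. The key point is that the $2$-hop routing of a rerouted unit of flow selects its intermediate node uniformly among the at least $(p-2)^\semih$ admissible ones and selects the $\chframe$ used for each of its two hops uniformly among the $C(\semih+1)$ options, so that every physical edge receives, \emph{deterministically}, at most a $\tfrac{2k}{(p-1)^{\semih-1}}$-fraction of the all-$2$-hop per-edge load established in the proof of \Cref{thm:semi-low-prob} (the factor $2$ because the first and second hops of a rerouted path occupy two consecutive periods of the schedule). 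Substituting the bound on $k$ makes this $2$-hop contribution at most $s$, so on a non-contentious $\chframe$ each middle-block edge carries at most $(1-s)+s=1$, while each first/last-block edge carries at most $\tfrac{p-1}{p-2}(\semih+1)r+s$, which is $\le 1$ once $p$ exceeds a threshold of the shape $2+\tfrac{2(\semih+1)}{1-\eps}$; edges inside a contentious $\chframe$ carry no $(\semih+1)$-hop flow at all and are trivially fine. Making all of these inequalities close simultaneously --- with the threshold $1-s$ separated from the mean $(\semih+1)$-hop load by a multiplicative factor bounded below by a constant greater than $1$ depending only on $r$ --- is the technical heart of the argument.

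With the regimes in hand, I would bound $\Pr[k\ge1]$ exactly as the overload probability was bounded before. A $\chframe$ is contentious iff one of its $C(\semih+1)(p-1)N$ physical edges is over threshold; for a fixed edge $e$ and a fixed middle hop index $1\le q\le\semih-1$, the $(q+1)$-th hop load on $e$ equals $(\semih+1)$ times a bilinear form $B(\vc{u},P\vc{v})$ of the kind treated by \Cref{thm:tailbound}, whose mean is at most $\tfrac{(p-1)^3}{p^2(p-2)}r<r$, and the threshold value $\tfrac{1-s}{\semih+1}$ for $B$ exceeds this mean by a multiplicative factor $1+\tfrac{1-\eps}{2(\semih+1)}$; call its logarithm $\gamma$, a positive constant that is a fixed multiple of $\delta$. \Cref{thm:tailbound} then gives $\Pr[e\text{ over threshold}]\le 15N^2 e^{-\frac1{200}\gamma^2 C}$, and since $\gamma^2 C=\tfrac{6\gamma^2}{\delta^2}\log\log N\ln N$ with $\gamma/\delta$ a positive constant, this probability is $N^{-\omega(1)}$; the endpoint hops $q\in\{0,\semih\}$ contribute only the deterministic bound and are never over threshold. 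A union bound over all $(p-1)^\semih C(\semih+1)N\le C(\semih+1)N^2$ physical edges and the hop indices $1\le q\le\semih-1$ then gives $\Pr[k\ge1]\le C_d/N^d$ for every $d$, so in particular $k$ exceeds $\tfrac{(1-\eps)(p-2)^\semih}{4(p-1)}$ with probability at most $C_d/N^d$.

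Finally, the latency claim is immediate. On the event $\{k=0\}$, which has probability at least $1-C_d/N^d$, every routing path is an $(\semih+1)$-hop path, which after the initial wait of at most one $\chframe$ has length at most $2C(\semih+1)(p-1)<2C(\semih+1)N^{1/\semih}=\bigotilde(\semih N^{1/\semih})$. On the complementary low-probability event the scheme may route some or all flow on $2$-hop paths, whose length is at most $2(p-1)^\semih C(\semih+1)+C(\semih+1)(p-1)\le\bigotilde(N)$, while $(\semih+1)$-hop paths remain $\bigotilde(\semih N^{1/\semih})\le\bigotilde(N)$; hence the maximum latency is $\bigotilde(N)$ with probability $1$ and $\bigotilde(\semih N^{1/\semih})$ with high probability, which is the assertion of the corollary.
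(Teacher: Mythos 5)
Your proposal is correct and follows essentially the same route as the paper's proof: the same three-regime scheme in which flow assigned to contentious constellations is rerouted onto 2-hop paths, the same deterministic per-edge bound $\frac{2rk(p-1)}{(p-2)^{\semih}}$ on rerouted flow under the cap $k\le\frac{(1-\eps)(p-2)^{\semih}}{4(p-1)}$ (your reserved capacity $s=\tfrac12(1-\eps)r$ and threshold $1-s$ coincide exactly with the paper's $(1+\delta)(\semih+1)r$), and the same bilinear-form tail bound plus union bound for the high-probability and latency claims. Your explicit handling of the first/last phase-block edges, with the attendant lower bound on $p$, is slightly more careful than what the paper writes, but the substance of the argument is identical.
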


\begin{proof} 
Suppose that $k$ different $\chframe$s are contentious, and thus the flow which we would like to send only on $(\semih+1)$-hop paths within those frames must instead be sent on 2-hop paths across two iterations of the schedule.
This presents a balancing problem: since most $\chframe$s are not contentious, most of the edges this flow will be sent on have their own constellation's $(\semih+1)$-hop flows to forward along.
Thus, we need to bound the total amount of 2-hop flow on any edge in the network, given that $k$ different frame's worth of flow is being routed on 2-hop paths.

Fix an edge $e$, and consider for each demand pair $i,\sigma(i)$ how much much flow is crossing edge $e$ due to $i,\sigma(i)$. 
If 1st hop flow crosses edge $e$ from $i$ to $\sigma(i)$, then it must be the case that $tail(e)=i$ and both $head(e)-i$ and $\sigma(i)-head(e)$ have only non-zero coordinates.
Similarly, if 2nd hop flow crosses edge $e$ from $i$ to $\sigma(i)$, then $head(e)=\sigma(i)$ and both $tail(e)-i$ and $\sigma(i)-tail(e)$ have only non-zero coordinates.

If there are $k$ contentious $\chframe$s, then the total amount of flow that must be routed on 2-hop paths over the entire period will be $rkNC(g+1)(p-1)$, with each demand pair $i,\sigma(i)$ contributing $rkC(g+1)(p-1)$ flow per period.

For each edge $e=(a,b)$, consider the total amount of first-hop flow from 2-hop paths traversing the edge.
First-hop flow traversing $e$ must be traveling from source node $i=a$.
Also note that since edge $e$ exists in the network, then the vector $b-a$ must have only non-zero coordinates.
Then first-hop flow traverses edge $e$ only when $\sigma(a)-b$ also has only non-zero coordinates.

For node $a$, let us consider the set of other 2-hop paths which could carry flow from $a$ to $\sigma(a)$. 
(And thus, what other edges could carry first-hop 2-hop flow from $a$ to $\sigma(a)$.)
This is directly related to the number of nodes $b'$ for which $\sigma(a)-b'$ and $b'-a$ both have non-zero coordinates. 
This is at least $(p-2)^g$, which occurs exactly when $a$ and $\sigma(a)$ have no matching coordinates. 
Additionally, for a given first-hop edge $e$, the number of times an equivalent edge appears at any point in the period is the number of Vandermonde vectors in the constellation, or $C(g+1)$.

Thus, the amount of first-hop 2-hop flow that traverses edge $e$ is always no more than
\[ \frac{rkC(g+1)(p-1)}{(p-2)^gC(g+1)} = \frac{rk(p-1)}{(p-2)^g} . \]

A similar argument shows that the amount of second-hop 2-hop flow traversing edge $e$ will also be no more than $\frac{k(p-1)}{(p-2)^g}$.

Now that we have this bound, let us bound the total amount of $(g+1)$-hop and 2-hop flow traversing some edge $e$.

Fix an edge $e$ from a constellation that is not contentious. 
This edge will have both $(g+1)$-hop and 2-hop flow traversing it.
Since the constellation is not contentious, we know that the amount of $(g+1)$-hop flow traversing $e$ is no more than $(1+\delta)(g+1)r$.
Thus, the total amount of flow traversing edge $e$ is no more than 
\[ (1+\delta)(g+1)r + \frac{2rk(p-1)}{(p-2)^g} \] 

Setting this value equal to 1, thus maximizing $r$, we achieve

\begin{align*}
	(1+\delta)(g+1)r + \frac{2rk(p-1)}{(p-2)^g} & = 1 \\
	r \left( (1+\delta) (g+1) + \frac{2k(p-1)}{(p-2)^g} \right) & = 1
\end{align*}

Now replace $\delta = \frac{1-\eps}{2(g+1)}$ and $r=\frac{1}{g+2-\eps}$ and solve for $k$ to find the maximum value $k$ may take without overloading any edges.

\begin{align*}
	\frac{1}{g+2-\eps} \left( \left(1+ \frac{1-\eps}{2(g+1)}\right)(g+1) + \frac{2k(p-1)}{(p-2)^g} \right) & = 1 \\
	\left(1+ \frac{1-\eps}{2(g+1)}\right)(g+1) + \frac{2k(p-1)}{(p-2)^g} & = g + 2 - \eps \\
	g+1+\frac{1-\eps}{2} + \frac{2k(p-1)}{(p-2)^g} & = g+2-\eps \\
	\frac{2k(p-1)}{(p-2)^g} = \frac{1-\eps}{2} \\
	k = \frac{(1-\eps)(p-2)^g}{4(p-1)}
\end{align*}

As stated in the theorem statement, this is the maximum value $k$ can take without overloading edges in the network.

Now consider the probability that $k$ $\chframe$s are contentious. This is clearly no more than the probability that a single $\chframe$ is contentious, which occurs with high probability as stated in the proof of \Cref{thm:semi-low-prob}.

\end{proof}
\section{Average Latency Lower Bounds} \label{app:avg-lat-lb}

\subsection{Oblivious Designs} \label{app:avg-lat-obliv}

We devote this section to proving \Cref{thm:obliv-avg-lat-lb} as stated in \Cref{sec:semi-obliv-provable-sep}, restated below.

\thmlbinformal*

\begin{proof}

We begin by showing that the average latency of any fixed ORN design which guarantees throughput $r$  with respect to time-stationary demands must satisfy average latency at least $\Omega(L_{obl}(r,N))$.
This will be enough to proof \Cref{thm:obliv-avg-lat-lb}.
Note that every ORN design $\mathcal{R}$ within the support of $\mathscr{R}$ must guarantee throughput $r$ (with probability 1). 
Thus, each design $\mathcal{R}$ must satisfy average latency at least $\Omega(L_{obl}(r,N))$.
Use linearity of expectation to then show that the expected average latency of $\mathcal{R}\sim\mathscr{R}$ must be at least $\Omega(L_{obl}(r,N))$.

Fix any ORN connection schedule $\bm{\pi}$. We begin by stating the following linear program which, given $\bm{\pi}$ and average latency bound $L$, attempts to find a routing scheme which maximizes throughput, while keeping the average latency among all routing paths used, weighted by the fraction of flow traveling along each path, below the average latency bound $L$.

The proof will continue in the following way: we will first transform our LP into another LP which has fewer constraints. 
Then, we will take the Dual, to turn it into a minimization problem.
We will give a dual solution and upper bound its objective value, thus upper bounding guaranteed throughput subject to an average latency constraint.
Finally, we will rewrite this inequality into a lower bound on average latency, subject to a guaranteed throughput.

\begin{center}\fbox{\begin{minipage}{0.95\textwidth}
	\textbf{Primal LP}
	\vspace{-3mm}
	\[\begin{lparray}
    \mbox{maximize} & r \\
    \mbox{subject to} & \sum_{P\in\pths(a,b,t)} \mathcal{R}_{a,b,t}(P)  = r \hfill \forall a,b\in[N], t\in[T] \\
    & \sum_{a,t}\sum_{P\in\pths(a,\sigma(a),t) : e\in P} \mathcal{R}_{a,\sigma(a),t}(P)  \leq 1  \hfill  \qquad \forall e \in \ephys, \sigma \mbox{ permut on } [N] \\
    & \sum_{a,b,t}\sum_{P\in\pths(a,b,t)} \mathcal{R}_{a,b,t} \cdot lat(P) \leq rN^2T\cdot L \\
    & \mathcal{R}_{a,b,t}(P),r \geq 0 \hfill \forall a,b\in[N], t\in[T], P\in \pths(a,b,t)
  \end{lparray}
\] \end{minipage} }
\end{center}

Where we interpret $lat(P)$ as the latency of the path $P$, or the combined number of virtual and physical edges\footnote{We use $lat(P)$ here instead of $L(P)$ as in \Cref{sec:definitions} to denote the latency of path $P$ to prevent confusion between the latency of a path and the average latency bound $L$.}.
As in \cite{stoc-paper}, we replace the factorial number of constraints ranging over choices of $\sigma$ with a polynomial number of constraints which range over choices of $a,b\in[N]$. 
We do this by interpreting these constraints for a fixed edge $e$ as solving a maximum bipartite matching problem from $[N]$ to $[N]$. 
See Section 3.1 of \cite{stoc-paper} for a step-by-step explanation.

\begin{center}\fbox{\begin{minipage}{0.95\textwidth}
	\textbf{Primal LP}
	\vspace{-3mm}
	\[\begin{lparray}
    \mbox{maximize} & r \\
    \mbox{subject to} & \sum_{P\in\pths(a,b,t)} \mathcal{R}_{a,b,t}(P)  = r \hfill \forall a,b\in[N], t\in[T] \\
    & \sum_a\xi_{a,e} + \sum_b \eta_{b,e}  \leq 1  \hfill  \qquad \forall e \in \ephys \\
    & \sum_t\sum_{P\in\pths(a,\sigma(a),t) : e\in P} \mathcal{R}_{a,\sigma(a),t}(P) \leq \xi_{a,e} + \eta_{b,e} \hfill \forall a,b\in[N], e\in\ephys \\
    & \sum_{a,b,t}\sum_{P\in\pths(a,b,t)} \mathcal{R}_{a,b,t} \cdot lat(P) \leq r\cdot N^2TL \\
    & \mathcal{R}_{a,b,t}(P), \xi_{a,e}, \eta_{b,e},r \geq 0 \mbox{ }\hfill\mbox{ } \forall a,b\in[N], t\in[T], P\in \pths(a,b,t), e\in\ephys
  \end{lparray}
\] \end{minipage} }
\end{center}

\begin{center}\fbox{\begin{minipage}{0.95\textwidth}
	\textbf{Dual}
	\[\begin{lparray}
    \mbox{minimize} & \sum_{e} z_e \\
    \mbox{subject to} & \sum_{a,b,t} x_{a,b,t} - \gamma\cdot N^2TL \geq 1  \\
      & z_e \geq \sum_{b} y_{a,b,e} \hfill \text{ }\forall a\in[N],e\in\ephys \\
      & z_e \geq \sum_{a} y_{a,b,e} \hfill \text{ }\forall b\in[N],e\in\ephys \\
      & \sum_{e\in P} y_{a,b,e} + \gamma\cdot lat(P) \geq x_{a,b,t} \hfill \quad \qquad \forall a,b\in[N], \text{ }t\in[T],\text{ }P\in \pths(a,b,t) \\
      & y_{a,b,e},z_e,\gamma \geq 0 \hfill \text{ }\forall a,b\in[N], \text{ }e\in\ephys
  \end{lparray}
\]\end{minipage} }
\end{center}

We will first create a dual solution, aiming to fulfill all constraints except the first. 
We will then normalize the variables so that $\sum_{a,b,t} x_{a,b,t} - \gamma\cdot N^2TL$ is as close to 1 as possible.

We define this value $m^{+}_{\theta}(e,a)$ as follows.
\begin{align*}
m^{+}_{\theta}(e,a) & = \begin{cases}
  1 & \mbox{if } e \text{ can be reached from } a \text{ using at most } \theta \text{ physical hops }\\ 
  	& \hspace{6mm} \text{(including } e\text{) in } \leq kL \text{ timesteps} \\
  0 & \mbox{if } \text{ otherwise}
\end{cases}
\end{align*}
We define a similar value for edges which can reach node $b$.
\begin{align*}
m^{-}_{\theta}(e,b) & = \begin{cases}
  1 & \mbox{if } b \text{ can be reached from } e \text{ using at most } \theta \text{ physical hops }\\ 
  	& \hspace{6mm} \text{(including } e \text{) in } \leq kL \text{ timesteps} \\
  0 & \mbox{if } \text{ otherwise}
\end{cases}
\end{align*}

Set $\hat{y}_{a,b,e} = m^{+}_{\theta}(e,a) + m^{-}_{\theta}(e,b)$.
Also set $\hat{\gamma} = \frac{2\theta}{kL}$, and
set $\hat{x}_{a,b,t} = \min_{P\in \pths(a,b,t)} \{ \sum_{e\in P} \hat{y}_{a,b,e} + \hat{\gamma}\cdot lat(P)\}$. 
Note that by definition, $\hat{\gamma},\hat{x}$ and $\hat{y}$ variables satisfy the last set of dual constraints.

Consider some path $P$ which connects $a$ to $b$ starting at timestep $t$. 
If path $P$ has latency greater than $kL$, then 
\[\sum_{e\in P} \hat{y}_{a,b,e} + \hat{\gamma}\cdot lat(P) \geq \hat{\gamma}kL = 2\theta. \]

If on the other hand, path $P$ has latency no more than $kL$ but uses at least $\theta$ physical hops, then 
\[\sum_{e\in P} \hat{y}_{a,b,e} + \hat{\gamma}\cdot lat(P) \geq \sum_{e\in P} \hat{y}_{a,b,e} \geq 2\theta. \]

Finally, if path $P$ has latency no more than $kL$ and uses fewer than $\theta$ physical hops, then
\[ \sum_{e\in P} \hat{y}_{a,b,e} + \hat{\gamma}\cdot lat(P) \geq \sum_{e\in P} \hat{y}_{a,b,e} = 2|P\cap\ephys|. \]

We use the following lemma, restated from \Cref{sec:lower-bound}, to bound $\sum_{a,b,t}\hat{x}_{abt}$.

\lemcountinglem

\begin{align*}
	\sum_{a,b,t} \hat{x}_{a,b,t} & \geq \sum_{a,t}\sum_{b\neq a} \min \{2\theta, \min_{P\in \pths_{kL}(a,b,t)} \{2|P\cap\ephys|\}\} \\
	& \geq NT \left( 2\theta \left( N-2 {kL\choose \theta-1} \right) + 2{kL\choose\theta-1} \right) \\
	\implies \sum_{a,b,t} \hat{x}_{a,b,t} - \hat{\gamma}N^2TL & \geq NT \left( 2\theta \left( N-2 {kL\choose \theta-1} \right) + 2{kL\choose\theta-1} \right) - \frac{2\theta}{k}N^2T  \\
	 & = NT \left( \left(2\theta-\frac{2\theta}{k}\right) N - 4\theta {kL\choose \theta-1} + 4{kL\choose\theta-1} \right) = w
\end{align*}

Set this equal to $w$, our normalization term for each of the dual variables. Now set $\gamma = \frac{1}{w}\hat{\gamma}$, $y_{a,b,e} = \frac{1}{w}\hat{y}_{a,b,e}$ and $x_{a,b,t} = \frac{1}{w} \hat{x}_{a,b,t}$.

Finally, set $z_e = \max_{a,b} \{ \sum_{a}y_{a,b,e},\sum_{b}y_{a,b,e} \}$. 
Note that by construction, our dual solution satisfies all constraints.
To bounds throughput from above, we upper bound the sums $\sum_{a}y_{a,b,e}$ and $\sum_{b}y_{a,b,e}$, allowing us to upper bound the total sum of $z_e$ variables.

\begin{align*}
    \sum_{a}y_{a,b,e} & = \frac{1}{w} \sum_{a} \left(m^{+}_{\theta}(e,a) + m^{-}_{\theta}(e,b)\right)
      \leq \frac{1}{w}\left( \sum_{a} m^{+}_{\theta}(e,a) + N-1\right)
      \leq \frac{1}{w}\left(2{L\choose \theta-1} +N-1\right)
\end{align*}

where the last step is an application of the Counting Lemma. 
Similarly,

\begin{align*}
    \sum_{b}y_{a,b,e} & = \frac{1}{w} \sum_{b} \left(m^{+}_{\theta}(e,a) + m^{-}_{\theta}(e,b)\right)
     \leq \frac{1}{w}\left( N-1 + \sum_{b} m^{-}_{\theta}(e,b) \right)
     \leq \frac{1}{w}\left(N-1 + 2{L\choose \theta-1} \right)
\end{align*}

Recalling that $z_e = \max_{a,b} \{ \sum_{a}y_{a,b,e},\sum_{b}y_{a,b,e} \}$, and that the dual objective aims to minimize $\sum_e z_e$, we deduce that

\begin{align*}
	r \leq \sum_e z_e & \leq \frac{NT}{w}\left(N-1 + 2{kL\choose \theta-1} \right) \\
	& = \frac{N-1 + 2{L\choose \theta-1}}{\left(2\theta-\frac{2\theta}{k}\right) N - 4\theta {kL\choose \theta-1} + 4{kL\choose\theta-1}} \\
	& \leq \frac{ N-1 + 2\frac{(kL)!}{(\theta-1)!(kL-\theta+1)!} } { 2\theta \left( \left(\frac{k-1}{k}\right) N - 2 \frac{(kL)!}{(\theta-1)!(kL-\theta+1)!} \right)} \\
	& = \frac{k}{2\theta(k-1)} + \frac{4(kL)!} {2\theta(kL-\theta+1)! \left( \left(\frac{k-1}{k}\right)N(\theta-1)! - 2\frac{(kL)!}{(kL-\theta+1)!} \right)} \\
	& \leq \frac{k}{2\theta(k-1)} + \frac{4(kL)^{\theta-1}}{2\theta \left( \left(\frac{k-1}{k}\right)N(\theta-1)! - 2(kL)^{\theta-1} \right)}
\end{align*}

using the fact that $\frac{a!}{(a-b)!}\leq a^b$. 
At this point, we rearrange the inequality to isolate $L$.

\begin{align*}
	kL & \geq \left( \frac{\left(r-\frac{k}{2\theta(k-1)}\right) 2\theta \frac{k-1}{k}N (\theta-1)! } { 4\left(1+\theta\left(r-\frac{k}{2\theta(k-1)}\right)\right) } \right)^{\frac{1}{\theta-1}} \\
	L & \geq \frac{\theta-1}{ke} N^{\frac{1}{\theta-1}} \left( \frac{ \left(r-\frac{k}{2\theta(k-1)}\right) 2\theta \frac{k-1}{k} \sqrt{2\pi(\theta-1)} } { 4\left(1+\theta\left(r-\frac{k}{2\theta(k-1)}\right)\right) } \right)^{\frac{1}{\theta-1}}
\end{align*} 
using Stirling's approximation, in the form $(k!)^{\frac{1}{k}} \geq \frac{k}{e}\sqrt{2\pi k}^{\frac{1}{k}}$.

Recall that $h = \floor{\frac{1}{2r}}$ and $\eps_o = h+1 - \frac{1}{2r}$, as in the statement of the theorem above.
We also set the parameter $\theta=h+1$.
Note that our lower bound will always be positive when $\left(r-\frac{k}{2\theta(k-1)}\right) > 0$, which occurs as long as $\eps_o > \frac{h+1}{k}$. 
This tells us how to set the constant $k$: we may set $k = 2\frac{h+1}{\eps_o}$.
Since $\eps_o\in(0,1]$, this is always well-defined.
Substitute $h,\eps_o$ into the lower bound and simplify.

\begin{align*}
	L & \geq \frac{h}{ke} N^{\frac{1}{h}} \left( \frac{ \left(r-\frac{k}{2(h+1)(k-1)}\right) 2(h+1) \frac{k-1}{k} \sqrt{\pi h/2} } { 1+(h+1)\left(r-\frac{k}{2(h+1)(k-1)}\right) } \right)^{\frac{1}{h}} \\
	& = \frac{h}{ke} N^{\frac{1}{h}} \left( \sqrt{\frac{h\pi}{2}} \cdot \frac{ \left(\frac{1}{2(h+1-\eps_o)} - \frac{k}{2(h+1)(k-1)}\right) (h+1)\frac{k-1}{k} }{ 1+(h+1)\left(\frac{1}{2(h+1-\eps_o)} - \frac{k}{2(h+1)(k-1)}\right) } \right)^{\frac{1}{h}} \\
	& = \frac{h}{ke} N^{\frac{1}{h}} \left( \sqrt{\frac{h\pi}{2}} \cdot \frac{k-1}{k}\cdot \frac{k\eps_o - (h+1)}{3(h+1)(k-1) - 2\eps_o(k-1)} \right)^{\frac{1}{h}} \\
	& = \frac{h}{ke} (\eps_o N)^{\frac{1}{h}} \left( \sqrt{\frac{h\pi}{2}} \cdot \frac{k-1}{k}\cdot \frac{k-\frac{h+1}{\eps_o}}{3(h+1)(k-1) - 2\eps_o(k-1)} \right)^{\frac{1}{h}} \\
	& \geq \frac{h}{ke} (\eps_o N)^{\frac{1}{h}} \left( \sqrt{\frac{h\pi}{2}} \cdot\frac{1}{3(h+1)-2\eps_o} \right)^{\frac{1}{h}} \\
	& \geq \Omega\left( \frac{h}{k}(\eps_o N)^{\frac{1}{h}} \right) 
	 = \Omega\left( \eps_o (\eps_o N)^{\frac{1}{h}} \right) 
\end{align*}

because $\frac{1}{k} = \frac{\eps_o}{2(h+1)}$.
Finally, we realize that any lower bound on average latency subject to a guaranteed throughput constraint $r'<r$ is also a lower bound on average latency subject to guaranteed throughput $r$.
Let $r' = \frac{1}{2(h+1)}$. Then $r'<r$. 
Additionally, 
\[ \Omega\left( \eps_o(r') (\eps_o(r') N)^{\frac{1}{h(r')}} \right) =  \Omega\left(N^{\frac{1}{h+1}}\right) . \]

Therefore, combining these two lower bounds, we find that average latency of an ORN design which guarantees throughput $r$ must be at least
\[ \Omega\left( \eps_o(\eps_o N)^{\frac{1}{h}} + N^{\frac{1}{h+1}}\right) = \Omega\left( L_{obl}(r,N) \right) . \]

\end{proof}

\subsection{Semi-Oblivious Designs} \label{app:avg-lat-semi-obliv}

\begin{restatable}{theorem}{thmavglatsemilb}\label{thm:avg-lat-semi-lb}
Consider any constant $r \in (0,\frac{1}{2}].$ Let $\semih = \semih(r) = \floor{\frac{1}{r}-1}$ and $\eps = \eps(r) = \semih + 1 - (\frac{1}{r}-1)$,
	and let $L_{sem}(r,N)$ be the function 
	\[ L_{sem}(r,N) = \eps (\eps N)^{1/\semih} + N^{1/(\semih+1)} . \]
	Then for every $N > 1$ and every ORN design on $N$ nodes that achieves throughput $r$ with high probability, the average latency suffered by routing paths must be at least $\Omega(L_{sem}(r,N))$. 
\end{restatable}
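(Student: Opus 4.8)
The plan is to run the linear-programming duality argument behind \Cref{thm:obliv-avg-lat-lb} (see \Cref{app:avg-lat-obliv}), but with the parameter choice $\theta = \semih + 1 = \lfloor 1/r \rfloor$ and the matching-based dual weights used in the proof of \Cref{thm:tradeoff}.4 in \Cref{sec:lower-bound} --- in other words, to combine the average-latency LP relaxation (and its extra dual variable) with the ``achievable-throughput'' calibration rather than the ``fully-oblivious-guaranteed-throughput'' one. Concretely, for each permutation $\sigma$ one sets up the primal LP$_\sigma$ maximizing the throughput $r$ subject to routing $r$ units of flow between every $(a,\sigma(a))$ pair at every timestep, the edge-capacity constraints $\sum_{a,t}\sum_{P\in\pths(a,\sigma(a),t):\,e\in P}\mathcal{R}_\sigma(a,t,P)\le 1$, and an \emph{average-latency} constraint $\sum_{a,t,P}\mathcal{R}_\sigma(a,t,P)\,lat(P)\le r\,N\,T\,L$. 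Dualizing introduces, alongside the node/timestep variables $\alpha_{at\sigma}$ and edge variables $\beta_{\sigma e}\ge 0$ of the \Cref{sec:lower-bound} dual, one new variable $\gamma$ dual to the average-latency constraint, so that the path constraint becomes $\alpha_{at\sigma}\le\sum_{e\in P}\beta_{\sigma e}+\gamma\,lat(P)$ and $\bigl(\sum_e\beta_{\sigma e}\bigr)\big/\bigl(\sum_{at}\alpha_{at\sigma}-\gamma NTL\bigr)$ upper-bounds throughput. For the ``with high probability'' statement one averages this dual bound over a uniformly random $\sigma$ exactly as in \Cref{thm:tradeoff}.4; for the closely related statement about SORN designs that \emph{guarantee} throughput $r$ it is cleanest to work with the single LP over all triples $(\sigma,a,t)$ with one shared average-latency constraint --- which, as in \Cref{cor:semi-obliv-lb-maxlat}, encodes the SORN problem exactly --- whose dual has a single $\gamma$ and whose value is bounded by selecting a $\sigma$ on which $\sum_e\beta_{\sigma e}$ lies below its mean.

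For the dual solution I would hybridize the two templates. Fix a slack constant $k$ (to be chosen). Set $\beta_{\sigma e}=\semih+2$ if the edge $e$ lies on some path between a $\sigma$-matched pair that uses at most $\semih+1$ physical hops and has latency at most $kL$, and $\beta_{\sigma e}=1$ otherwise; set $\gamma=\tfrac{\semih+2}{kL}$; and set $\alpha_{at\sigma}=\min_{P\in\pths(a,\sigma(a),t)}\{\sum_{e\in P}\beta_{\sigma e}+\gamma\,lat(P)\}$, which is dual-feasible by construction. A three-way case split on any path $P$ from $a$ to $\sigma(a)$ --- if $lat(P)>kL$ then $\gamma\,lat(P)>\semih+2$; if $P$ uses at least $\semih+2$ physical hops then $\sum_{e\in P}\beta_{\sigma e}\ge\semih+2$; and if $P$ uses at most $\semih+1$ physical hops with $lat(P)\le kL$ then every edge of $P$ is ``marked'', so again $\sum_{e\in P}\beta_{\sigma e}\ge\semih+2$ --- shows $\alpha_{at\sigma}\ge\semih+2$, hence $\sum_{at}\alpha_{at\sigma}-\gamma NTL\ge NT(\semih+2)(1-\tfrac1k)$. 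For the numerator, the Counting Lemma (\Cref{lem:counting-lem}) together with the Vandermonde identity gives $\mathbb{E}_\sigma[\beta_{\sigma e}]\le 1+\tfrac{4(\semih+1)}{N}\binom{2kL}{\semih}$, exactly the estimate from the proof of \Cref{thm:tradeoff}.4 with $L$ replaced by $kL$. Combining, the throughput satisfies $r\bigl(1-C_d/N^d\bigr)\le\dfrac{1+\tfrac{4(\semih+1)}{N}\binom{2kL}{\semih}}{(\semih+2)(1-1/k)}$.

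It remains to choose $k$ and invert for $L$. Writing $r=\tfrac{1}{\semih+2-\eps}$, the right-hand side can reach $r$ only once $\binom{2kL}{\semih}=\Omega(N)$, and the ``leftover'' is a positive $\Theta(1)$ quantity precisely when $(\semih+2)(1-\tfrac1k)\,r>1$, i.e.\ when $k>\tfrac{\semih+2}{\eps}$; I would take $k=\tfrac{2(\semih+2)}{\eps}$, for which the leftover equals $\tfrac{\eps}{2(\semih+2-\eps)}=\Theta(\eps)$. Using $\binom{2kL}{\semih}\le(2kL)^{\semih}/\semih!$ and Stirling's approximation to isolate $L$ then gives $L\ge\Omega\bigl(\tfrac{\semih}{k}(\eps N)^{1/\semih}\bigr)=\Omega\bigl(\eps\,(\eps N)^{1/\semih}\bigr)$, since $\semih$ and $k$ are constants with $1/k=\Theta(\eps)$. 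Finally, the identical argument run at the smaller throughput $r'=\tfrac{2}{2\semih+5}<r$ --- for which $\semih(r')=\semih+1$ and $\eps(r')=\tfrac12$, and which any design achieving throughput $r$ also achieves --- yields $L\ge\Omega(N^{1/(\semih+1)})$; adding the two bounds gives $L\ge\Omega\bigl(\eps(\eps N)^{1/\semih}+N^{1/(\semih+1)}\bigr)=\Omega(L_{sem}(r,N))$.

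The main obstacle is twofold. First, the joint calibration of the weight $\semih+2$, the latency penalty $\gamma$, and the slack $k$: the weight must be large enough that \emph{every} short, low-latency path between matched pairs is charged, the penalty must be commensurate with $kL$ so that long paths are charged at least as much, yet both must be small enough that $\mathbb{E}_\sigma[\sum_e\beta_{\sigma e}]$ stays within a $1+o(1)$ factor of $NT$ and the denominator $NT(\semih+2)(1-1/k)$ remains a positive $\Theta(1)$ fraction of $NT$ --- and the forced choice $k=\Theta(1/\eps)$ is exactly what injects the extra factor of $\eps$ into $L_{sem}$ relative to the maximum-latency bound $L_{low}$ of \Cref{sec:lower-bound}. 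Second, and more delicate, for an ORN design only \emph{achieving} throughput $r$ with high probability the average latency is itself an expectation over $\sigma$, so one cannot simply freeze the latency bound $L$ in the per-$\sigma$ LPs the way one does for maximum latency; the clean way around this is to pass to the single-LP formulation (as in \Cref{cor:semi-obliv-lb-maxlat}) when the throughput bound holds with probability $1$, and for the high-probability version to run the per-$\sigma$ argument at a fixed threshold $L_0=\Theta(L_{avg}(\mathcal{R}))$ and absorb a Markov-type correction. Tracking the hidden constants through the Stirling inversion, and checking that the prime $p$ (with $N=p^{\semih}$) and the remaining side conditions can be satisfied simultaneously, is the leftover bookkeeping.
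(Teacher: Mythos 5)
Your proposal is correct and follows essentially the same route as the paper's proof in \Cref{app:avg-lat-semi-obliv}: the same average-latency constraint and extra dual variable $\gamma$, the same dual weights $\semih+2$ versus $1$ with $\gamma=(\semih+2)/(kL)$, the same Counting-Lemma estimate $1+\tfrac{4(\semih+1)}{N}\binom{2kL}{\semih}$, the same choice $k=2(\semih+2)/\eps$ with Stirling inversion, and the same final step of re-running the bound at a smaller throughput to pick up the $N^{1/(\semih+1)}$ term (the paper uses $r'=\tfrac{1}{\semih+2}$ where you use $r'=\tfrac{2}{2\semih+5}$, an immaterial difference). Your extra care about the distinction between per-$\sigma$ LPs averaged over $\sigma$ (high-probability ORN) and the single LP with one shared latency constraint (guaranteed-throughput SORN) is a reasonable clarification of a point the paper treats only briefly, but it does not change the argument.
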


\begin{proof}

We start by upper bounding throughput for a given average latency bound.
We begin with a set of $N!$ linear programs, one for each possible permutation $\sigma$ on the node set, to solve the following problem:
given an average latency bound $L$ and some reconfigurable network schedule, LP$_\sigma$ finds a set of routing paths to route $r$ flow between each $a,\sigma(a)$ pair, and maximizes the value $r$ for which this is possible.

\begin{center}\fbox{\begin{minipage}{0.95\textwidth}
	\textbf{Primal LP}
	\vspace{-3mm}
	\[\begin{lparray}
    \mbox{maximize} & r \\
    \mbox{subject to} & \sum_{P\in\pths(a,\sigma(a),t)} S_{\sigma}(a,t,P)  = r \hspace{23mm}\hfill \forall a\in[N], t\in[T], \sigma \mbox{ permut on } [N] \\
    & \sum_{a,t}\sum_{P\in\pths(a,\sigma(a),t) : e\in P} S_{\sigma}(a,t,P)  \leq 1  \hfill  \qquad \forall e \in \ephys, \sigma \mbox{ permut} \\
    & \sum_{a,t} \sum_{P\in\pths(a,\sigma(a),t) : e\in P} S_{\sigma}(a,t,P)\cdot lat(P) \leq r\cdot NTLN! \\
    &S_{\sigma}(a,t,P)  \geq 0 \hfill \forall a\in[N], t\in[T], \sigma \mbox{ permut}, P\in \pths(a,\sigma(a),t)
  \end{lparray}
\] \end{minipage} }
\end{center}

We then take the dual program of each LP to find the Dual program.

\begin{center}\fbox{\begin{minipage}{0.95\textwidth}
	\textbf{Dual$_\sigma$}
	\vspace{-3mm}
	\[\begin{lparray}
    \mbox{minimize} & \sum_{e,\sigma} \beta_{\sigma e} \\
    \mbox{subject to} & \sum_{a,t,\sigma}\alpha_{at\sigma} - \gamma NTLN! \geq 1 \\
    & \alpha_{at\sigma} \leq \sum_{e\in P} \beta_{\sigma e} + \gamma\cdot lat(P) \hspace{5mm}\hfill \forall a\in[N], t\in[T], \sigma \mbox{ permut}, P\in \pths(a,\sigma(a),t) \\
    & \gamma,\beta_{\sigma e} \geq 0 \hfill \forall e\in\ephys
  \end{lparray}
\]\end{minipage} }
\end{center}

For each permutation $\sigma$, we will define it's associated Dual variables. Then, we will analyze an upper bound on the objective value of the entire Dual program. 
We will also reframe the Dual program in the following way, which will be easier to work with. Note that $\left( \sum \beta_{\sigma e} \right) \big/\left( \sum \alpha_{at\sigma} - \gamma NTLN! \right)$ is still an upper bound on throughput.

\begin{center}\fbox{\begin{minipage}{0.60\textwidth}
	\vspace{-4mm}
	\[\begin{lparray}
    \mbox{minimize} & \left( \sum_{e,\sigma} \beta_{\sigma e} \right) \big/
        \left( \sum_{a,t,\sigma} \alpha_{at\sigma} - \gamma NTLN! \right) \\
    \mbox{subject to} & \alpha_{at\sigma} \leq \sum_{e\in P} \beta_{\sigma e} + \gamma\cdot lat(P) \hspace{5mm}\hfill \forall a,t,\sigma,P \\
    & \gamma,\beta_{\sigma e} \geq 0
  \end{lparray}
\]\end{minipage} }
\end{center}

Given parameter $\theta\in \mathbb{Z}_{\geq 1}$, set 
    \[
    \beta_{\sigma e} = 
    \begin{cases}
        \theta+1 & \mbox{if }e\mbox{ on a path of } \leq \theta  \mbox{ phys edges and } \leq kL \mbox{ latency } \\
        & \hspace{10mm}\mbox{between some }
         u\rightarrow\sigma(u) \mbox{ pair} \\
        1 & \mbox{otherwise}.
    \end{cases}
    \]

And set $\gamma = \frac{\theta+1}{kL}$. Then for any path $P$, $\sum_{e\in P} \beta_{\sigma e} + \gamma\cdot lat(P) \geq \theta+1$.
Therefore, we can always assign $\alpha_{at\sigma}=\theta+1$, giving us 
\[ \sum_{a,t,\sigma} \alpha_{at\sigma} - \gamma NTLN! = (\theta+1)NTN!\left(1 - \frac{1}{k}\right) \]

Finally, we upper bound $\E_\sigma[\sum_e \beta_{\sigma e}]$ to achieve an upper bound on $\sum_{e,\sigma} \beta_{\sigma e}$. 
We do this by upper bounding the expected value of the individual terms $\beta_{\sigma e}$.

\begin{align*}
    \mathbb{E}_\sigma [\beta_{\sigma e}] &= 1 + \theta \Pr[e \mbox{ is on a path of $\le \theta$ phys edges and $\leq kL$ lat. with $\sigma$-matched endpoints}]
\end{align*}

Applying the Counting Lemma (thus assuming $\theta-1\leq\frac{1}{3}L$), the above probability is at most
\begin{align*}
	\frac{1}{N}\sum_{m=0}^{\theta-1}2{kL\choose m}2{kL\choose \theta-1-m} \leq \frac{4}{N}{2kL\choose \theta-1}
\end{align*}

This is a sum over the number of physical hops $m$ taken before edge $e$. For each value $m$, we multiply the number of nodes $a$ which can reach edge $e$ using $m$ physical hops in latency no more than $kL$ by $\frac{1}{N}$ times the number of nodes $b$ reachable from $e$ using the remaining $(\theta-1-m)$ physical hops in latency no more than $kL$.
Then
\begin{align*}
    \mathbb{E}_{\sigma} [\beta_{\sigma e}] & \leq 1 + \frac{4 \theta}{N} \binom{2kL}{\theta-1} \\
    \implies \mathbb{E}_{\sigma}\left[ \sum_e\beta{\sigma e} \right] & \leq NT\left(1 + \frac{4 \theta}{N} \binom{2kL}{\theta-1}\right)
\end{align*}

Meaning we can bound the expected objective value of Dual$_\sigma$ throughput achievable under random permutation traffic.
\begin{align*}
    \E[\text{obj. value of Dual}] &\leq \E_\sigma\left[ \sum_{e} \beta_{\sigma e} \right] \big/ \left( NT(\theta+1)\left(\frac{k-1}{k} \right) \right)\\
    &\leq k\left( 1 + \frac{4 \theta}{N} \binom{2kL}{\theta-1} \right) \big/ \left( \theta +1 \right)(k-1) 
\end{align*}

Therefore, the guaranteed throughput rate of any SORN design must be
\[ r \leq k\left( 1 + \frac{4 \theta}{N} \binom{2kL}{\theta-1} \right) \big/ \left( \theta +1 \right)(k-1) \]

We then simplify and isolate $L$ to one side of the inequality, to find the following lower bound on maximum latency. 
The inequality $\frac{a!}{(a-b)!}\leq a^b$ and Stirling's approximation $(a!)^{\frac{1}{a}} \geq \frac{a}{e}\sqrt{2\pi a}^{\frac{1}{a}}$ prove useful during this simplification process.

\[ L \geq \frac{\theta-1}{2ke} N^{\frac{1}{\theta-1}} \left( \sqrt{2\pi(\theta-1)}\frac{ \frac{r(\theta+1)(k-1)}{k}-1 }{ 4\theta } \right)^{\frac{1}{\theta-1}} \]

To ensure the bound is positive, we need $\frac{r(\theta+1)(k-1)}{k}-1 > 0$, meaning that we need for $\theta > \frac{k}{(k-1)r}-1$, or approximately $\theta > \frac{1}{r}-1$. 
Setting $\theta$ as the smallest integer for which this holds, we find $\theta=\floor{\frac{1}{r}}$. 
Recall that $\semih = \semih(r) = \floor{\frac{1}{r}}-1$, therefore $\theta = \semih+1$. 
Additionally recall that $\eps = \eps(r) = \semih + 1 -(\frac{1}{r}-1)$.
We substitute $r = \frac{1}{\semih + 2 - \eps}$ and set $k = 2\frac{\semih+2}{\eps}$. Thus, the factor $\frac{1}{k}$ becomes $\frac{\eps}{2(g+2)}$, allowing the following lower bound on average latency to hold.

\begin{align*}
	L & \geq \frac{\semih}{2ke} (\eps N)^{1/\semih} \left( \sqrt{\frac{\pi \semih}{2}}\cdot \frac{k-\frac{\semih+2}{\eps}}{k(\semih+2-\eps)(\semih+1)} \right)^{1/\semih} \\
	\implies L & \geq \Omega\left( \eps(\varepsilon N)^{1/\semih}\right).
\end{align*}

Finally, we realize that any lower bound on average latency subject to a guaranteed throughput constraint $r'<r$ is also a lower bound on average latency subject to guaranteed throughput $r$.
Let $r' = \frac{1}{\semih+2}$. Then $r'<r$. 
Additionally, 
\[ \Omega\left( \eps(r') (\eps(r') N)^{\frac{1}{\semih(r')}} \right) =  \Omega\left(N^{\frac{1}{\semih+1}}\right) . \]

Therefore, combining these two lower bounds, we find that average latency of an ORN design which guarantees throughput $r$ must be at least
\[ \Omega\left( \eps(\eps N)^{\frac{1}{\semih}} + N^{\frac{1}{\semih+1}}\right) = \Omega\left( L_{sem}(r,N) \right) . \]

\end{proof}

\end{document}